\documentclass[acmsmall,screen,nonacm]{acmart}

\usepackage{mathtools}
\usepackage{cleveref}
\usepackage{xspace}
\usepackage{tcolorbox} 
\usepackage{ebproof} 
\usepackage[inference]{semantic}
\usepackage{makecell}
\usepackage{multirow}
\usepackage{enumitem}
\usepackage[frozencache]{minted}
\usepackage{stmaryrd}
\usepackage{bbm}

\setminted[haskell]{
  fontsize=\small,
escapeinside=!!}
\setminted[coq]{
  fontsize=\small,
escapeinside=!!}

\newcommand {\hask}{\mintinline{haskell}}
\newcommand {\rocq}{\mintinline{coq}}

\tcbset{colback=white,boxrule=0.3mm,arc=3mm}

\tcbuselibrary{minted, skins, breakable}

\tcbset{
  codebox/.style={
    enhanced,
    frame hidden,
    borderline north={0.4pt}{0pt}{black!60},
    borderline south={0.4pt}{0pt}{black!60},
    colback=white,
    sharp corners,
    boxrule=0pt,
    top=2pt,
    bottom=0pt,
    left=0pt,
    right=0pt,
    overlay={
      \node[anchor=north east, inner sep=2.2pt, font=\footnotesize]
        at (frame.north east) {\textsc{#1}};
    },
  }
}

\newtcblisting{haskellcode}{
  listing engine=minted,
  minted language=haskell,
  listing only,
  codebox={Haskell},
}

\newtcblisting{lrcode}{
  listing engine=minted,
  minted language=haskell,
  listing only,
  codebox={$\er$},
}

\newtcblisting{rocqcode}{
  listing engine=minted,
  minted language=coq,
  listing only,
  codebox={Rocq},
}

\newcommand \seq[1] {\overline{#1}} 
\DeclareMathOperator \dom {dom} 
\DeclareMathOperator \codom {codom} 
\newcommand \fnreplace[3] {#1[#2 \leftarrow #3]}

\newcommand \lava {Lava\xspace}
\newcommand \lh {Liquid Haskell\xspace}
\newcommand \stainless {Stainless\xspace}
\newcommand \fstar {F*\xspace}
\newcommand \flux {Flux\xspace}
\newcommand \pl[1] {#1}
\newcommand \coq {\pl{Rocq}\xspace}
\newcommand \haskell {\pl{Haskell}\xspace}

\newcommand \toolname \lava

\newcommand\appendixref[1]{\iflongversion \cref{#1}\else the appendix\fi}
\newcommand\appendixrefp[1]{\iflongversion (\cref{#1})\else (see the appendix)\fi}

\newcommand\projleft{\lfloor \! \! |}
\newcommand\projright{ | \!\! \rfloor }

\newcommand\eg{\textit{e.g.,}\xspace}
\newcommand \ie {\textit{i.e.,}\xspace}

\newcommand \keyw[1] {\textbf{\texttt{#1}}} 
\newcommand \var[1] {#1} 
\newcommand \gramSym[1] {#1} 
\newcommand \gramTitle[1] {\textbf{\textit{#1}}} 
\newcommand \gramopt[1] {(#1)^?} 
\newcommand \gramplus[1] {(#1)^+} 

\newcommand \sloc {LoC\xspace}
\newcommand \hsLines {4}
\newcommand \rocqLines {6.5}
\newcommand \blowup {14.6}

\newcommand\formatBenchmark[1]{\texttt{#1}\xspace}
\newcommand\SafeIndexing{\formatBenchmark{Overview}}
\newcommand\PeanoNats{\formatBenchmark{PeanoNats}}
\newcommand\RBinsToBins{\formatBenchmark{RBinsToBins}}
\newcommand\HigherOrder{\formatBenchmark{HigherOrder}}
\newcommand\Foldr{\formatBenchmark{FoldrUniversal}}
\newcommand\SoftFound{\formatBenchmark{SoftwareFnd}}
\newcommand\ClassLaws{\formatBenchmark{ClassLaws}}
\newcommand\Lists{\formatBenchmark{List}}

\newcommand\verifTimeSafeIndexing {1.17}
\newcommand\verifTimeSoftFound {0.04}

\newcommand\embed[1]{\ensuremath{[\![#1]\!]}}
\newcommand\embedleft{[\![}
\newcommand\embedright{]\!]}
\newcommand\tInt{\texttt{Int}}
\newcommand \Int {\keyw{Int}}

\newcommand \Bool {\keyw{Bool}}

\newcommand \Unit {\keyw{()}}
\newcommand \unit {\keyw{()}}

\newcommand \true {\keyw{True}}
\newcommand \false {\keyw{False}}
\newcommand \refT[3] {\{#1\mathord{:} #2 \mid #3\}} 
\newcommand \refTT[1] {\{#1\}} 
\newcommand \bind[2]{#1\mathord{:}#2}
\newcommand \coqbind[2]{#1{:}#2}
\newcommand \ap {\,} 
\newcommand \ite[3] {\keyw{if}\;#1\;\keyw{then}\;#2\;\keyw{else}\;#3}

\newcommand \lets[4] {\keyw{let}\; (#1 \coloneqq #2)_{#3}\;\keyw{in}\;#4}
\newcommand \lequiv {\Leftrightarrow} 
\newcommand \case[2] {#1 \mapsto #2} 

\newcommand \innersubst[2] {#1 \backslash #2}
\newcommand \subst[2]{[\innersubst{#1}{#2}]} 

\DeclareMathOperator \fv {fv} 

\newcommand \varSet {\mathcal{V}}

\newcommand \stlc {\lambda_\to}

\newcommand \lctyp {\gramSym A} 
\newcommand \lcTC {{\var{TC}}} 
\newcommand \lcdt[2] {\langle #1, #2 \rangle} 
\newcommand \lcexp {\gramSym e} 
\newcommand \lccase[2] {\keyw{case } #1 \mathrel{\keyw{of}} #2} 
\newcommand \lclam[2] {\lambda #1.\,#2} 
\newcommand \lclet[3] {\keyw{let } #1 \coloneqq #2 \keyw{ in } #3}
\newcommand \lcval {\gramSym v} 
\newcommand \lcConstr {\var C} 
\newcommand \lccst {\var c} 
\newcommand \lclit \lhlit 
\newcommand \lcop \lhop 
\newcommand \lcbop \lhbop 
\newcommand \lcplus \lhplus 
\newcommand \lcminus \lhminus 
\newcommand \lctimes \lhtimes 
\newcommand \lcdiv \lhdiv 
\newcommand \lcmodulo \lhmodulo 
\newcommand \lcnegb \lhnegb 
\newcommand \lceqb \lheqb 
\newcommand \lceqz \lheqz 
\newcommand \lcandb \lhandb 
\newcommand \lcorb \lhorb 
\newcommand \lcleq \lhleq 

\newcommand \er {\lambda_r} 

\newcommand \lhdecl {\gramSym D} 
\newcommand \lhdecls {{\seq\lhdecl}}
\newcommand \lhdata[2] {\keyw{data } #1 \coloneqq #2} 
\newcommand \lhTC {{\var{TC}}} 
\newcommand \lhConstr {\var C} 
\newcommand \lhdef[3] {\keyw{def } #1:: #2 \coloneqq #3} 
\newcommand \lhrefl[3] {\keyw{reflect def } #1:: #2 \coloneqq #3} 
\newcommand \lhdefrefl[3] {\gramopt{\keyw{reflect}} \keyw{ def } #1:: #2 \coloneqq #3} 

\newcommand \lhref {\gramSym R} 
\newcommand \lharr {\gramSym A} 
\NewDocumentCommand \arr{O{x} O{\lhref_x} O{\lhref}} {\bind{#1}{#2} \to #3} 
\NewDocumentCommand \arrs{O{x_i} O{\lhref_i} O{i \leq n} O{\lhref}} {(\bind{#1}{#2})_{#3} \to #4} 
\NewDocumentCommand \arrsSeq{O{x} O{\lhref_x} O{\lhref}} {\seq{(\bind{#1}{#2})} \to #3} 
\newcommand \lhbuiltin {\gramSym B} 
\newcommand \lhtyp {\gramSym B} 
\newcommand \lhsmptyp {\gramSym A} 
\newcommand \lhdt[2] {\langle #1, #2 \rangle} 
\newcommand \lhexp {\gramSym e } 
\newcommand \lhcase[2] {\keyw{case}~{#1} \mathrel{\keyw{of}} #2} 

\newcommand \dead {\keyw{unreachable}} 
\newcommand \lhtm {\gramSym r} 
\newcommand \lhval {\gramSym v} 
\newcommand \lhcst {\var c} 
\newcommand \lhlit {\var \ell} 
\newcommand \lhop {\odot} 
\newcommand \lhbop \lhop 
\newcommand \lhplus \cqplus 
\newcommand \lhminus \cqminus 
\newcommand \lhtimes \cqtimes 
\newcommand \lhdiv \cqdiv 
\newcommand \lhmodulo \cqmodulo 
\newcommand \lhnegb \cqnegb 
\newcommand \lheqb {\mathbin{=}} 
\newcommand \lheqz \cqeqz 
\newcommand \lhandb {\land} 
\newcommand \lhorb \cqorb 
\newcommand \lhleq \cqleq 
\newcommand \lhimp \Rightarrow 
\newcommand \lhiff \Leftrightarrow 

\newcommand \lebZRel{\relnmOp\leq}
\newcommand \ltbZRel{\relnmOp<}
\newcommand \addZRel{\relnmOp+}
\newcommand \lhleqZnm {\texttt{leqZ}}

\newcommand \lhqmark[2] {#1\mathbin{?}#2}
\newcommand \ann[1] {#1^e} 
\newcommand \lhinj[2] {\keyw{inj}(#1, #2)} 
\newcommand \lhsub[3] {\keyw{sub}(#1, #2, #3)} 
\newcommand \lhproj[1] {\keyw{proj}(#1)}
\newcommand \varar[2] {#1_{/(#2,\_)}} 
\newcommand \varartop[2] {#1_{/(#2,\topSymbol)}} 
\newcommand \vararloc[2] {#1_{/(#2,\locSymbol)}} 
\newcommand \vararlg[3] {#1_{/(#2,#3)}}
\newcommand \varlg[2] {{#1}_{/{#2}}}
\newcommand \varloc[1] {\varlg{#1}{\locSymbol}}
\newcommand \vartop[1] {\varlg{#1}{\topSymbol}}
\newcommand \varrec[1] {\varlg{#1}{\recSymbol}}
\newcommand \varargrammar {x_{\color{gray}{/(n,s)}}}
\newcommand \annlhqmark[3] {#1\mathbin{?}({#2}\mathbin{\mathtt{proves}}{#3})}
\newcommand \annlhqmarkgrammar {\lhexp \mathbin{?} {\color{gray}{(}} \lhtm\ {\color{gray}{\mathtt{proves}\ \lhtm)}}}

\newcommand \lhlet[4] {\keyw{let}\; {#1 \mathrel{::} #2} \coloneqq #3 \;\keyw{in}\; #4} 

\newcommand \guard {\mathcal{P}}
\NewDocumentCommand \eqguard{O{\sigma_\params} O{\sigma_\brSub} m} {\mathcal{P}(#1,#2,#3)}
\newcommand \eqguardName {\mathcal{P}}

\DeclareMathOperator \mkGraph {paths} 
\newcommand \projType[1] {\operatorname{unref}(#1)}
\DeclareMathOperator \noproj {noProj} 

\newcommand \ar {\operatorname{arity}} 
\newcommand \frApp {\texttt{RV}} 

\DeclareMathOperator \repl {repl}

\newcommand \cqTC {\var {TC}} 
\newcommand \cqConstr {\var C} 

\newcommand \Prop {\texttt{Prop}}

\newcommand\rocqFond[1]{\mathbbm{#1}} 
\newcommand \cqtyp {\gramSym \rocqFond{U}} 
\newcommand \cqref {\rocqFond{S}} 

\newcommand \cqBool {\texttt{bool}}
\newcommand \cqInt {\texttt{Z}}
\newcommand \cqUnit {\texttt{Unit}}

\newcommand \cqexp {\gramSym \rocqFond{e}} 
\newcommand \cqlop {\lhop} 
\newcommand \cqlet[3] {\texttt{let } #1 \coloneqq #2 \texttt{ in } #3}
\newcommand \cqmatch[2] {\texttt{match } #1 \texttt{ with } #2 \texttt{ end}}
\newcommand \cqcase[2] {#1 \Rightarrow #2} 
\newcommand \proj[1] {\projleft #1 \projright}
\newcommand \projTwo[1] {\texttt{proj2\_sig}(#1)}
\newcommand \cqlamuntyped[2] {\lambda #1,\,#2} 
\newcommand \cqrec[1] {\{|#1|\}}

\newcommand \cqrecuClass[2] {\cqrec{\relnmUClass := #1; \functrelUClass := #2}}
\newcommand \cqcst \lhcst
\newcommand \ltacOracle {\tac{ltac:(solver)}}
\newcommand \ltacTac[1] {\tac{ltac:(#1)}}
\newcommand \cqtrue {\keyw{true}}
\newcommand \cqfalse {\keyw{false}}
\newcommand \cqunit {\keyw{tt}}
\newcommand \cqlit \lhlit
\newcommand \cqtm {\gramSym \rocqFond{p}}
\newcommand \cqprop {\gramSym \rocqFond{Q}} 
\newcommand \IsTrue {\keyw{is\_true}} 

\newcommand \True {\keyw{True}} 
\newcommand \False {\keyw{False}} 
\newcommand \tac[1] {\texttt{#1}} 
\newcommand \ex[3] {\mathrel{\texttt{exist}}\ #1\ #2\ #3} 
\newcommand \oracle {\tac{oracle}\xspace} 
\newcommand \solver {\tac{solver}\xspace} 

\newcommand{\easyWit}{\tac{quick\_wff\_wit}\xspace}
\newcommand{\fastDone}{\tac{fast\_done}\xspace}
\newcommand{\solveZero}{\tac{fast\_solve}\xspace}
\newcommand{\idTac}{\tac{idtac}\xspace}

\newcommand{\quickSimpl}{\tac{quick\_simpl}\xspace}
\newcommand{\quickSolve}{\tac{quick\_solve}\xspace}

\newcommand{\cleanup}{\tac{cleanup}\xspace}
\newcommand{\liaPreprocessor}{\tac{lia\_preprocessor}\xspace}
\newcommand{\preprocessor}{\tac{preprocessor}\xspace}

\newcommand{\unpack}{\tac{unpack\_all}\xspace}
\newcommand{\axTerms}{\tac{axiomatize\_terms}\xspace}
\newcommand{\axIhAppls}{\tac{axiomatize\_ih\_specializations}\xspace}
\newcommand{\unifyVars}{\tac{unify\_vars}\xspace}
\newcommand{\specHyps}{\tac{specialize\_hyps}\xspace}
\newcommand{\invAxs}{\tac{invert\_axiomatizations}\xspace}
\newcommand{\instHyps}{\tac{instantiate\_hyps}\xspace}

\newcommand{\instExistGoal}{\tac{instExistGoal}\xspace}
\newcommand{\saturateCtx}{\tac{saturate\_context}\xspace}
\newcommand{\quickSimplCleanup}{\tac{quick\_cleanup}\xspace}
\newcommand{\shapeBased}{\tac{shape\_based}\xspace}
\newcommand{\fEqualInd}{\tac{f\_equal\_ind}\xspace}
\newcommand{\simplLoop}{\tac{solving\_loop}\xspace}
\newcommand{\finish}{\quickSolve}

\newcommand \exfalso {\tac{exfalso}}

\newcommand \despattern {\gramSym{des}} 

\newcommand \subCastSymbol {\hookleftarrow}
\newcommand \subCastExplicit[3] {#1 \subCastSymbol_{#3} #2} 
\newcommand \subPack{\texttt{subPack}}
\newcommand \subRef{\texttt{subRef}}

\newcommand \cqplus + 
\newcommand \cqminus - 
\newcommand \cqtimes \times 
\newcommand \cqdiv / 
\newcommand \cqmodulo \% 
\newcommand \cqnegb \neg 
\newcommand \cqnegp \neg 
\newcommand \cqeqb {\mathbin{==?}} 
\newcommand \cqeqz {\mathbin{==?}} 
\newcommand \cqandb \land 
\newcommand \cqandp \land 
\newcommand \cqorb \lor 
\newcommand \cqleq \leq 
\newcommand \cqeqp {==} 

\newcommand \sub {\preceq} 
\newcommand \trule[1] {\textsc{#1}\xspace} 

\newcommand \typingOutput {\rightsquigarrow} 
\newcommand \hwfj[3] {#1 \vdash #2 \typingOutput #3} 

\newcommand \hdecl[3] {#1 \vdash #2 \typingOutput #3}
\newcommand \hsynj[4] {#1 \vdash #2 \Rightarrow #3 \typingOutput #4}

\newcommand \hchkj[4] {#1 \vdash #2 \Leftarrow #3 \typingOutput #4}
\newcommand \hsubj[3] {#1 \vdash #2 \sub #3} 
\newcommand \csttype[1] {\operatorname{ty}(#1)} 
\newcommand \hsmpj[4] {#1 \vdash_{\text{SIMPLE}} #2 \mathbin{:} #3 \typingOutput #4}

\newcommand \lctsname {\mathcal {LC}} 
\newcommand \lctyj[3] {#1 \vdash_{\lctsname} #2: #3} 

\DeclareMathOperator \trLhCqName {RtoR} 
\NewDocumentCommand \trLhCq{O{} m} {\trLhCqName_{#1}(#2)} 
\DeclareMathOperator \utransName {RtoU}
\newcommand \utrans[1] {\utransName(#1)} 

\DeclareMathOperator \altInGraphRel {pathInd}
\DeclareMathOperator \hypFromAmName {Hyp}
\newcommand \hypFromAm[1] {\hypFromAmName(#1)}
\newcommand \hypQ {\mathcal Q}
\NewDocumentCommand \hypFromAmQ {O{\hypQ} m} {\hypFromAmName_{#1}(#2)}
\NewDocumentCommand \hypTerm {m O{\typEnv}} {\operatorname{Hyp'}(#2;#1)}

\DeclareMathOperator \trLhdeclCqName {DtoR} 
\newcommand \trLhdeclCq[1] {\trLhdeclCqName(#1)} 
\newcommand \params {{xs}}

\newcommand \branch {p}
\newcommand \brSub {\branch}

\newcommand\getRel{\texttt{fun\_to\_rel}}

\DeclareMathOperator \exprToTacName {EtoTac}

\newcommand \trLhexpCq[1] {\exprToTacName(#1)} 
\newcommand \trLhtmCq[1] {\trLhCqName(#1)} 

\DeclareMathOperator \typesToRefName {TtoR}
\NewDocumentCommand \oldtrLhrefCq{m O{\typEnv}} {\typesToRefName(#2; #1)}
\newcommand \trLhrefCq[1] {\typesToRefName(#1)}
\newcommand \trLhrefCqEmp[1] {\typesToRefName(#1;\emptyset)}

\newcommand \trBuiltin[1] {\operatorname{Prim}(#1)} 
\newcommand \trLit[1] {\operatorname{Prim}(#1)} 
\newcommand \utrLit[1] {\trLit{#1}} 
\NewDocumentCommand \oldutrLhtmCq{m O{\typEnv}} {\utrans{#2;#1}} 
\newcommand \utrLhtmCqName \utransName
\newcommand \utrLhtmCq[1] {\utransName(#1)} 

\DeclareMathOperator \typesToUnRefName {TtoU}
\newcommand \utrLhrefCq[1] {\typesToUnRefName(#1)} 
\DeclareMathOperator \refToPropName {RtoP}

\NewDocumentCommand \oldutrLhtmCqprop{m O{\typEnv}} {\refToPropName(#2; #1)} 
\newcommand \utrLhtmCqprop[1] {\refToPropName(#1)}
\NewDocumentCommand \trLhrefCqclass{m O{\typEnv}} {\operatorname{pack}(#2;#1)}

\newcommand \refnm[1] {#1} 
\newcommand \unrefnm[1] {#1_u} 
\newcommand \relnm[1] {#1\texttt{\_rel}} 
\newcommand \relnmOp[1] {#1_{\texttt{rel}}} 

\newcommand \ArgListT {\texttt{ArgListT}\xspace} 
\newcommand \ArgList {\texttt{ArgList}\xspace} 
\newcommand \UArgListT {\texttt{UArgListT}\xspace} 
\newcommand \UArgList {\texttt{UArgList}\xspace} 
\newcommand \projectsArgListT {\texttt{projectsArgListT}\xspace} 
\newcommand \prArgList {\texttt{prArgList}\xspace}

\newcommand \flattenP {\texttt{flattenP}\xspace}

\newcommand \fClass {\texttt{Pack}} 
\newcommand \fuClass {\texttt{uPack}} 
\newcommand \gProj {\texttt{Projection}} 
\newcommand \gSub {\texttt{Subsumption}} 
\newcommand \gSubVal {\texttt{sub\_res}} 
\newcommand \fInst[1] {#1\texttt{\_pack}} 
\newcommand \fuInst[1] {#1\texttt{\_upack}} 
\newcommand \relnmClass {\texttt{rel}} 
\newcommand \relnmUClass {\texttt{rel\_u}} 
\newcommand \relnmInst[1] {#1.(\relnmClass)} 
\newcommand \relnmUInst[1] {#1.(\relnmUClass)} 
\newcommand \getPackRel[1] {\texttt{getPackRel}\ap #1}
\newcommand \getUPackRel[1] {\texttt{getUPackRel}\ap #1}
\newcommand \defnmClass {\texttt{def}} 
\newcommand \defnmInst[1] {#1.(\defnmClass)} 

\newcommand \unrefmk[1] {#1_{rel}\_mk}
\newcommand \wffnm[1] {\operatorname{wf}_{#1}} 
\newcommand \unrefproof[1]{#1\_\relnm{#1}} 
\newcommand \unrefproofClass{\texttt{f\_f\_rel}} 

\newcommand \functrel[1]{#1\texttt{\_rel\_funct}} 
\newcommand \functrelClass{\texttt{funct}} 
\newcommand \functrelUClass{\texttt{funct\_u}} 
\newcommand \functrelInst[1]{\fInst{#1}.(\functrelClass)} 
\newcommand \uClassToClass[1]{\texttt{projPack}(#1)}

\newcommand \wit[1] {\hat{#1}} 
\newcommand \ihasm[1] {\texttt{IH}{#1}} 

\newcommand \trLhLc[1] {\lfloor {#1} \rfloor_{H\lambda}}

\newcommand \trCqLc[1] {\lfloor {#1} \rfloor_{Q\lambda}}

\NewDocumentCommand \lhtpint{m O{\rho} O{\lhdecls}} {\llbracket #1 \rrbracket_{#2}^{#3}}
\NewDocumentCommand \cqtpint{m O{\rho} O{\lhdecls}} {\llparenthesis #1 \rrparenthesis_{#2}^{#3}}

\newcommand \rew[1] {\to_{#1}} 
\newcommand \rewn[1] {\rew{#1}^*} 

\NewDocumentCommand \ssred{O{\seq\lhdecl} m m} {#1 \vdash #2 \rew{\er} #3}
\NewDocumentCommand \ssredn{O{\seq\lhdecl} m m} {#1 \vdash #2 \rewn{\er} #3}

\newcommand \stlcredName {\lambda} 
\NewDocumentCommand \stlcred{O{\lhdecls} m m} {#1 \vdash #2 \rew\stlcredName #3}
\NewDocumentCommand \stlcredn{O{\lhdecls} m m} {#1 \vdash #2 \rewn\stlcredName #3}

\newcommand\typEnv{\Gamma} 
\newcommand\topSymbol{{G}}
\newcommand\locSymbol{{L}}
\newcommand\recSymbol{{Y}}
\DeclareMathOperator \caseify {caseify}
\DeclareMathOperator \caseifyaux {caseify\_aux}

\newcommand\SVar{\trule{S-Var}}
\newcommand\SVarLoc{\trule{S-VarL}}
\newcommand\SData{\trule{S-Data}}
\newcommand\SLit{\trule{S-Lit}}

\newcommand\SNeg{\trule{S-Neg}}
\newcommand\SEq{\trule{S-Eq}}
\newcommand\SBin{\trule{S-Bin}}
\newcommand\SApp{\trule{S-App}}

\newcommand\CLet{\trule{C-Let}}

\newcommand\CSyn{\trule{C-Syn}}

\newcommand\CCase{\trule{C-Case}}
\newcommand\CHint{\trule{C-Hint}}

\newcommand\TSRef{\trule{S-Base}}
\newcommand\TSRefAx{\trule{S-Base-Ax}}
\newcommand\TSArr{\trule{S-Arr}}

\newcommand\WFTFun{\trule{WF-TFun}}
\newcommand\WFTBase{\trule{WF-TRef}}

\newcommand\WFDDef{\trule{WF-DDef}}

\newcommand\WFDTC{\trule{WF-DTC}}

\newcommand \lambdaert {\lambda_{\text{ert}}} 

\usepackage{thmtools,thm-restate}

\crefname{section}{\S}{\S}

\newif\iflongversion
\longversiontrue

\title{Lava: Foundations for Liquid Haskell via Translation to Rocq}

\author{Colin Rothgang}
\affiliation{%
  \institution{IMDEA Software Institute}
  \city{Madrid}
  \country{Spain}
}

\author{Loïc Peyrot}
\orcid{0000-0002-1398-7460}
\affiliation{%
  \institution{IMDEA Software Institute}
  \city{Madrid}
  \country{Spain}
}

\author{Yannick Forster}
\orcid{0000-0002-8676-9819}
\affiliation{%
  \institution{Inria}
  \city{Paris}
  \country{France}
}

\author{Michael Greenberg}
\affiliation{%
  \institution{Stevens Institute of Technology}
  \country{USA}
}

\author{Niki Vazou}
\orcid{0000-0003-0732-5476}
\affiliation{%
  \institution{IMDEA Software Institute}
  \city{Madrid}
  \country{Spain}
}

\begin{document}
\begin{abstract}
  Liquid Haskell is an extension of the Glasgow Haskell compiler (GHC) that adds support
  for refinement types: types restricted by logical predicates.
  Specifications are automatically checked by an SMT solver, thus offering
  lightweight verification directly within the language.
  However, the necessarily large trusted codebase behind Liquid Haskell
  (from GHC to the SMT solver to the refinement typechecker itself)
  raises the question of trust in the verification and the necessity for
  a formalized semantics of Liquid Haskell linked to its implementation.
  Grounding Liquid Haskell in theory is complicated by the interaction between two semantic
  levels in the language: a computational one and a logical one (for refinement predicates).
  While both are well understood separately, their interaction is subtle and
  has only been partially described.

  We give a semantics to Liquid Haskell by translation
  to Rocq. Both the computational and logical parts are translated to
  the same language, with a well-understood semantics; our translation makes the discrepancy between the two semantic levels explicit.
  The translation is implemented in Haskell and comes with a Rocq library
  that includes powerful proof automation standing in for the SMT solver.
  Thus, on the practical level, our tool Lava serves as a certificate checker
  for the supported subset of Liquid Haskell.
\end{abstract}

\maketitle

\section{Introduction}\label{sec:intro}
Refinement types checkers, such as \lh~\cite{vazou14},
\stainless~\cite{hamza19}, \fstar~\cite{fstar},
and \flux~\cite{flux},
rely on SMT solvers to reason about program properties.
They work on top of existing, industrial programming languages,
allowing programmers to work in familiar environments
while enjoying
a high degree of automation in program verification.
However, practical refinement types are disconnected from their theoretical foundations.
Even though program behaviors and SMT logics are very well understood separately,
the connection between them has not been made explicit.

In refinement types,
SMT validity checking is used to decide subtyping.
To do so, a connection between program behaviors and SMT logics
needs to be made.
As an example, consider an expression $e$ with a type that ensures it is an integer greater than $5$,
\ie $e:\refT {x} \tInt {5 < x}$.
Since  $5 < x$ implies $0 < x$, for all integers $x$,
$e$ can also be used at type $\refT {x} \tInt {0 < x}$.
This reasoning is captured by the following two rules:
\[
\inference{
    \Gamma \vdash e : \tau_l \quad
    \Gamma \vdash \tau_l \sub \tau_r
  }{
    \Gamma \vdash e : \tau_r
  }[\trule{T-Sub}]
\qquad\qquad
  \inference{
    \text{SMT-Valid }(\embed{\Gamma} \Rightarrow \embed{\lhtm_l} \Rightarrow \embed{\lhtm_r})
  }{
    \hsubj \typEnv {\refT {x} \lhtyp {\lhtm_l}} {\refT {x} \lhtyp {\lhtm_r}}
  }[\TSRef]
\]

The left rule, \trule{T-Sub}, allows for \emph{implicit subtyping}:
an expression $e$ of type $\tau_l$ can be given the type $\tau_r$, if $\tau_l$ is a subtype of $\tau_r$.
To decide subtyping, refinement type systems reduce it to SMT validity checking using
a set of subtyping rules, that ultimately boil down to
the rule \TSRef on the right.
This rule states that a refinement type $\{x:\lhtyp \mid \lhtm_l\}$ is a subtype of $\{x:\lhtyp \mid \lhtm_r\}$
if the logical implication $\lhtm_l \Rightarrow \lhtm_r$ is valid in the SMT logic
and under the assumptions in the type environment $\Gamma$.

The critical part is: program elements are translated to the SMT logic via the embedding function $\embed{\cdot}$.
In cases where both program elements and logics are the same (\eg integers with arithmetic operations),
the embedding function is trivial.
For example, when relating $\refT {x} \tInt {5 < x}$ and $\refT {x} \tInt {0 < x}$.
However, in general,
defining the embedding function is intricate,
and the details of its definition have not been formally accounted for.

Formalization of refinement type systems
have provided partial descriptions
of only the critical parts  of the embedding function
\cite{vazou14,hamza19,fstar,flux}.
Accepting that it is impractical to formally specify
the actual embeddings of real refinement type checker implementations,
we take a different approach to establish solid foundations for SMT-based verification:
\emph{translation to foundational type systems}.

We present \toolname, a translation from the refinement type system \lh to \coq.
\toolname connects the two worlds of programs and SMT logics by modeling
both in the formal system of \coq and
explicitly embedding program elements into logical ones.
Thus, our translation
offers both $a)$ deeper insights into the foundations of refinement type checking and
$b)$ an automated certificate checker for \lh verified programs.
Our automated checker is able to automatically verify sizable examples,
but crucially allows the user to manually intervene whenever automation fails. 

\subsection{Key Challenge of the Translation: Coexistence of Refined and Unrefined Semantics}\label{sec:refUnrefSemantics}

The main difficulty in our translation
is that \lh has multiple semantic levels that coexist in programs and serve different purposes.
Consider for example the following snippet
in which a binder $x$, which is specified via a refinement type to be greater than 5,
is used to divide $42$:
\begin{haskellcode}
let x :: {x:Int | 5 < x} = 8 in div 42 x -- div :: Int !$\to$! {v:Int | 0 < v} !$\to$! Int
\end{haskellcode}
Refinement types restrict the domain of a type, here \hask{Int}, with a logical predicate.
In \hask{div} and below, we write \hask{Int} for the trivially refined \hask!{v:Int | True}!.

In this snippet, the binder $x$ has three different semantics.
When the snippet is seen as a \haskell program,
the binder $x$ is a \textit{a)} \emph{program integer} that can be evaluated to $8$.
But \lh also attaches to $x$ the information that it is greater than $5$.
Thus, in type theoretical terms,
$x$ is a \textit{b)} \emph{dependent pair}
storing the evidence that the integer attached to $x$ is greater than $5$.
Finally, at the call site of the division,
\lh will ensure that the arguments do not violate the specification of the function.
In that case, since the second argument of division is specified to be greater than $0$,
it will create the verification condition
$\forall x, 5 < x \Rightarrow 0 < x$ and check it using an SMT solver.
When passed to the SMT solver,
the binder $x$ is a \textit{c)} \emph{logical integer} that admits arithmetic operations.

Two parallel worlds coexist:
\begin{itemize}
    \item the \emph{refined world} of the \lh type checker, where $x$ is the \textit{b)} dependent pair and
    \item the \emph{unrefined world} where $x$ is a \textit{a)} program integer or \textit{c)} logical integer.
\end{itemize}
These two semantics are crucial for how refinement types work in \lh.
The \lh type checker uses \emph{refined definitions} in the program to generate the verification conditions (VCs),
while the SMT solver uses \emph{unrefined definitions} in the refinements to solve the VCs.

\emph{
The key challenge of the translation is to account for these two
intertwined semantics, while establishing a correspondence between the refined
and unrefined translations of the same object.
}

\subsection{Refined and Unrefined Semantics in Functions}%
\label{sec:intro-functions}

The coexistence of refined and unrefined semantics becomes even more delicate
once we move from simple values to functions.
Functions in \lh simultaneously inhabit both worlds:
they are defined as Haskell programs with refinement type specifications,
but are also used in the logical layer to specify refinements.

As an example of this dual nature, consider a function that \hask{get}s an element from an integer list.

\begin{haskellcode}
{-@ get :: xs:IList !$\to$! i:{Int | 0 !$\leq$! i !$\land$! i < llen xs} !$\to$! Int @-}
get :: IList !$\to$! Int !$\to$! Int
get (Cons x xs') i = if i == 0 then x else get xs' (i-1)
\end{haskellcode}
\noindent The function \hask{get xs i} returns the \hask{i}-th element of the list \hask{xs}.
The \haskell definition is partial, so a refinement type specification (in \lh's special comment form \hask?{-@ ... @-}?)
ensures that the index argument is always in-bounds.
This is the main use-case for refinement types:
they guarantee the lack of runtime errors without the need to defensively handle
out-of-bounds cases.

\haskell definitions are on the \emph{refined} semantic level:
each call \hask{get xs i} generates a VC ensuring that the index \hask{i} is in-bounds.
Type refinements however are on the \emph{unrefined} level:
in specifications, \hask{get} is treated as an SMT logical function, thus, it cannot be partial.

As a concrete example we prove that getting the \hask{i}-th element of a list \hask{xs} is the same as getting the \hask{(i+1)}-th element of \hask{(Cons x xs)}:
\begin{haskellcode}
{-@ thm_get :: xs:IList !$\to$! x:Int !$\to$! i:{Int | 0 !$\leq$! i && i < llen xs}
	          !$\to$! { get xs i == get (Cons x xs) (i+1) } @-}
thm_get :: IList !$\to$! Int !$\to$! Int !$\to$! ()
thm_get _ _ _ = ()
\end{haskellcode}
\noindent
\lh accepts this definition by
 unfolding the right-hand side \hask{get (Cons x xs) (i+1)}
into \hask{get xs i}
and then generating a VC that equates the two.
Equality is checked by the SMT solver,
thus \hask{get} should be encoded
as a logical function
that \emph{models} the behavior of the program function \hask{get}.
The full definition of this logical function in \coq is given in \cref{sec:unref-graph}.

The dual nature of functions in \lh is challenging to capture in a \coq translation,
that has to define the refined \emph{and} unrefined versions of \hask{get} and then has to relate them properly.

\subsection{The Design of \lava}

We designed \lava, a translation from well-typed \lh programs to \coq that
respects the semantics of \lh by carefully accounting for
both the coexistence of refined and unrefined semantics in \lh programs
as well as implicit subtyping.

\paragraph{Modeling Refined and Unrefined Semantics}
To translate functions, like \texttt{get},
\lava generates two objects.
One object models the \emph{refined semantics} of the function,
as a dependently typed \coq function that includes the necessary proofs
to ensure that the function respects its refinement type specification.
The other object models the \emph{unrefined semantics} of the function,
as a \coq relation that captures the graph of the function.
Finally, \lava also generates and proves lemmas that relate the two semantics,
ensuring that the refined function behaves according to its unrefined graph relation.
To support higher-order functions,
a refined higher-order function needs to take as input not only the refined variant
of its input, but also the unrefined variant, a proof that the unrefined variant is a functional relation,
and a proof connecting the two.
We call this ``function plus graph plus two proofs'' a \emph{Pack}; packing these four components together
simplifies function subtyping.

\paragraph{Modeling Implicit Subtyping with Explicit Coercions}
When translating \lh programs, \lava systematically uses casts
to model implicit subtyping, \ie the typing rule \trule{T-Sub}
which lies at the core of many refinement type systems.
\lava translates program definitions
based on the type checking derivation tree and
translates implicit subtyping to explicit type coercions.
Those type coercions generate proof obligations.
Along with the translation itself, we also provide a custom automation tactic
that we call \solver designed to discharge these obligations.
The \solver tactic is able to automatically discharge the proof obligations
generated by our case studies (\cref{sec:eval}),
closely mirroring the level of automation provided by \lh's SMT backend.

\paragraph{Implementation}
We implemented \lava, a translation from programs verified in \lh to \coq.
The implementation consists of three main components.
First, the \haskell code implementing the \emph{“skeleton”} translation of terms to \coq.
This translation is the main subject of the paper and it is presented over
a core calculus for refinement types $\er$ that is a subset of \lh.
In reality \lava translates \lh terms to their refined and unrefined \coq counterparts
and translates implicit subtyping to explicit coercions.
Second, a \coq library with the definitions necessary to support our translation.
Third, a collection of \coq (Ltac) automation tactics that
prove obligations due to refinements (in particular the \solver tactic)
or prove structural lemmas, notably those relating unrefined and refined terms.

\paragraph{Contributions}
We contribute the design, metatheory, and implementation of \lava:
\begin{enumerate}[leftmargin=*]
  \item \textbf{A foundational semantics for monomorphic Liquid Haskell} (\cref{sec:translation}):
    The translation of verified \lh programs to \coq formalizes a semantics of \lh
    and crystallizes the need for the two intertwined semantics.
    Our design supports monomorphic programs, with
    user defined data types and higher-order functions.
    The intricacy of the semantics lies in making all fine-grained details
    work together as a whole and required significant fine-tuning.

  \item \textbf{A correctness theorem for the translation} (\cref{sec:properties}):
    We prove that the translation preserves the semantics of declared functions and
    types (\cref{l:operationalmain,l:same-model}) and that inconsistencies are
    preserved (\cref{lem:consistency}), so \lava is a certificate checker for \lh.
    The meta-theory sheds light on how to design such a translation for similar sources and targets.

    \item \textbf{An implementation and evaluation of \toolname} (\cref{sec:impl,sec:eval}):
    Our tool consists of around \hsLines K \sloc of \haskell code implementing
    the skeleton translation, and around \rocqLines K \sloc for the automation tactics and \coq library.
    The implementation of the tool and automation tactics is a significant engineering effort.
    We evaluate the translation on eight representative \lh programs.
\end{enumerate}
While our work focuses on \lh---and, more marginally, on \coq---we believe it
reveals fundamental aspects of the semantics of refinement type systems.
It offers a basis for the translation of other refinement type frameworks to proof assistants.

\section{Overview of The Translation}\label{sec:overview}
We begin with an informal overview of \lava, presenting the key features of the translation.
\lava works in three steps.
First, parsing the \lh program into our core calculus $\er$, a step not described in this paper.
Second, typechecking the $\er$ program to add explicit casts annotations.
Third, translating the elaborated terms into \coq.
The main focus of this overview is the third step,
as we go we detail the three different sorts of casts added in the second step.

Our explanations are centered around the “safe” indexing function \hask{get} from the introduction,
defined on a type of integer lists arbitrarily restricted to elements greater than 5.
We start by translating the element type and the list datatype,
then translate \hask{get} with a refined specification and to an unrefined graph
relation, followed by the theorem \hask{thm_get}.
We finish with an example of a higher-order function (\hask{apply}) to show how higher-order values are handled.

\subsection{Translation of Refinement Types}
\label{subsec:refinement}

\lh is a refinement type system for subset types built atop \haskell and it adopts \haskell's underlying norms.
Refinement types (\textit{aka} subset types) $\refT x \lhtyp {\lhtm}$
denote the set of elements $e$ of type $\lhtyp$ such that $\lhtm \subst x e$ holds.
\coq also supports subset types, using the same notation to denote
a dependent pair of an element of $e \mathord{:} \lhtyp$ \emph{and a proof} of $\lhtm \subst x e$.

Our list indexing example depends on integers.
\lh comes with a builtin integer type, \hask{Int},
with over- and underflow left as undefined behavior.
We translate these underspecified machine integers to
\coq's unbounded integer type \rocq{Z}.
We translate the \lh type of integers greater than five \hask{{x:Int | 5 < x}} to \coq (\cref{def:trtypes}) as
\rocq?{x:Z | !$\ltbZRel$! 5 x true}?.

Refinement types translate to \coq's subset types,
so refinements translate to \coq's propositions, in $\Prop$.
Here, \rocq{5 < x} turns into a proposition about $\ltbZRel$, the three-way relation modeling \rocq{<} on integers: \rocq{(5,x,true)} $\in$ $\ltbZRel$ if and only if \rocq{5 < x}.
In the translation of refinements, we represent function calls (which includes
integer operations) with inductive relations, as detailed in \cref{sec:RToProp}.

In \lh, a value of type \hask?{x:Int | 5 < x}? is instantiated by a \haskell integer expression,
as verification of the refinement is external.
In \coq, given \rocq{x:Z} and \rocq{p_x: !$\ltbZRel$! 5 x true},
we build a value of the type \rocq?{x:Z | !$\ltbZRel$! 5 x true}? as
\rocq{exist (!$\forall$!(x:Z), !$\ltbZRel$! 5 x) true x p_x},
where \rocq{exist} is the constructor for dependent pairs,
requiring both an integer expression~\rocq{x} and a proof~\rocq{p_x}.

On dependent pairs, the conversion between the unrefined and refined semantics of a value is
straightforward by construction or destruction of dependent pairs (see \cref{subsec:reffunc}).
Instead, for each inductive datatype and function,
the translation must generate an unrefined and a refined version.

\subsection{Translation of Refined Inductive Datatypes}
\label{subsec:intlists}

Let us move on to our inductive datatype example: lists of integers greater than five.
\lh programs refine \haskell's inductive datatypes
with predicates on both the argument and return types of the constructors.
\lava supports first-order monomorphic inductive datatypes.
\begin{haskellcode}
{-@ data IList where
  Nil :: IList
  Cons_gt5 :: x:{Int | 5 < x} !$\to$! IList !$\to$! IList @-}
data IList where Nil :: IList ; Cons_gt5 :: Int !$\to$! IList !$\to$! IList
\end{haskellcode}
\noindent
Mapping unrefined definitions to \coq amounts to a change in syntax:
\begin{rocqcode}
Inductive IList_u := Nil_u : IList_u | Cons_gt5_u : Z !$\to$! IList_u !$\to$! IList_u.
\end{rocqcode}
\noindent The unrefined type \rocq{IList_u} mimics the underlying \haskell type.
It is both the type assigned to lists inside type refinements
(in the logical, unrefined world) and the basis of the refined type \rocq{IList}.

\begin{rocqcode}
Fixpoint IList_wf (xs : IList_u) : Prop := ...
Notation IList := {xs : IList_u | IList_wf xs !$\land$! True}.
Definition Nil : IList := ...
Definition Cons_gt5 (x:Z | !$\ltbZRel$! 5 x) (xs: IList) : IList := ...
\end{rocqcode}
\noindent
Indeed, \rocq{IList} is a restriction of \rocq{IList_u} to \emph{well-formed} values.
Such values are characterized by a \emph{recursive refinement predicate}~\cite{atkey11,sekiyama15}
\rocq{IList_wf: IList_u !$\to$! Prop} encoding the original refinements,
defined by recursion on \rocq{IList_u} and asserting that all elements are greater than $5$.
The type \rocq{IList} also has a trivial refinement \rocq{True},
as refined datatypes can be further refined in \lh,
so \coq's \rocq{IList} notation corresponds to \lh's \hask?{v:IList | True}?.

The \emph{refined constructors} \rocq{Nil} and \rocq{Cons} are defined separately
with the refined signature of the respective constructors.
\coq's notation \rocq{(x:Z | !$\ltbZRel$! 5 x)} is syntactic sugar for \rocq{(x:{x:Z | !$\ltbZRel$! 5 x})},
in the same way that \hask?x:{Int | 5 < x}? stands for \hask?x:{x:Int | 5 < x}? in \lh.

Our translation does not target standalone refined datatypes in \coq directly---such translations are only sometimes possible.
In \lh, constructor refinements may refer to constructors:
\begin{haskellcode}
{-@ ... ConsNonEmpty :: {x:Int | 5 < x} !$\to$! {xs:IList | xs !$\neq$! Nil} !$\to$! IList @-}
\end{haskellcode}
\noindent\coq's type theory forbids such a refinement:
we cannot use a constructor \emph{while defining the type}.
Instead we define the unrefined datatype with its constructors first,
and since refinements use the unrefined constructors,
we can use \rocq{ConsNonEmpty_u} while defining the well-formedness predicate.

\subsection{Translation of Functions with Dependent Pattern Matching}
\label{subsec:reffunc}

We assume an \hask{IList}-length function \hask{llen} in \lh, correctly translated to a refined definition \rocq{llen}
and an unrefined graph relation \rocq{llen_rel}.
Our example is the safe-indexing function \hask{get}, which retrieves the element at a given index from an \hask{IList}.
\lh functions typically specify refinement types atop their \haskell definition:
\begin{haskellcode}
{-@ get :: xs:IList !$\to$! i:{Int | 0 !$\leq$! i !$\land$! i < llen xs} !$\to$! {v:Int | 5 < v} @-}
get :: IList !$\to$! Int !$\to$! Int
get (Cons x xs') i = if i == 0 then x else get xs' (i-1)
\end{haskellcode}

We translate \hask{get} to a \coq function typed with a refined signature below.
The \rocq{IList} argument is treated as trivially refined with \hask{True}:
types of top-level functions in \lh are always refined. 
We translate it to \coq as the subset type \hask{IList}.
The translation of the refinement of the index \hask{i} uses the relations for
$<$, $\leq$, and \rocq{llen}.
For the last, it is necessary to introduce a fresh variable \rocq{llen_xs}
capturing the result of the function call (see \cref{sec:RToProp}).
The return type is translated as before.

The definition is given as a \emph{proof script} terminated by \rocq{Defined}.
\begin{rocqcode}
Definition get (xs:IList)
  (i:Z | !$\lebZRel$! 0 i true !$\land$! !$\exists$! llen_xs, llen_rel !$\projleft$!xs!$\projright$! llen_xs !$\land$! !$\ltbZRel$! i llen_xs true)
  : {v:Z | !$\ltbZRel$! 5 v true}.
  destruct xs as [xs xs_p]. destruct i as [i i_p].
  generalize dependent i; induction xs as [ | x xs' IH_xs' ]; intros.
  - (* xs = Nil_u *) exfalso; solver.
  - (* xs = Cons_gt5_u x xs' *) destruct (i ==? 0) eqn:?.
    + (* i = 0 *) exact (exist (!$\forall$!(v:Z), !$\ltbZRel$! 5 v) x ltac:(solver)).
    + (* i !$\neq$! 0 *) exact (IH_xs' ltac:(solver) (i-1) ltac:(solver)).
Defined.
\end{rocqcode}
\noindent
The structure of the definition follows the structure of the original term,
with an induction on the list mirroring the definition's recursion.
Using \rocq{induction} in a “definition by tactics” gives us dependent pattern matching,
with propagation of the case analysis information, as well as an inductive hypothesis (\rocq{IH_xs'}).
Tactics neatly handle the required bookkeeping for us\,---\,the main reason we use definitions by tactics.
The parameter \rocq{xs} is destructed first into a value of type \rocq{IList_u}
and the proof of well-formedness \rocq{xs_p}, which allows matching on the
unrefined \rocq{xs}.

If \rocq{xs = Nil_u}, we have a contradiction:
\hask{0 !$\leq$! i !$\land$! i < llen Nil}, but
\hask{llen Nil = 0}.
We indicate unreachability of the branch by setting our goal to $\False$ with the \rocq{exfalso} tactic (for a contradiction).
Our automation tactic \rocq{solver} identifies the contradictory assumptions and applies
the principle of explosion appropriately, just as SMT would for \lh.

On the other hand, if \rocq{xs = Cons_gt5_u x xs'},
we use \rocq{destruct} to go by cases on whether \rocq{i} is 0,
translating \hask{(==)} in \lh to a generic decision procedure \rocq{(==?)} in \coq.
Both branches are translated directly as (Gallina) $\lambda$-terms that are
injected into the proof script by the tactic \rocq{exact}.
In our translation, structural expressions (match, lambdas and let) are translated with tactics,
while simpler terms (applications, literals...) are translated directly as $\lambda$-terms.

We could have used frameworks like Program~\cite{sozeau07} or Equations~\cite{sozeau19} to define \rocq{get}.
Instead, we choose to rely only on basic \coq tactics, which are flexible and
reliable---a trade-off that works for our implementation (\cref{sec:related}).

\paragraph{Injection and projection casts}

During elaboration of our core calculus $\er$, we insert explicit
\emph{injections} and \emph{projections} between refined and unrefined types.
This is required by the intertwined semantics of \lh,
where terms are used as refined or unrefined in different contexts.

\emph{Injections} annotate unrefined values.
In $\er$, every term is considered refined, except for literals
(Integers, Booleans and Unit) and first-order local variables.
We use unrefined first-order variables, so they can be matched on and
destructed in the \lh code, like \hask{xs}.
This is also a pragmatic choice to simplify automated verification in \coq:
it is easier to deal with a deconstructed dependent pair than with the pair itself.
We always destruct function parameters, like \rocq{xs} and {i} above.

In the elaboration of \hask{get}, the occurrence of \hask{x} from the branch \hask{i = 0}
is annotated with an injection to a singleton type: \hask?inj (x, {v:Int | 5 < v})?.
The injection is translated as the dependent pair constructor \rocq{exist}.
The refinement witness (the second element of the pair)
is generated by our automation tactic \rocq{solver}.

\emph{Projections} occur in specifications,
when refined terms appear in that logical, unrefined world.
In the type of \hask{get}, the first parameter \hask{xs} is of type \hask{IList},
a \emph{refined} type, but then appears in the refinement of \hask{i},
as an argument to the logical function \hask{llen}.
The name \hask{xs} in the refinement of \hask{i} refers to the first projection
of the refined \hask{xs}.
The elaboration phase adds the relevant annotation:
\begin{lrcode}
get :: xs:IList !$\to$! i:{Int | 0 !$\leq$! i !$\land$! i < llen (proj xs)} !$\to$! {v:Int | 5 < v}
\end{lrcode}
\noindent A \hask{proj} annotation is translated to a custom operation $\proj\cdot$,
generic on dependent pairs and packs (see \cref{subsec:overview-ho}).
On dependent pairs, $\proj\cdot$ is defined as \coq's builtin first projection \rocq{proj1_sig}.

In \lh and in \coq, term application triggers a substitution in the type.
Consider for instance the partial application \hask{get Nil} (ill-typed in \lh, which is irrelevant here).
To type this term, we substitute the variable \hask{xs} by the argument \hask{Nil} in the refinement of \hask{i}:
\begin{lrcode}
get Nil :: i:{Int | 0 !$\leq$! i !$\land$! i < llen (proj Nil)} !$\to$! {v:Int | 5 < v}
\end{lrcode}
When \hask{get Nil} appears in a refined context, like a function definition, \hask{Nil} is refined.
However, \hask{llen} in the refinement of \hask{i} requires an unrefined argument.
The projection is implicit in \lh, but must be done explicitly in \coq.

\subsection{Translation of Functions as Unrefined Graph Relations}
\label{subsec:unreffunc}

A \lh user can \emph{reflect} functions into the SMT logic~\cite{vazou17}.
While functions are encoded in SMT as uninterpreted symbols,
refinement reflection essentially permits function unfolding,
thus enables reasoning about the function's definition.

The code that is reflected into the logic is the \haskell definition---\emph{unrefined}.
To be reflected, a function must terminate and be defined by structural induction.
In \coq, we must give a translation of the reflected \haskell function, to be
used in the translation of refinements.
This function can be partial in \haskell (but must be total on the refined domain, like \hask{get}), so
we choose to encode it as an inductive predicate representing the graph of the function.
We translate an $n$-ary function into an $n{+}1$-ary relation in $\Prop$ (\cref{sec:transfunc}).

Assuming \hask{get} is reflected using the directive
\hask?{-@ reflect get @-}?, we generate an inductive predicate with two
constructors, one for each branch of the conditional in \hask{get}:
\begin{rocqcode}
Inductive get_rel : IList_u !$\to$! Z !$\to$! Z !$\to$! Prop :=
| get_Cons_true x xs' i : i == 0 !$\to$! get_rel (Cons_gt5_u x xs') i x
| get_Cons_false x xs' i : i /= 0 !$\to$! !$\forall$!(im1:Z), subZ_rel i 1 im1
   !$\to$! !$\forall$!(v:Z), get_rel xs' im1 v !$\to$! get_rel (Cons_gt5_u x xs') i v.
\end{rocqcode}
\noindent
We replace the function calls by relation calls (introducing fresh variables),
as we are in an unrefined context.
This includes operator calls for homogeneity and because some operators can be partial.

Notice that \rocq{get_rel} has no constructor for \rocq{Nil_u},
because this case is absent from the \haskell original code.
In general, the inductive predicate can hold for elements outside of the refined domain.
Assume that \hask{get xs i} has an additional equation \hask{get Nil i = 0} in \lh.
In that case, \rocq{get_rel} has a third constructor \rocq{get_rel_Nil i : get_rel Nil_u i 0}.
The relation is consistent with the evaluation of \hask{get Nil i} to \hask{0} in \haskell.

Crucially, we automatically prove (using custom tactics) that the graph denotes
a function (\emph{functionhood}) and that functions and their graphs coincide on the refined domain (\emph{coherence}):
\begin{rocqcode}
Lemma get_funct xs i v v' : get_rel xs i v !$\to$! get_rel xs i v' !$\to$! v = v'.
Lemma get__get_rel xs i v : !$\projleft$!get xs i!$\projright$! = v !$\leftrightarrow$! get_rel !$\projleft$!xs!$\projright$! !$\projleft$!i!$\projright$! v.
\end{rocqcode}
After proving coherence, we mark \rocq{get} as opaque, in other words,
its body is invisible in later contexts.
This way, only the graph relation is available to automated reasoning.
This mimics the reflection of the unrefined function into the logical context for verification in \lh.

We also define a relation for unreflected functions, but in terms of the refined
definition and with a single constructor.
Supposing that \rocq{get} were \emph{not} reflected, we infer the relation:
\begin{rocqcode}
Inductive get_rel2 : IList_u !$\to$! Z !$\to$! Z !$\to$! Prop :=
| get_xs_i (xs:IList) (i:Z | ...) (v:Z) : !$\projleft$!get xs i!$\projright$! = v !$\to$! get_rel2 !$\projleft$!xs!$\projright$! !$\projleft$!i!$\projright$! v.
\end{rocqcode}
\noindent
The functionhood and coherence lemmas trivially hold in this case.
Since we made the definition of the refined \rocq{get} opaque,
reasoning using the definition of \hask{get} is not possible.
This mimics the encoding of unreflected functions as uninterpreted symbols in the logic.

\subsection{Theorems and Subtyping}
\label{subsec:overview-thm}

The graph relation \rocq{get_rel} is used in the translation of \hask{thm_get} from the introduction. 
\begin{rocqcode}
Definition thm_get (xs:IList) (x:Z | !$\ltbZRel$! 5 x true)
  (i:Z | !$\lebZRel$! 0 i true !$\land$! !$\exists$! llen_xs, llen_rel !$\projleft$!xs!$\projright$! llen_xs !$\land$! !$\ltbZRel$! i llen_xs true)
  : {{ !$\exists$! get_xs, get_rel !$\projleft$!xs!$\projright$! !$\projleft$!i!$\projright$! get_xs !$\land$! !$\exists$! ip1, !$\addZRel$! !$\projleft$!i!$\projright$! 1 ip1
       !$\land$! !$\exists$! get_C, get_rel (Cons_u !$\projleft$!x!$\projright$! !$\projleft$!xs!$\projright$!) ip1 get_C !$\land$! get_xs == get_C }}.
  destruct xs as [xs xs_p]. destruct i as [i i_p].
  exact (subCast (!$\forall$!(u:unit), u == tt) (!$\forall$!(u:unit), !$\exists$! get_xs ... get_xs == get_C)
                 (exist (!$\forall$!(u:unit), u == tt) tt ltac:(solver))).
Defined.
\end{rocqcode}
\noindent
A theorem in \lh is a function returning a refinement of the unit
(where \hask?{ e }? stands for \hask?{u:() | e}?).
In particular, theorems may be defined by pattern matching on the arguments.
Thus, we translate theorems in the same way as any other function (but do not
generate graph relations or packs).
In \coq, we write \rocq?{{ e }}? for \rocq?{u:unit | e }?.
\rocq{tt} is the unit term.

In \lh, verification of \hask{get_rel} is automatic:
the “proof” is just the unit \hask{()}.
A single verification condition is generated by the subtyping constraint
from the singleton type \hask?{u:() | u == ()}? of \hask{()}
to \rocq{thm_get}'s return type.
Using the notations from the introduction, \lh verifies:
\begin{center}
  SMT-Valid $(\embedleft$\hask{5 < x}$\embedright \land \embedleft$
  \hask{0 !$\leq$! i !$\land$! i < llen xs}$\embedright$
  $\Rightarrow \embedleft$\hask{u == ()}$\embedright$
  \\$\Rightarrow \embedleft$\hask{get xs i == } \hask{get (Cons x xs) (i+1)}$\embedright)$
\end{center}

In our intermediate calculus $\er$, we do not check such implications:
subtyping is an axiom, effectively transforming \lh's subtyping rule \TSRef to \TSRefAx below:
\[
  \inference{
    \text{SMT-Valid }(\embed{\Gamma} \Rightarrow \embed{\lhtm_l} \Rightarrow \embed{\lhtm_r})
  }{
    \hsubj \typEnv {\refT {x} \lhtyp {\lhtm_l}} {\refT {x} \lhtyp {\lhtm_r}}
  }[\TSRef]
  \qquad
  \inference{
  }{
    \hsubj \typEnv {\refT {x} \lhtyp {\lhtm_l}} {\refT {x} \lhtyp {\lhtm_r}}
  }[\TSRefAx]
\]
Verification is the responsibility first of \lh, then of \coq.
However, during elaboration, we make all (non-trivial) subtyping occurrences explicit,
by adding a subsumption cast annotation.
The body of \hask{thm_get} is elaborated as:
\begin{lrcode}
  sub (inj ((), {u:() | u == ()}),
       {u:() | u == ()}, {u:() | get xs i == get (Cons x xs) (i+1)})
\end{lrcode}
The subsumption annotation takes the source and target type as second and third parameters.
As any literal, \hask{()} is injected into its singleton type.

The body of \coq's \rocq{thm_get} is a direct translation of this elaborated term.
It uses a custom cast operation \rocq{subCast} (described in \cref{sec:casts}).
Verification is handled by our automation tactic, as always.

Verification of \hask{thm_get} in \lh involves a step of evalution of the right side of the equality.
This is permitted by reflection of \hask{get}.
In \coq, the proof is also only possible if \hask{get} is reflected,
by inversion on \rocq{get_Cons_false}.

\subsection{Function Packs}
\label{subsec:overview-ho}

With first-order objects (datatypes or literals), it is easy to cast between refined and unrefined
with injections and projections.
But functions are first-class objects of \haskell, and are also used
implicitly as refined or unrefined terms in \lh depending on the context.
Therefore, inside the translation of a \lh function, we must be able to use higher-order
values both as refined functions and as inductive graph relations.

To this aim, we translate higher-order values to records storing the refined
definition and the unrefined graph relation, as well as the coherence lemma and
functionhood property.
This record 
is called a \emph{pack}, its signature is parameterized by an arrow type.\footnote{
  In practice, to support a common representation over functions of different
  arities, inside packs, we encode the function parameters as a dependent list
of types and refinements.}
We always create a pack when translating a top-level function.
For \rocq{get}, we have:
\begin{rocqcode}
Instance get_pack : !$\fClass$! (!$\forall$!(xs:IList) (i:Z | ...), {v:Z | !$\ltbZRel$! 5 v true}) :=
  {| !$\defnmClass$! := get; !$\relnmClass$! := get_rel; !$\unrefproofClass$! := get__get_rel; !$\functrelClass$! := get_funct |}.
\end{rocqcode}

\noindent
For an example use of packs, consider a refined \rocq{apply} function:
\begin{haskellcode}
{-@ reflect apply @-}
{-@ apply :: f:(xs:IList !$\to$! i:{Int | ...} !$\to$! {v:Int | 5 < v})
             !$\to$! xs:IList !$\to$! {v:Int | v = f xs} @-}
apply :: (IList !$\to$! Int !$\to$! Int) !$\to$! IList !$\to$! Int
apply f xs i = f xs i
\end{haskellcode}
\noindent
The functional argument \hask{f} appears unrefined in the output refinement,
but refined in the body of \hask{apply}.
Its type is translated to a \emph{pack signature}, making both versions of \hask{f} available:
\begin{rocqcode}
Definition apply (f: !$\fClass$! (!$\forall$!(xs:IList) (i:Z | ...), {v:Z | !$\ltbZRel$! 5 v true}))
  (xs:IList) (i:Z | ...) : {v:Z | !$\relnmInst{f}$! xs v}.
  destruct xs as [xs_u xs_p], i as [i_u i_p].
  exact (subCast (!$\forall$!v, !$\lebZRel$! 0 v true) (!$\forall$!v, !$\relnmInst{f}$! xs_u i_u v)
                 (!$\defnmInst{f}$! (exist _ xs ltac:(solver)) (exist _ i ltac:(solver)))).
Defined.
\end{rocqcode}
In the output refinement, we use the graph relation $\relnmInst f$ from the pack,
while in the body we use the refined function $\defnmInst f$.\footnote{
  In practice, we use wrapper functions, not record fields, since packs are defined over dependent lists of parameters.}
The function \hask{apply} can be applied to the pack \hask{get_pack}. 
Then, the projection $\relnmInst f$ is instantiated to $\relnm{\mathtt{get}}$,
while $\defnmInst f$ is instantiated to \hask{get}.

In the unrefined world, when building the relation \rocq{apply},
we also use a pack for the higher-order arguments, this time an \emph{unrefined pack}.
Unrefined packs only contain the two unrefined elements of the packs:
the relation (used below with $f.(\relnmUClass)$) and the functionhood property.
\begin{rocqcode}
Inductive apply_rel : !$\fuClass$! (IList_u !$\to$! Z !$\to$! Z) !$\to$! IList_u !$\to$! Z !$\to$! Z !$\to$! Prop :=
| apply_f_xs f xs i : !$\forall$! v, !$\relnmUInst{f}$! xs i v !$\to$! apply_rel f xs i v.
\end{rocqcode}
To instantiate \rocq{apply_rel} with \hask{get}, we define a projection from refined to unrefined packs
(here from \rocq?!$\fClass$! (IList !$\to$! {v:Z | !$\lebZRel$! 0 v true})? to \rocq?!$\fuClass$! (IList_u !$\to$! Z)?).
This projection simply forgets two of the fields to keep only the graph relation and functionhood property.

\section{The Translation}\label{sec:translation}
\toolname translates \lh to \coq in three steps.
The first step elaborates
\haskell's internal representation Core \cite{GHC-CORE} and \lh specifications into our
calculus $\er$ (presented below).

The second step elaborates the terms of $\er$ to prepare the translation.
\lh uses implicit subtyping \cite{713327}
and binders can appear as refined or unrefined according to the context.
Therefore, elaboration converts subtyping constraints
to type casts in $\er$-terms (\cref{sec:typing}).

The third step is the translation of annotated terms into \coq, detailed in
\cref{sec:transDecl,sec:refined,sec:unrefined}.

\subsection{Elements of the Translation and Internal Representation}\label{sec:syntax}

\subsubsection{Syntax of \texorpdfstring{$\er$}{the internal calculus}}

\Cref{fig:er} presents the syntax of $\er$,
a core $\lambda$-calculus with refinement types,
user-defined first-order inductive data types and functions.
The core grammar of $\er$ is in black, while the elaborated grammar includes the
terms in gray.
This work does not cover the following features of \lh:
polymorphism, partial application and lambdas in refinements or reflected functions, data constructors with higher-order
parameters, and non-structurally recursive functions.

\begin{figure}
  \[ \begin{array}{rrclll}
    \gramTitle{Base Types}
        & \lhtyp & \Coloneqq &
        \Unit \mid \Bool \mid \Int \mid \lhTC
        \\ \gramTitle{Refinement Types}
        & \lhref & \Coloneqq & \refT x \lhtyp \lhtm
        \mid \arr[x][\lhref][\lhref]
        \\[5pt] \gramTitle{Declarations}
        & \lhdecl & \Coloneqq &
        \lhdata \lhTC {\gramplus{\lhConstr ::\lhref}}
        \mid \lhdefrefl f \lhref \lhexp
        \\ \gramTitle{Expressions}
        & \lhexp & \Coloneqq &
        \lhtm \mid \lhlet x \lhref \lhexp \lhexp
        \mid \annlhqmarkgrammar \mid\ite \lhtm \lhexp \lhexp
     \\ &&\mid & \lhcase \lhtm {\gramplus{\case{\lhConstr\;\seq x}{\lhexp}}}
     \mid \dead
        \\[5pt] \gramTitle{Refinements}
        & \lhtm & \Coloneqq &
        \varargrammar \mid \lhlit
        \mid \lhConstr
        \mid \lhtm \ap \lhtm
        \mid \lhnegb \lhtm
        \mid \lhtm_1 \lheqb \lhtm_2
        \mid \lhtm_1 \lhop \lhtm_2
        \quad \color{gray}{s \in \{\topSymbol,\locSymbol,\recSymbol\}}
     \\ &&\mid& \color{gray}{\lhinj \lhtm \lhref}
     \mid {\color{gray}{\lhsub \lhtm \lhref \lhref}}
     \mid \color{gray}{\lhproj\lhtm}
        \\ \gramTitle{Literals}
        & \lhlit & \Coloneqq &
        \unit \mid \true \mid \false \mid 0, $-1$, 1 \dots
        \\ \gramTitle{Binary Operators}
        & \lhop & \Coloneqq &
        \lhplus \mid \lhminus \mid \lhtimes \mid \lhdiv\; 
        \mid \mathbin{<} \mid \mathbin\leq \mid \mathbin{>} \mid \mathbin\geq
        \mid \mathbin\lhimp \mid \lhandb \mid \lhorb \mid \lhiff
  \end{array} \]
  \Description{Syntax of the calculus lambda r}
  \caption{Syntax of $\er$; its monomorphic base types $B$ are built into more general refined types $R$.
    We write $\seq{(\cdot)}$ for a sequence of zero or more elements,
    $\gramplus\cdot$ for a non-empty sequence and
    $\gramopt\cdot$ for an optional annotation.
    In gray, the extended grammar after typing.}
  \label{fig:er}
\end{figure}

The calculus~$\er$ uses the metavariables
$\lhTC$ and $\lhConstr$ for type and data constructors,
while the term variables are drawn from the set of all variables $\varSet \supseteq \left\{ x, y, z, f, g, v \right\}$.
The name $v$ is reserved for binders in refinement types.
In the extended syntax, variables are annotated with their arity ($n$) and their scope:
whether they are top-level ($\topSymbol$), local ($\locSymbol$), or
the name of the function being defined ($\recSymbol$).

We define $\er$'s \emph{types} in two layers: base and refinement types.
A \emph{base type} ($\lhtyp$) is either builtin (Unit, Boolean or Integer)
or a user-defined type constructor.
A \emph{refinement type} ($\lhref$) is either a base type refined by a proposition
or a function between two refinement types.
In function types, the co\-domain's refinement can refer to the domain.
We abbreviate the trivial refinement $\refT x \lhtyp \true$ to $\refTT{\lhtyp}$;
we  abbreviate $\bind{x_1}{\lhref_1} \to \dots \to \bind{x_n}{\lhref_n} \to \lhref$ to
$(\bind{x_i}{\lhref_i})_{i \leq n} \to \lhref$, where $n$ is maximal.

A $\er$ program is a sequence of \emph{declarations} ($\lhdecl$),
where each declaration defines either
a  type constructor or a function.
Just like in \haskell, every function can be recursive;
just like in \lh, every function can be
\emph{reflected} into the logic.
Top-level function definitions contain implicit $\lambda$-binders and are $\eta$-long:
$(\keyw{reflect})\ \lhdef f {
	\arr[x_1][\lhref_1][\dots \to \arr[x_n][\lhref_n][\refT x \lhtyp \lhtm]]
} \lhexp$
binds the formal parameters $x_1$ through $x_n$ of $f$.
Type constructor definitions must be first-order: every constructor's arguments
must be refinements of base types, not arrow types.

The body of a definition is a $\er$ \emph{expression} ($\lhexp$).
We use a form of generalized ANF~\cite{sabry92}, where expressions are broken
into two levels: refinements $\lhtm$ and structural expressions $\lhexp$.
The upper level contains let-bindings, conditionals, pattern matching,
as well as \emph{proof hints} ($\lhqmark{\lhexp}{\lhtm_h}$) that provide hints
(to the SMT solver) to prove theorems.
Let-bindings are annotated with a refinement type: like in top-level declarations,
if the type is an arrow, it binds the formal parameters of $\lhexp$.
In other words, all $\lambda$-abstractions are implicit and let-bound.
Pattern matching can be partial, but all missing branches are signaled as $\dead$.

\emph{Refinements} $\lhtm$ are both expressions and type refinements.
We distinguish Boolean equality from binary operators.
\emph{Binary operators} ($\lhop$) are primitives for
arithmetic, equality, and Boolean logic.
\emph{Literals} ($\lhlit$) are Unit, Boolean, or Integer values.
The extended grammar has three type casts: \emph{injections} from base
to refined types, \emph{subsumptions}
over refinement types, and
\emph{projections} from refined to base types.

\subsubsection{Structure of the translation from $\er$ to $\coq$.}
\label{sec:transStructure}

\begin{table}
  \label{fig:summarytrans}
    \begin{tabular}{|c|c|c|c|}
        \hline
        \textbf{Target World}
        & \textbf{Element of $\er$} & \textbf{Function} & \textbf{Definition}\\
        \hline
        \textbf{Refined and Unrefined}
        & Declaration
        & $\trLhdeclCqName$
        & \Cref{fig:transdata,fig:transfunc} \\
        \hline
        \multirow{3}{*}{\textbf{Unrefined}}
        & Type
        & $\typesToUnRefName$
        & \Cref{def:utransreftype} \\
        & Refinement
        & $\utransName$
        & \Cref{def:utranstm} \\
        & Refinement (as $\Prop$) & $\refToPropName$ & \Cref{def:utrLhtmCqprop} \\
        \hline
        \multirow{3}{*}{\textbf{Refined}}
        & Type & $\typesToRefName$ & \Cref{def:trtypes} \\
        & Refinement & $\trLhCqName$ & {\iflongversion \Cref{fig:transtm} \else Appendix \fi} \\
        & Expression (as tactics) & $\exprToTacName$ & {\iflongversion \Cref{fig:transexp} \else Appendix \fi} \\
        \hline
  \end{tabular}
  \Description{Table summarizing the translation functions}
  \caption{The core functions of the \toolname translation from $\er$ to \coq.}
\end{table}
The translation of $\er$ is performed by the seven functions summarized in~\cref{fig:summarytrans}.
The translation follows the structure of a \lh file and translates each
datatype or function
declaration independently with $\trLhdeclCqName$ (\cref{sec:transDecl})
outputting both a refined and an unrefined version of the declaration.
Types, refinements, and expressions have two sets of translations:
to the unrefined world (\cref{sec:unrefined}) and to the refined world (\cref{sec:refined}).
In the unrefined world, type refinements are translated to propositions
($\refToPropName$) and refinement expressions to arbitrary terms
($\utransName$).
In the refined world, expressions are translated as tactics ($\exprToTacName$).

\subsection{Elaboration of Explicit Type Casts}\label{sec:typing}

Before translating $\er$ to \coq, \toolname performs an elaboration phase
turning implicit subtyping into explicit type casts,
via well-formedness and typing derivation rules mirroring \lh rules.
We expand the explanations of the overview (\cref{sec:overview}) with formal
definitions of the significant rules.
The remaining rules of the type system pose no surprise and are given in \appendixref{app:typing}.

\paragraph{Elaboration of refinement types}

We define a well-formedness judgment on types that annotates type refinements
with projections and pairs arity/scope on the variables.
For instance, the full elaboration of the type of \hask{get} (\cref{subsec:reffunc}) is:
\begin{lrcode}
  xs:IList !$\to$! i:{Int | 0 !$\leq$! i!$\vararloc{}{0}$! !$\land$! i!$\vararloc{}{0}$! < llen!$\varartop{}{1}$! (proj xs!$\vararloc{}{0}$!)} !$\to$! {v:Int | 5 < v!$\vararloc{}{0}$!}
\end{lrcode}
Well-formedness is checked by two rules:
\[ \begin{array}{c@{\enspace}c}
  \inference{
    \text{if } \lhtyp = \lhTC\text{, then } \lhtyp \in \dom(\typEnv) \\
    \hsmpj {\typEnv, \bind x \refTT{\lhtyp}} \lhtm \Bool {\ann\lhtm}
  }{
	  \hwfj \typEnv {\refT x \lhtyp \lhtm} {\refT x \lhtyp {\ann\lhtm}}
  }[\!\!\WFTBase]
  &
  \inference{
	  \hwfj \typEnv {\lhref_x} {\ann{\lhref_x}}
    &\hwfj {\typEnv, \bind{x}{\ann{\lhref_x}}} \lhref {\ann\lhref}
  }{
    \hwfj \typEnv {\arr} {\arr[x][\ann{\lhref_x}][\ann{\lhref}\subst x {\lhproj x}]}
  }[\!\!\WFTFun]
\end{array} \]

Rule~\WFTBase checks that type refinements $\lhtm$ are Boolean in a
simple type system (defined in \appendixref{app:typing}).
Additionally, it annotates the variables inside of $\lhtm$ with their arities and scopes.
The scope is either $\locSymbol$, for variables local to the function,
or $\topSymbol$, for variables defined at top-level.
We do not typecheck refinements with a refinement type system as
we only want to generate type annotations and add projections,
not check that refinements are satisfied.
In this and in all other rules, $\typEnv$ is a typing environment populated
with elaborated types.

Rule~\WFTFun elaborates the arrow type $\arr$ by projecting the occurrences
of $x$ in refinements of $\lhref$.
The definition of the substitution is standard, it is capture-avoiding and
assumes all bound variables are distinct.

\paragraph{Subsumption}

The elaboration of expressions introduces subsumptions and injections,
as well as the arity/scope annotations on variables.
Our type system for expressions is \emph{bidirectional}:
the types of refinements are \emph{synthesized} ($\Rightarrow$), \ie are an
output of the judgment,
while structural expressions (lets, conditionals, case and hints) are \emph{checked}
($\Leftarrow$) against a type.
Both kinds of judgments output an annotated term.

As usual in bidirectional systems, subtyping is where synthesis meets checking:
the synthesized type $\lhref_s$ must be a subtype of the required type $\lhref_t$
(both are already elaborated by construction).
\[
  \inference{
    \hsynj \typEnv \lhtm {\lhref_s} {\ann{\lhtm}}
    \quad \hsubj \typEnv {\lhref_s} {\lhref_t}
  }{
    \hchkj \typEnv \lhtm {\lhref_t} {\lhsub {\ann\lhtm} {\lhref_s} {\lhref_t}}
  }[\!\!\CSyn]
\]
Since types of structural expressions are only ever checked, subsumption only
applies to refinements, so the rule is defined on a refinement $\lhtm$
rather than an expression $\lhexp$.
However, subtyping applies on any kind of refinement type $\lhref$: a subset or an arrow.
The subtyping judgment is contravariant on arrows, and valid for any two subset
of the same base type (with the rule given in \cref{subsec:reffunc}).

\paragraph{Injection}

There are two unrefined atoms of $\er$: literals and local first-order variables.
They are elaborated by an injection into a subset type (our judgments only derive refinement types):
\[
  \inference{}{\hsynj \typEnv \lhlit {\csttype\lhlit} {\lhinj{\lhlit}{\csttype\lhlit}}}[\!\!\SLit]
  \quad\inference{(\bind{\varloc x}{\lhref}) \in \typEnv & \ar(\lhref) = 0}
  {\hsynj \typEnv x \lhref {\lhinj{\vararloc{x}{0}}{\lhref}}}[\!\!\SVarLoc]
\]
Rule~\SLit casts a literal into a singleton type (or \textit{selfified} type~\cite{ou04})
$\csttype\lhlit = \refT v \lhtyp {\vararloc{v}{0} == \lhlit}$,
with $\lhtyp$ the expected base type of $\lhlit$.
Rule~\SVarLoc is the rule for a local variable.
The scope is present in the environment and inserted into the elaborated term.

\paragraph{Application}

The application rule is important, even though it does not add annotations by itself.
\[
  \inference{
    \hsynj \typEnv {\lhtm_1} {\arr} {\ann{\lhtm_1}}
    & \hchkj \typEnv {\lhtm_2} {\lhref_x} {\ann{\lhtm_2}}
  }{
    \hsynj \typEnv {\lhtm_1 \ap \lhtm_2} {\lhref \subst x {\ann{\lhtm_2}}} {\ann{\lhtm_1} \ap \ann{\lhtm_2}}
  }[\!\!\SApp]
\]
The type of the argument is \emph{checked},
meaning that a subsumption annotation will be inserted on the argument,
as the argument is a refinement whose type will be synthesized.

More importantly, the (elaborated) argument is substituted into the refinements of $\lhref$.
This motivates the syntactic restriction allowing only refinements as arguments.
It also justifies surrounding $x$ by projections in the type refinements
of $\lhref$, as $\ann{\lhtm_2}$ has a refinement type.

\paragraph{Hints}

Our formalization explicitly considers ``proof hints'' $\lhqmark{\lhexp}{\lhtm_h}$,
a common technique in Liquid Haskell for adding a fact ($\lhtm_h$) for SMT to use while considering $\lhexp$.
The term $\lhqmark{\lhexp}{\lhtm_h}$ elaborates to $\annlhqmark{\lhexp}{\lhtm_h}{\lhtm_p}$,
where $\lhtm_h$ is the proof term given as hint and $\lhtm_p$ is the
proposition (the refinement of the type of $\lhtm_h$) that it proves.
\[
  \inference{
    \hsynj \typEnv {\lhtm_h} {\refT x \Unit {\lhtm_p}} {\ann{\lhtm_h}}
    \quad \hchkj {\typEnv, \bind{\varloc x}{\refT x \Unit {\lhtm_p}}} {\lhexp} \lhref {\ann\lhexp}
  }{
    \hchkj \typEnv {\lhqmark{\lhexp}{\lhtm_h}} \lhref {\annlhqmark{\ann{\lhexp}}{\ann{\lhtm_h}}{\lhtm_p}}
  }[\!\!\CHint]
\]

\subsection{Top-level Translation of Data Type and Function Declarations}\label{sec:transDecl}

The translation of programs follows the syntactic structure, translating datatypes (\cref{sec:transdata}) 
and functions (\cref{sec:transfunc}),
using the function $\trLhdeclCqName$ (\cref{fig:summarytrans}).
The translation also generates additional lemmas and proof automation hints; 
omitted here but detailed in \appendixref{app:lemmas}.

\subsubsection{Translation of Inductive Datatype Declarations}%
\label{sec:transdata}

Recall the refined datatype \hask{IList} from \cref{subsec:intlists},
and its translation given in \cref{fig:trans-ilist}
(with complete definitions of \rocq{IList_wf}, \rocq{Nil} and \rocq{Cons}):
\begin{lrcode}
data IList where Nil :: IList | Cons_gt5 :: x:{Int | 5 < x} !$\to$! IList !$\to$! IList
\end{lrcode}

\begin{listing}
\begin{rocqcode}
Inductive IList_u := Nil_u : IList_u  | Cons_gt5_u : Z !$\to$! IList_u !$\to$! IList_u.
Fixpoint IList_wf (xs:IList_u) : Prop :=
  match xs with
  | Nil_u => True                      (* refinement for the conclusion (trivial) *)
  | Cons_gt5_u x xs' => !$\ltbZRel$! 5 x true  (* refinement on x (5 < x)                 *)
                  !$\land$! True              (* refinement for xs' (trivial)            *)
                  !$\land$! IList_wf xs'      (* recursion on xs'                        *)
                  !$\land$! True              (* refinement for the conclusion (trivial) *)
  end.
Notation IList := {xs:IList_u | IList_wf xs !$\land$! True}.
Definition Nil : IList := exist _ Nil_u ltac:(solver).
Definition Cons_gt5 (x:Z | !$\ltbZRel$! 5 x) (xs:IList) : IList :=
  exist _ (Cons_gt5_u !$\projleft$!x!$\projright$! !$\projleft$!xs!$\projright$!) ltac:(solver).
\end{rocqcode}
\Description{Translation of IList}
\caption{Translation of \hask{IList}}
\label{fig:trans-ilist}
\end{listing}

The translation of a refined datatype $\lhTC$ (\cref{fig:transdata}) generates the same kinds of \coq declarations:
the unrefined datatype $\unrefnm\lhTC$, the well-formedness predicate $\wffnm\lhTC$,
the subset type $\lhTC$ and the refined constructors $\lhConstr_1, \dots, \lhConstr_n$.
In the well-formedness predicate, we use a substitution $\sigma$ to replace the
projections of the constructor parameters by the parameters themselves,
as they are unrefined in that context.
The translation also instantiates two equalities (not shown here):
$==$ (returning $\Prop$) and $\cqeqb$ (returning $\cqBool$).

\begin{figure}
\begin{tcolorbox}
  $\trLhdeclCq{\lhdata \lhTC {(\lhConstr_i :: (\bind{x_i^j}{\lhref_i^j})_{j \leq m_i} \to \refT x \lhTC {\lhtm_i})_{i \leq n}}} \doteq$
  \begin{rocqcode}
Inductive !$\unrefnm\cqTC$!: Set := !$(\mid \unrefnm{\lhConstr^i}: \utrLhrefCq{\lhref_i^1} \to \dots \to \utrLhrefCq{\lhref_i^{m_i}} \to {\unrefnm\lhTC})_{i \leq n}$!.
Fixpoint !$\wffnm{\lhTC} \ap (\coqbind{x}{\unrefnm\lhTC})$!: Prop :=
  match !$x$! with !$\left(\mid \case{\unrefnm{\lhConstr^i} \ap x_i^1 \dots x_i^{m_i}}{\utrLhtmCqprop{\lhtm_i\sigma\subst{x}{\lhConstr\ap x_i^1 \dots \ap x_i^{m_i}}} \wedge \bigwedge_{j \leq m_i} {\cqtm_i^j}} \right)_{i \leq n}$! end.
Global Notation !$\lhTC$! := {!$\coqbind{x}{\unrefnm\lhTC}$! | !$\wffnm\lhTC \ap x$! !$\land$! True}.
!$\Big($!Definition !$\refnm{\lhConstr_i} \ap (\coqbind{x_i^j}{\trLhrefCqEmp{\lhref_i^j}})_{j \leq m_i}: \trLhrefCqEmp\lhref$! :=
  exist (!$\cqlamuntyped{x}{\wffnm\lhTC \ap x \land \True}$!) !$(\unrefnm{\lhConstr^i} \ap \proj{x_i^1} \dots \proj{x_i^{m_i}})$! ltac:(solver). !$\Big)_{i \leq n}$!
  \end{rocqcode}
  {\small
  \textit{where} $\sigma = \subst{\lhproj{x_i^k}}{x_i^k}_{k \leq m_i}$
  and $\refT{\_}{\_}{\cqtm_i^j} \doteq \trLhrefCq{\lhref_i^j\sigma}$.}
\end{tcolorbox}
\Description{Translation of inductive datatypes}
\caption{Translation of an inductive datatype. See \cref{fig:summarytrans} for the definitions of the auxiliary declarations.}
\label{fig:transdata}
\end{figure}

\subsubsection{Translation of Function Declarations}%
\label{sec:transfunc}

We recall the translation of \rocq{get} in \cref{fig:trans-get}.
Formally, the translation of a function $f$ (\cref{fig:transfunc})
creates the same elements:
the refined function $f$, the unrefined graph relation $\relnm f$,
the coherence lemma $\unrefproof f$, the functionhood lemma $\functrel f$,
the refined pack $\fInst f$ and the unrefined pack $\fuInst f$
\footnote{For theorems (functions returning a refinement of the unit), we do not create
  relations nor packs as we assume that they are never used in an unrefined context.
  Similarly, in functions, higher-order parameters returning the unit are
translated as dependent arrows rather than packs.}.

\begin{listing}
\begin{rocqcode}
Definition get (xs:IList)
  (i:Z | !$\lebZRel$! 0 i true !$\land$! !$\exists$! llen_xs, llen_rel !$\projleft$!xs!$\projright$! llen_xs !$\land$! !$\ltbZRel$! i llen_xs true)
  : {v:Z | !$\ltbZRel$! 5 v true}.
  destruct xs as [xs xs_p]. destruct i as [i i_p]. ... Defined.
Inductive get_rel : IList_u !$\to$! Z !$\to$! Z !$\to$! Prop :=
| get_Cons_true x xs' i: ... | get_Cons_false x xs' i: ...
Lemma get_funct xs i v v' : get_rel xs i v !$\to$! get_rel xs i v' !$\to$! v = v'.
Lemma get__get_rel xs i v : !$\projleft$!get xs i!$\projright$! = v !$\leftrightarrow$! get_rel !$\projleft$!xs!$\projright$! !$\projleft$!i!$\projright$! v.
Instance get_pack : !$\fClass$! (!$\forall$!(xs:IList) (i:Z | ...), {v:Z | !$\ltbZRel$! 5 v true}) :=
  { !$\defnmClass$! := get; !$\relnmClass$! := get_rel; !$\unrefproofClass$! := get__get_rel; !$\functrelClass$! := get_funct }.
Instance get_upack: !$\fuClass$! (IList_u !$\to$! Z !$\to$! Z) :=
  { !$\relnmUClass$! := get_rel; !$\functrelUClass$! := get_funct }.
\end{rocqcode}
\Description{Declarations in the translation of get}
\caption{Declarations in the translation of \hask{get}}
\label{fig:trans-get}
\end{listing}

\begin{figure}
  \begin{tcolorbox}
    $\trLhdeclCq{\lhdefrefl f {\lhref_f} \lhexp} \doteq$
    \begin{rocqcode}
Definition !$f$! !$(x_i: \trLhrefCqEmp{\lhref_i})_{i \leq n}$!: !$\trLhrefCqEmp\lhref$!.
  destruct (!$x_i$! as [!$x_i$! !$\wit{x_i}$!])!$_{i \in I}$!. !$\trLhexpCq{\lhexp}$! Defined.
Inductive !$\relnm{f}$!: !$\utrLhrefCq{\lhref_1} \to \dots \to \utrLhrefCq{\lhref_n} \to \utrLhrefCq\lhref \to \Prop$! :=
  !$\textit{if}$! !$\keyw{reflect}$! !$\textit{then}$! !$\left(\mid f^\branch: \altInGraphRel(f, \branch)\right)_{\branch \in \mkGraph(\lhexp; x_1, \dots, x_n)}$! !$\textit{else}$! !$\ltacTac{\getRel\ap f}$!.
Lemma !$\functrel{f}$! !$(x_i: \utrLhrefCq{\lhref_i})_{i \leq n} \ap (v\,v':\utrLhrefCq{\lhref})$!: 
  !$\relnm{f} \ap x_1 \dots x_n \ap v \to \relnm{f} \ap x_1 \dots x_n \ap v' \to v = v'$!.
Lemma !$\unrefproof{f}$! !$(x_i:\trLhrefCqEmp{\lhref_i})_{i \leq n}\ap (v:\utrLhrefCq\lhref)$!: !$\proj{f\ap(x_i)_{i \leq n}} = v \lequiv \relnm{f}\ap (\proj{{x}_i})_{i \leq n}\ap v$!.
Instance !$\fInst{f}$! : !$\trLhrefCqEmp{\lhref_f}$! :=
  { !$\defnmClass$! := !$f$!; !$\relnmClass$! := !$\relnm{f}$!; !$\unrefproofClass$! := !$\unrefproof{f}$!; !$\functrelClass$! := !$\functrel{f}$! }.
Instance !$\fuInst{f}$! : !$\utrLhrefCq{\lhref_f}$! :=
  { !$\relnmUClass$! := !$\relnm{f}$!; !$\functrelUClass$! := !$\functrel{f}$! }.
  \end{rocqcode}
  {\small
    \textit{where} $\lhref_f = \arrs$,
    $f^{\branch}$ is a unique identifier, 
    and $I = \{i \in [1..n] \mid \lhref_i \text{ is not an arrow}\}$.
  }
  \end{tcolorbox}
  \Description{Translation of function declarations}
  \caption{Translation of a function declaration}
  \label{fig:transfunc}
\end{figure}

The definition of the unrefined graph relation depends on whether $f$ is reflected.
If $f$ is not, we disallow computation on the relation: the translation builds a
generic, opaque graph relation (\cref{subsec:unreffunc}) with the tactic~\getRel.
If $f$ is reflected, its definition is encoded in the inductive $\relnm f$.

To build the inductive graph, the function $\mkGraph(\lhexp; x_1, \dots, x_n)$
collects all the \emph{function paths} of $f$, excluding the $\dead$ ones.
For each path $\branch$, we map the unique constructor $f^\branch$
to the path relation, given by $\altInGraphRel(f, \branch)$,
defined in~\ref{def:altInGraphRel}.

The original \lh program can also have top-level declarations of constants ($0$-ary functions).
Their refined definition is translated in the same way as functions, with tactics.
But we do not generate a graph relation or a pack, and simply use the dependent pair projection if
a constant is used in an unrefined context.

\subsection{Unrefined Translations of Types and Refinements}\label{sec:unrefined}

We present the translation of types ($\typesToUnRefName$) and of refinements to the unrefined world.
Refinements are either type refinements, in which case they are translated as
$\Prop$ terms (\cref{def:utrLhtmCqprop}),
or refinement expressions, and translated within the unrefined graph relation of
the function (\cref{def:altInGraphRel}).

\begin{definition}[Unrefined translation of types]
  \label{def:utransreftype}%
  A refinement type is translated as unrefined by scrapping refinements,
  and using the unrefined inductive datatypes.
  An arrow type becomes an unrefined pack signature.
  We translate the fixed length $\Int$ type to $\cqInt$.
  Overflow and underflow on \haskell's $\Int$ operations are undefined behavior, so
  arbitrary precision arithmetic on $\cqInt$ respects these semantics.
  \[ \begin{array}{c}
    \utrLhrefCq{\refT x \lhbuiltin \lhtm} \doteq \trBuiltin\lhbuiltin
    \quad\utrLhrefCq{\refT x \lhTC \lhtm} \doteq \unrefnm\lhTC
    \\\utrLhrefCq{\arrs[x_i][\lhref_i][1 \leq i \leq n][\lhref]} \doteq
    \fuClass \ap (\utrLhrefCq{\lhref_1} \to \dots \to \utrLhrefCq{\lhref_n} \to \utrLhrefCq\lhref)
    \\[0.2em]
    \textit{where} \quad
    \trBuiltin{\Bool} \doteq \cqBool \quad
    \trBuiltin{\Int} \doteq \cqInt \quad
    \trBuiltin{\Unit} \doteq \cqUnit
  \end{array} \]
\end{definition}

\subsubsection{Translation of refinements}
\label{sec:RToProp}

Our example for the translation of type refinements is the refinement of the
parameter \hask{i} of the function \hask{get},
of elaborated type (omitting annotations on variables):
\hask?{i:Int | 0 !$\leq$! i !$\land$! i < llen (proj xs)}?
and its translation:
\begin{rocqcode}
{i:Z | !$\lebZRel$! 0 i true !$\land$! !$\exists$! llen_xs, llen_rel !$\projleft$!xs!$\projright$! llen_xs !$\land$! !$\ltbZRel$! i llen_xs true}
\end{rocqcode}
The translation of refinements is complicated by our encoding of functions as relations.
Function calls must generally be translated to a call to the corresponding relation,
substituted by a free variable standing for the result (\rocq{llen_xs})
and extracted as an hypothesis
(\rocq{!$\exists$! llen_xs, llen_rel !$\projleft$!xs!$\projright$! llen_xs}).

\begin{example}
  The translation of the refinement
  $0 \leq i \land i < \texttt{llen} (\texttt{proj}\ap\texttt{xs})$ to $\Prop$ is as follows.
  First, external logical connectives are translated to corresponding ones in $\Prop$:
  $\utrLhtmCqprop{0 \leq i \land i < \texttt{llen}\ap(\texttt{proj}\ap\texttt{xs})}
  = \utrLhtmCqprop{0 \leq i} \land \utrLhtmCqprop{i < \texttt{llen}\ap(\texttt{proj}\ap\texttt{xs})}$.
  We translate builtin operators as relations:
  $\utrLhtmCqprop{0 \leq i} =\ \lebZRel\ap\utrLhtmCq{0}\ap\utrLhtmCq{i}\ap \cqtrue =\ \lebZRel\ap 0 \ap i\ap \cqtrue$,
  where the function $\utrLhtmCq\lhtm$ translates a refinement $\lhtm$ to a \coq
  term refining an unrefined base type.
  There is no need to introduce a fresh variable here.

  A function or operator call that is not directly under logical connectives
  must be extracted as a hypothesis (as the relation is $\Prop$-kinded) and substituted by a fresh variable name.
  We create a hypothesis $h = \exists\, \texttt{llen\_xs}, \texttt{llen}\ap\proj{\texttt{xs}}\ap\texttt{llen\_xs}$.
  Then:
  \begin{align*}
    \utrLhtmCqprop{i < \texttt{llen}\ap(\texttt{proj}\ap\texttt{xs})}
    &= h \to\ \ltbZRel\ap \utrLhtmCq{i}\ap \utrLhtmCq{\texttt{llen}\ap\proj{\texttt{xs}}} \ap \cqtrue
    \\&= \exists\, \texttt{llen\_xs}, \texttt{llen}\ap\proj{\texttt{xs}}\ap\texttt{llen\_xs}
    \to\ \ltbZRel\ap i\ap \texttt{llen\_xs} \ap \cqtrue
  \end{align*}
  To handle the names, we create a map $\frApp$ (for \emph{result variables})
  associating each function call (outside of a projection) to a fresh name.
  Here, $\frApp \supseteq \{\texttt{llen}\ap(\texttt{proj}\ap\texttt{xs}) \mapsto \texttt{llen\_xs}\}$.
  The function $\utrLhtmCqName$ translates a function call to its name in $\frApp$.
\end{example}

\begin{definition}
  \label{def:utrLhtmCqprop}%
  The function $\utrLhtmCqprop\lhtm$ inputs a refinement and assumes a map
  $\frApp$ from fully applied functions and operators in $\er$ to fresh variable names.

  We consider a function $\hypFromAmQ\lhtm$ that given a quantifier
  $\hypQ \in \{\forall, \exists\}$ returns the sequence hypotheses needed for $\lhtm$, using $\frApp$
  (given in \appendixref{def:hypFromAm}).
  For instance, $\hypFromAmQ[\exists] {i < \texttt{llen}\ap(\texttt{proj}\ap\texttt{xs})}
  = \exists \texttt{llen\_res}, \texttt{llen}\ap\proj{\texttt{xs}}\ap\texttt{llen\_res}$.
  We assume the hypotheses to be returned in the correct scope order.
  Given refinements $\lhtm_1, \dots, \lhtm_n$,
  we define $\hypFromAmQ[\exists] {\lhtm_1, \dots, \lhtm_n}
  \doteq \bigwedge_{i \leq n} (\bigwedge_{h \in \hypFromAmQ[\exists] {\lhtm_i}} h)$.
  Then,
  \begin{align*}
    \utrLhtmCqprop{\lhnegb\lhtm'}
    &\doteq \cqnegp\utrLhtmCqprop{\lhtm'}
    \quad\qquad\utrLhtmCqprop{\lhtm_1 \lhbop \lhtm_2}
    \doteq \utrLhtmCqprop{\lhtm_1} \lhbop \utrLhtmCqprop{\lhtm_2}
    &\lhop \in \{\lhandb, \lhorb, \lhimp, \lhiff\}
    \\ \utrLhtmCqprop{f \ap \lhtm_1 \dots \ap \lhtm_n}
    &\doteq \hypFromAmQ[\exists] {\lhtm_1, \dots, \lhtm_n} \cqandp
    \relnm f \ap \utrLhtmCq{\lhtm_1} \dots \utrLhtmCq{\lhtm_n} \ap \cqtrue
    \\ \utrLhtmCqprop\lhtm
    &\doteq \hypFromAmQ[\exists] \lhtm \land \IsTrue(\utrLhtmCq\lhtm)
    &\text{otherwise}
  \end{align*}
  We use \coq's $\IsTrue$ to cast a Boolean to $\Prop$.
\end{definition}

While $\refToPropName$ translates refinements to $\Prop$ at the external level,
inner terms are translated as $\lambda$-terms of unrefined base and inductive types
with function~$\utrLhtmCqName$.
Most notably, this function substitutes function and operator calls with their
associated result variable in $\frApp$.
\begin{definition}[Translation of refinements as terms]
  \label{def:utranstm}%
  The function $\utrLhtmCq\lhtm$ inputs a refinement and assumes a map $\frApp$.
  \[ \begin{array}{r@{\;}c@{\;}l@{\qquad}r@{\;}c@{\;}l}
    \multicolumn{6}{c}{
    \utrLhtmCq{\vararloc x n} \doteq x
    \qquad\utrLhtmCq{\varartop x {n>0}} \doteq \fuInst x
    \qquad\utrLhtmCq{\varartop x 0} = \proj{\varartop x 0}}
    \\[3pt]
    \utrLhtmCq{\lhlit} &\doteq& \utrLit{\lhlit}
    &\utrLhtmCq{\lhtm_1 \lheqb \lhtm_2}
    &\doteq& \utrLhtmCq{\lhtm_1} \cqeqb \utrLhtmCq{\lhtm_2}
    \\[3pt]
    \utrLhtmCq{\lhConstr \ap \seq\lhtm }
    &\doteq& \unrefnm\lhConstr \ap \seq{\utrLhtmCq\lhtm}
    &\utrLhtmCq{\lhproj\lhtm} &\doteq& \proj{\trLhtmCq\lhtm}
    \\[3pt]
    \utrLhtmCq{\lhtm}
    &\doteq& \multicolumn{4}{l}{\utrLhtmCq{\lhtm'},
      \quad \textit{if } \lhtm \in
      \{ \lhinj {\lhtm'} \lhref, \lhsub {\lhtm'} {\lhref_s} {\lhref_t} \}
    }
    \\[3pt]
    \utrLhtmCq{\lhtm}
    &\doteq& \multicolumn{4}{l}{\frApp(\lhtm),
      \quad \textit{if } \lhtm \in \{
        \lhtm_1 \lhbop \lhtm_2,
        \varar{f}{n>0} \ap\lhtm_1\dots\lhtm_n,
        \lhproj{\varar{f}{n>0}} \ap\lhtm_1\dots\lhtm_n 
      \}
    }
  \end{array}\]
  \textit{where}
  $\utrLit\true \doteq \cqtrue$,
  $\utrLit\false \doteq \cqfalse$,
  $\utrLit\unit \doteq \cqunit$, and 
  $\utrLit\lhlit \doteq \lhlit$ for integers.
\end{definition}
\noindent
Note the difference of treatment between different kinds of variables:
local variables are kept as is, while top-level functions, used as values,
are mapped to their $\fuClass$ instance, and top-level constants are projected.
We use our generic Boolean equality $\cqeqb$ directly, while other operators
are extracted, as any function call.
Within a projection $\lhproj\lhtm$ (that only occurs in type refinements),
the term $\lhtm$ is translated as refined, then projected.
Injections and subsumptions are ignored.

\subsubsection{Construction of the graph relation}
\label{sec:unref-graph}

We build the inductive graph relation $\relnm f$ for a function $f$ from the
sequence of \emph{function paths} obtained from its definition.
For instance, let $\lhexp_\texttt{get}$ be the body of \hask{get}.
The graph of \hask{get}, computed by $\mkGraph(\lhexp_\texttt{get}; \texttt{xs}, \texttt{i})$, has two paths:
\begin{center}
  ([\hask{xs !$\mapsto$! Cons_gt5 x xs'}, \hask{i !$\mapsto$! i}], [\hask{i == 0 !$\mapsto$! True}], \hask{x}) and
  \\([\hask{xs !$\mapsto$! Cons_gt5 x xs'}, \hask{i !$\mapsto$! i}], [\hask{i == 0 !$\mapsto$! False}], \hask{get xs' (i-1)})
\end{center}

Formally, a function path is a triple $(\sigma_\params, \sigma, \lhtm)$
of a map from parameters to their pattern, additional matches and a term.
By construction, the last term is always a refinement.
The function $\mkGraph$ is defined in \appendixref{def:functionpaths}.
It essentially reorganizes and collapses all pattern matching, substitutes away
let-bindings and drops unreachable branches and proof hints.

Recall the definition of \rocq{get_rel}:
\begin{rocqcode}
Inductive get_rel : IList_u !$\to$! Z !$\to$! Z !$\to$! Prop :=
| get_Cons_true x xs' i : i == 0 !$\to$! get_rel (Cons_gt5_u x xs') i x
| get_Cons_false x xs' i : i /= 0 !$\to$! !$\forall$! (im1:Z), subZ_rel i 1 im1
   !$\to$! !$\forall$! (v:Z), get_rel xs' im1 n !$\to$! get_rel (Cons_gt5_u x xs') i v.
\end{rocqcode}
\noindent The translation uses the functions $\utrLhtmCqName$ and $\hypFromAmName_\forall$,
this time using a universal quantifier and implications to introduce the
hypotheses as premises.
For each path, we instantiate \rocq{get_rel} with the parameters patterns
(here \rocq{Cons_gt5_u x xs'} and \rocq{i} in both branches).
We also introduce the free variables of these terms (\rocq{x}, \rocq{xs'} and \rocq{i}).
Additional pattern matchings are encoded as hypotheses (\rocq{i == 0} and \rocq{i /= 0}).
Other hypotheses arise from the extraction of function calls, in the second
branch for subtraction and the recursive call to \hask{get}.

\begin{definition}
  \label{def:altInGraphRel}%
  We translate a path of a function $f$ to the type of a constructor of $\relnm f$
  (as in \cref{fig:transfunc}) with function $\altInGraphRel$ below, where
  $(h_1, \dots, h_n) \doteq \hypFromAmQ[\forall]\lhtm$, resp. $\hypFromAmQ[\forall]{\lhtm_m}$.
  \[\begin{array}{l@{\;}c@{\;}l}
    \multicolumn{3}{l}{
      \altInGraphRel(f, (\sigma_\params, \sigma, \lhtm))
      \doteq \forall (y)_{y \in \fv(\codom(\sigma_\params))},
      \texttt{go}(\sigma)
    }\\
  \textsf{where }
  \texttt{go}(\emptyset) &\doteq& h_1 \to \dots \to h_n
    \to \relnm{f} \ap (\utrLhtmCq{\lhtm_i})_{\lhtm_i \in \codom(\sigma_\params)} \ap \utrLhtmCq\lhtm
    \\\phantom{\textsf{where }}\texttt{go}((\lhtm_m, \lhtm_p)\colon\!\sigma)
      &\doteq& h_1 \to \dots \to h_n \to
      \forall (y)_{y \in \fv(\lhtm_p)},
      \utrLhtmCq{\lhtm_m} == \utrLhtmCq{\lhtm_p}
      \to \texttt{go}(\sigma)
\end{array}\]
\end{definition}
The function is defined inductively on the elements of $\sigma$.
In the inductive case, we introduce an equality between the matched term
and its pattern.
We use $\hypFromAmQ[\forall]{\lhtm_m}$ but not $\hypFromAmQ[\forall]{\lhtm_p}$
as the pattern $\lhtm_p$ does not contain any function calls to extract.
We shorten \rocq{(i ==? 0) == true} to \rocq{i == 0}
and \rocq{(i ==? 0) == false} to \rocq{i /= 0}.

\subsection{Refined Translations of Types, Refinements, and Expressions}\label{sec:refined}

We now introduce the translation of types ($\typesToRefName$), refinements
($\trLhCqName$) and expressions ($\exprToTacName$) into the refined world.

\begin{definition}[Refined translation of types]
  \label{def:trtypes}%
  Type refinements are translated to $\Prop$ with $\refToPropName$.
  Arrows become refined pack signatures.
  We reuse the function $\trBuiltin\lhbuiltin$ from \cref{def:utransreftype}.
  \[ \begin{array}{r@{\;}c@{\;}l}
    \trLhrefCq{\refT x \lhbuiltin \lhtm}
      &\doteq& \refT x {\trBuiltin\lhbuiltin} {\utrLhtmCqprop\lhtm}
      \\ \trLhrefCq{\refT x \lhTC \lhtm}
      &\doteq& \refT x {\unrefnm\lhTC} {\wffnm\lhTC x \land \utrLhtmCqprop\lhtm}
      \\ \trLhrefCq{\arrs}
      &\doteq&\ \fClass\ (\forall (\bind{x_i}{\trLhrefCq{\lhref_i}})_{i \leq n}, \trLhrefCq\lhref)
  \end{array} \]
\end{definition}

\subsubsection{Translation of expressions and refinements}
\label{sec:transexpr}%

The translation of \lh terms to tactics is cumbersome;
instead of giving the full translation (available in \appendixref{app:refined}),
we illustrate it on a variation of \hask{get}.
The variation introduces a let-bound function \hask{double} that is applied to the return value.
We also show how \lh uses theorems as a hint to the SMT solver.
The theorem \hask{doubleThm} is here for the sake of illustration, in practice it is too trivial to be necessary.

The elaborated $\er$ definition is in \cref{fig:getDouble}.
We write $\uparrow r$ for $\lhinj{\lhtm}{\refT v \lhtyp \True}$ where
$\lhtyp$ is the base type of $\lhtm$.
We use the same notation for a trivial injection translation in \coq.
\begin{listing}
  \begin{lrcode}
def doubleThm :: x:{Int | 5 < x} !$\to$! {u:() | 10 < x + x} =
  inj ((), {u:() | 10 < x + x})

def getDouble :: xs:IList !$\to$! i:{Int | 0 !$\leq$! i !$\land$! i < llen (proj xs)} !$\to$! {v:Int | 10 < v} =
  let double :: (y:Int) !$\to$! {v:Int | proj y + proj y == v} = !$\uparrow$!y + !$\uparrow$!y
   in case xs of
   | Cons_gt5 x xs' !$\to$!
       if !$\uparrow$!i == !$\uparrow$!0 then sub (double !$\uparrow$!x, {v:Int | x + x == v}, {v:Int | 10 < x})
                        ? (doubleThm (inj(x,{x:Int | 5 < x)) proves 10 < x + x)
                   else getDouble !$\uparrow$!xs' (!$\uparrow$!i - !$\uparrow$!1)
   | Nil !$\to$! unreachable
  \end{lrcode}
  \caption{Elaborated functions \hask{doubleThm} and \hask{getDouble}}
  \label{fig:getDouble}
\end{listing}
The type of \hask{double} is taken from \lh's type inference (during parsing to $\er)$.
\lh's addition and other operators are refined:
\begin{lrcode}
+ :: x:{Int | True} !$\to$! y:{Int | True} !$\to$! {v:Int | proj x + proj y = v}.
\end{lrcode}
\noindent Thus, \hask{y} must be injected into \hask{double}, but we can derive the final
refinement of \hask{getDouble}, using \hask{doubleThm}.
In \coq, we use a custom refined addition \rocq{+Z}.

The translation is given in \cref{fig:trans-getDouble}.
For \hask{doubleThm}, it is similar to \hask{thm_get} (\cref{subsec:overview-thm}),
but with the injection on the unit and the subsumption merged into a single injection.
\begin{listing}
  \begin{rocqcode}
Definition doubleThm (x:Z | !$\ltbZRel$! 5 x true)
  : {u:unit | !$\exists$! p, !$\addZRel$! x x p !$\land$! !$\ltbZRel$! 10 p true}.
  destruct x as [x x_p]. exact (exist (!$\forall$!(u:unit), !$\exists$! p, ...) tt ltac:(solver)). Qed.

Definition getDouble (xs:IList)
  (i:Z | !$\lebZRel$! 0 i true !$\land$! !$\exists$! llen_xs, llen_rel !$\projleft$!xs!$\projright$! llen_xs !$\land$! !$\ltbZRel$! i llen_xs true)
  : {v:Z | !$\ltbZRel$! 10 v true}.
  destruct xs as [xs xs_p]. destruct i as [i i_p].
  (* Translation of let double y = y + y *)
  assert (double_def : !$\forall$!(y:Z | True), {v:Z | !$\addZRel$! !$\projleft$!y!$\projright$! !$\projleft$!y!$\projright$! v}).
  { intros [y y_p]. exact (!$\uparrow$!y +Z !$\uparrow$!y). } (* translation of !$\lambda$!y. !$\uparrow$!y + !$\uparrow$!y *)
  refine (let double: Pack (!$\forall$!(y:Z | True), {v:Z | !$\addZRel$! !$\projleft$!y!$\projright$! !$\projleft$!y!$\projright$! v})
           := ltac:(fun_to_pack double_def) in _).
  (* Translation of the case expression *)
  generalize dependent i; induction xs as [ | x xs' IH_xs' ]; intros.
  - exfalso; solver.
  - destruct (i ==? 0) eqn:?.
    + assert (!$\ltbZRel$! 10 x true) by exact (proj2_sig (getDouble (exist _ x ltac:(solver)))).
      exact (subCast _ _ (double.(def) !$\uparrow$!x) ltac:(solver)).
    + exact (IH_xs' ltac:(solver) (i-1) ltac:(solver)).
Defined.
  \end{rocqcode}
  \Description{Translation of getDouble}
  \caption{Translation of \hask{getDouble}}
  \label{fig:trans-getDouble}
\end{listing}

\paragraph{Translation of expressions}

The preliminary \rocq{destruct}s decompose \rocq{xs} and \rocq{i} into unrefined
values and refinement proofs.
This decomposition is always done on dependent pairs, and appears in \cref{fig:transfunc}.

Then, we introduce \rocq{double_def}, the refined version of \hask{double}.
We use a tactic \rocq{assert} to add \rocq{double_def} in the proof context,
its definition is created with the tactics between brackets.
From \rocq{double_def}, we create the whole pack by instantiating a relation from
the refined definition of \rocq{double}.
The tactic \rocq{refine} is like \rocq{exact} but accepts holes to be filled later.
Later uses of \rocq{double} use the pack, like \rocq{double.(def)} for the \haskell call \hask{double x}.
In general, a local function might be passed around as a first-class value,
for instance given as argument to a higher-order function, which makes the creation of a pack necessary.

If the value bound by a let has a subset type, the translation is different.
We still use \rocq{assert}, but instead of creating a pack, we decompose the
dependent pair once created.

The translation of the pattern matching, \hask{Nil} branch and conditional are as described in \cref{subsec:reffunc}.
The first branch shows how a proof hint gets translated: we assert the translated proposition.
The evidence is extracted directly as the second element of the dependent pair
returned by the application of \rocq{getDouble}.
Formally, \[
  \trLhexpCq{\annlhqmark\lhtm{\lhtm_h}{\lhtm_p}} \doteq
  \tac{assert}\ (\utrLhtmCqprop{\lhtm_p})\ \tac{by}\ \tac{exact}\ (\projTwo{\trLhtmCq{\lhtm_h}}).\ \trLhexpCq{\lhtm}
\]

\paragraph{Translation of refinements}

The tactic \rocq{exact} is the interface between the translation of structural
expressions as tactics and as refinements: the $\lambda$-terms
produced by the translation $\utransName$ are inserted in the proof script with \rocq{exact}.
Formally, for any $\lhtm$, $\trLhexpCq\lhtm \doteq \tac{exact} (\trLhtmCq\lhtm)$.

The translation of refinements is mostly straightforward
and follows the structure of the terms.
The main difficulty is the translation of recursive calls.

The recursion \hask{get xs (i-1)}
is translated as a call to the induction hypothesis \rocq{IH_xs'} introduced by the
\rocq{induction} tactic.
This hypothesis is already instantiated for \rocq{xs'}, so we apply it to \rocq{i-1}
(obtained by simplification of \rocq{!$\projleft$! !$\uparrow$!i -Z !$\uparrow$!1 !$\projright$!}).
It is necessary to generalize \rocq{i} before calling \rocq{induction};
otherwise, \rocq{IH_xs'} gets instantiated to \rocq{i} as well.

The induction hypothesis is created in a context in which the dependent pairs of
the arguments are split, so \rocq{IH_xs'} is applied to refinement proofs along with the unrefined values:
the first proof obligation is for \rocq{xs'} and the second for \rocq{i-1}; both are discharged automatically.
What makes the translation of recursive calls complex in general is the necessity to
\textit{(1)} link a recursive call with the correct induction hypothesis and
\textit{(2)} find the correct subset of arguments that still need to be instantiated in that induction hypothesis.

Non-recursive calls $f\ap\lhtm_1\dots\lhtm_n$ are translated recursively
on $\lhtm_1$ to $\lhtm_n$.
We use the name $\defnmInst f$ if $f$ is a pack (that is, $f$ is local, like
\rocq{double}) and $f$ for a top-level function (like \rocq{doubleThm}).

\paragraph{Subsumption casts}\label{sec:casts}

Subtyping annotations are translated to a custom \rocq{subCast} operation.
Its arguments are the original refinement, the desired refinement, the term to cast
and the proof of the implication, given by \rocq{solver}.
For subset types, we define it as follows:
\begin{minted}{coq}
  subCast (!$\forall$!(!$x_s$!:!$\lhtyp$!), !$p_s$!) (!$\forall$!(!$x_t$!:!$\lhtyp$!), !$p_t$!) !$e$!
  !$\doteq$! exist (!$\forall$!(!$x_t$!:!$\lhtyp$!), !$p_t$!) (proj1_sig !$e$!) ltac:(pose proof (proj2_sig !$e$!); solver)
\end{minted}
The cast projects $e$, then injects it back to the expected refined type.
The solver proves the verification condition: assuming the refinement of $e$ (\rocq{proj2_sig !$e$!}),
it creates a proof of $p_t \subst{x_t}{\texttt{proj1\_sig}\ap e}$.

Subsumption on packs separately casts the four elements of the pack:
two unrefined (the graph relation and its functionhood lemma)
and two refined (the definition and the lemma relating the definition and the graph relation).
The first two are unaffected by the cast, being unrefined.
The lemma relating the definition and the graph relation is proven again
using the original lemma.
The refined definition is morally translated by casting the arguments
contravariantly and the function body covariantly (\textit{per}~\cite{sekiyama17}).
The actual implementation, described in~\appendixref{app:pack}, is more complex because of the representation
of function arguments as dependent lists inside packs.

\section{Properties of the Translation}\label{sec:properties}
Our translation is sound,
in the sense that it maps a proof of false in \lh to a proof of false in \coq.
That is, \lava lets \lh use \coq as a proof checker, via translation.
This check increases our confidence in \lh, but our
tool is not complete in the sense
that a well typed \lh program necessarily yields a well-typed \coq program
with provable proof obligations that \solver can resolve.
Such a proof would have to relate an SMT solver to our translation's encoding;
a very ambitious goal! Instead, we evaluate empirically (\cref{sec:eval}).

\subsubsection*{Preservation of evaluation semantics of expressions}

To show that proofs of false are preserved, we first show that our translation preserves the input/output behavior of declared functions\,---\,both
in the graph relations of the unrefined world and in the tactic-generated executable terms of the refined world.
We introduce a conventional call-by-value reduction relation
$\rew\er$ \appendixrefp{def:ssred} for expressions as a small-step operational
semantics, parametrized by a sequence of declarations~$\lhdecls$.

\begin{restatable}{theorem}{operationalmain}
  \label{l:operationalmain}
  Let $\lhdecls$ be a well-formed and elaborated sequence of declarations,
  with all functions reflected.
  Let $\lhrefl f \arrs {\lhexp_f} \in \lhdecls$.
  Let $\typEnv$ be the environment resulting from the elaboration into $\lhdecls$.
  Let $\lhval_1, \dots, \lhval_n, \lhval$ be values (variables, literals and constructors)
  such that $\hchkj\typEnv{\lhval'}\lhref\lhval$ and
  for all $i \leq n$, $\hchkj\typEnv{\lhval_i'}{\lhref_i}{\lhval_i}$
  for some $\lhval'$ and $\lhval_i'$. Then,
  $\ssredn{f \ap \lhval_1 \dots \lhval_n}{\lhval}$ if and only if
  in \coq and under $\trLhdeclCq\lhdecls$:
  $\relnm f \ap \utrLhtmCq{\lhval_1} \dots \utrLhtmCq{\lhval_n} \ap \utrLhtmCq\lhval$ holds.

  Moreover, if each $\trLhtmCq{\lhval_i}$ is of type $\trLhrefCq{\lhref_i}$ and
  $\trLhtmCq\lhval$ is of type $\trLhrefCq\lhref$ in \coq,
  the two properties above are equivalent to
  $\proj{\trLhtmCq{f \ap \lhval_1 \dots \lhval_n}} = \utrLhtmCq\lhval$.
\end{restatable}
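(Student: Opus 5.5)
The proof splits into two equivalences. The first links reduction to the graph relation: $\ssredn{f\ap\lhval_1\dots\lhval_n}{\lhval}$ iff $\relnm f\ap\utrLhtmCq{\lhval_1}\dots\utrLhtmCq{\lhval_n}\ap\utrLhtmCq\lhval$. The second links the graph relation to the refined term, using the coherence lemma $\unrefproof f$ generated in \cref{fig:transfunc}. For the first, I would not reason about $\lhexp_f$ directly. Instead I would go through the function paths $\mkGraph(\lhexp_f;x_1,\dots,x_n)$.

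The central auxiliary lemma is \emph{path adequacy}. For closed argument values, $f\ap\seq\lhval$ reduces in one unfolding step, plus the evaluation of matches and conditionals, to the final refinement $\lhtm\sigma'$ of exactly one path $(\sigma_\params,\sigma,\lhtm)$. Here $\sigma'$ is the substitution obtained by matching $\seq\lhval$ against $\codom(\sigma_\params)$, and the matches in $\sigma$ are satisfied under $\sigma'$. Conversely, any path whose patterns and matches are satisfied by $\sigma'$ is the one taken. This holds because $\mkGraph$ only collapses pattern matching and substitutes away lets. Lets are sound to inline because evaluation is call-by-value and $\er$ is pure. Proof hints are erased by reduction, and $\dead$ branches are never reached. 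That last point follows from the typing hypotheses, which give well-typedness of the arguments. The lemma is proved by induction on the structure of $\lhexp_f$, following the definition of $\mkGraph$.

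With path adequacy in hand, I prove the first equivalence by strong induction. For ($\Rightarrow$), induct on the length of the reduction sequence, generalised to any declared reflected function so that calls to earlier declarations are covered. Path adequacy selects the path. Each function or operator call in $\lhtm\sigma'$ evaluates in strictly fewer steps, so the induction hypothesis gives the corresponding $\relnm g$ facts for the fresh result variables in $\frApp$. These discharge the hypotheses $\hypFromAmQ[\forall]{\cdot}$ in $\altInGraphRel(f,p)$. The equalities $\utrLhtmCq{\lhtm_m}==\utrLhtmCq{\lhtm_p}$ are discharged by the matches that were taken. Two facts about $\utrLhtmCqName$ are needed here: it commutes with value substitution, and injections, subsumptions and projections vanish on values. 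The second follows from \cref{def:utranstm}, since $\proj{\cdot}$ of an $\keyw{exist}$ reduces, and $\cqeqb$ agrees with $\lheqb$. For ($\Leftarrow$), induct on the derivation of $\relnm f$. Each constructor instance $f^p$ gives a path with satisfied matches and relation premises of smaller height. Applying the induction hypothesis yields reductions for the subcalls, and path adequacy assembles them into a reduction of $f\ap\seq\lhval$.

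For the ``moreover'' part, the hypothesis that each $\trLhtmCq{\lhval_i}$ has type $\trLhrefCq{\lhref_i}$ lets me instantiate $\unrefproof f$ at these arguments with $v:=\utrLhtmCq\lhval$. This gives $\proj{f\ap\seq{\trLhtmCq{\lhval_i}}}=\utrLhtmCq\lhval \lequiv \relnm f\ap\seq{\proj{\trLhtmCq{\lhval_i}}}\ap\utrLhtmCq\lhval$. It then remains to show $\proj{\trLhtmCq{\lhval_i}}=\utrLhtmCq{\lhval_i}$ for values. This is a small lemma by cases on the value form, using the clause $\utrLhtmCq{\lhproj\lhtm}=\proj{\trLhtmCq\lhtm}$ and the fact that injections and subcasts preserve the first projection (\cref{sec:casts}). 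I also need $\trLhtmCq{f\ap\seq\lhval}=f\ap\seq{\trLhtmCq{\lhval_i}}$, which is immediate for a top-level $f$.

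\textbf{Main obstacle.} I expect the main difficulty to be path adequacy, together with the commutation of $\utrLhtmCqName$ and $\hypFromAmName$ with substitution. The freshness bookkeeping of $\frApp$ is the delicate part, and so is showing that the existential and universal hypothesis orderings ("correct scope order") line up with call-by-value evaluation order. I would first try to state both as a single invariant over a generalized $\frApp$ that maps already-evaluated calls to their values.
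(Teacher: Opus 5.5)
Your overall architecture is reasonable: function paths, induction on the reduction for $\Rightarrow$ and on the relation derivation for $\Leftarrow$, and $\unrefproof f$ together with $\proj{\trLhtmCq{\lhval_i}}=\utrLhtmCq{\lhval_i}$ for the ``moreover'' part. That last lemma fills a step the paper leaves implicit.

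The gap is in the justification of path adequacy, specifically let-inlining. Call-by-value plus purity does not license replacing $\lhlet x \lhref {\lhexp_1}{\lhexp_2}$ by $\lhexp_2\subst x {\lhexp_1}$. Call-by-value is exactly the strategy where the two differ: when $\lhexp_1$ does not reach a value and $x$ does not survive into the taken path. That happens if $x$ is unused, or used only in a proof hint, which $\mkGraph$ drops; this is the standard Liquid Haskell pattern. The constructor $f^\branch$ then has no premise about the calls in $\lhexp_1$. So in your $\Leftarrow$ direction the derivation gives you nothing with which to evaluate $\lhexp_1$, and your final ``assembly'' step fails.

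Concretely, take a reflected $h$ with the signature of \texttt{get} and body $\lhlet{y}{\refTT{\Int}}{\texttt{get}\ap xs\ap i}{0}$. Here $\mkGraph$ yields the single constructor $\forall\, xs\ i,\ \relnm{h}\ap xs\ap i\ap 0$. Hence $\relnm{h}\ap\unrefnm{\texttt{Nil}}\ap 0\ap 0$ holds, while $h\ap\texttt{Nil}\ap 0$ gets stuck on $\dead$ inside the let.

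Your appeal to the typing hypotheses to exclude $\dead$ fails for a related reason. Elaboration uses \TSRefAx and never checks refinements, as the paper stresses. So $\texttt{get}\ap\texttt{Nil}\ap 0$ is well elaborated and runs into the $\dead$ branch, and path adequacy as you state it (``reduces to the final refinement of exactly one path'') is false there.

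What is missing is a totality hypothesis: every closed subcomputation of a declared body reduces to a value. The paper relies on exactly this as a global assumption: reduction terminates, and closed normal forms are values (\cref{def:ssred}). It feeds this through confluence into \cref{l:caseify}. You must invoke it explicitly, both for inlining lets and for excluding $\dead$, since neither call-by-value nor the refinement-blind typing hypotheses provide it. Alternatively, state path adequacy conditionally and restrict $\Leftarrow$.

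A second point: $\rew\er$ is closed under all contexts, not a fixed call-by-value strategy. So in $\Rightarrow$, two of your steps rely on uniqueness of normal forms, which is precisely the paper's confluence lemma:
\begin{itemize}
\item extracting shorter sub-reductions from an arbitrary interleaved reduction sequence;
\item identifying the value of the path's final refinement, with inlined calls, with $\lhval$.
\end{itemize}

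With these ingredients made explicit, your path-adequacy lemma is a legitimate, more direct substitute for the paper's $\caseify$ detour. Your induction on the relation derivation for $\Leftarrow$ is also cleaner than the paper's inversion plus functionality.
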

\begin{proof}[Proof sketch]
  By induction.
  For the additional property, we use the coherence lemma $\unrefproof f$ (\cref{sec:transfunc})
  from the translation output.
  Since we assume that the output typechecks, this lemma is verified.
  We must assume that the values are of the right translated refined type in
  \coq because our type system does not verify refinements.
  The full proof is in \appendixref{sec:app-operational}.
\end{proof}

\noindent
We restrict this theorem to the case where functions is reflected
(any function can be reflected so long as it does not let-bind higher-order variables).
For unreflected functions, the graph relation is defined in terms of
the refined function, itself defined with tactics.
Any meta-theoretical proof depending on the output of these tactics would be challenging.

\subsubsection*{Preservation of the models of types}

We show that our translation
preserves the semantics of types.
To do so, we interpret both $\er$ and \coq types in the same model and show that
the translation preserves the type's model.
For \coq types, we only need to define models for types in the image of the translation, that is,
instances of the typeclass $\fClass$,
as well as
refinement (or subset) types of the form $\forall (x : \cqref), \cqref$ and $\refT x \cqtyp \cqprop$, where
$\cqtyp$ is a base type and $\cqprop$ a predicate coming from the
translation of a refinement as in \cref{def:utrLhtmCqprop}; we have
$\cqprop\Rightarrow\wffnm\lhTC \ap x$ if $\cqtyp = \unrefnm\lhTC$.

We give the standard interpretation for refinement types \cite[\S~5]{knowles10}.
We interpret them as the set of terms $t$ such that $t$ is well-typed in the
underlying simply-typed system, and in the case of a refined base type with
refinement $\lhtm$ or $\cqprop$, such that $\lhtm$ or $\cqprop$ instantiated by $t$ is true.

We write $\lhtpint\cdot$ for the interpretation of $\er$ types
and $\cqtpint\cdot$ for the interpretation of \coq types.
The $\er$ declaration environment $\lhdecls$ gives datatype and function definition
and is used for the reduction relation, as in \cref{l:operationalmain} and for typechecking.
The assignment $\rho$ maps variables to $\lambda$-calculus values.
The complete definitions and proofs are in \appendixref{app:models}.

\begin{restatable}{theorem}{samemodel}
  \label{l:same-model}%
  Let $\lhref$ be a refinement type in $\er$, $\lhdecls$ be a $\er$ declaration
  environment such that $\lhref$ is well-formed under $\lhdecls$.
  For any assignment $\rho$, $\lhtpint\lhref = \cqtpint{\trLhrefCq\lhref}$.
\end{restatable}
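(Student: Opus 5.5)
The plan is to prove \cref{l:same-model} by structural induction on the well-formed refinement type $\lhref$, generalizing over the assignment $\rho$. This generalization is needed because the codomain of an arrow is interpreted under an extended assignment.

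Both interpretations live in a common $\lambda$-calculus model: terms of the underlying simple type, filtered by the truth of the refinement. The first step is therefore to fix the erasure of both sides to this model. $\er$ terms are erased via $\trLhLc{\cdot}$ and \coq terms via $\trCqLc{\cdot}$. The erasure of $\trLhrefCq{\lhref}$ must have the same underlying simple type as the erasure of $\lhref$. Here $\Int$ maps to $\cqInt$, datatypes $\lhTC$ map to $\unrefnm\lhTC$, and packs erase to the function they carry. This is a routine check on \cref{def:trtypes,def:utransreftype}.

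\textbf{Base case $\refT x \lhtyp \lhtm$.}
\begin{itemize}
\item For a builtin $\lhtyp$, $\trLhrefCq{\lhref}=\refT x {\trBuiltin\lhtyp}{\utrLhtmCqprop\lhtm}$. We must show that, for every value $t$, $\lhtm\subst x t$ evaluates to $\true$ under $\rho$ iff $\utrLhtmCqprop{\lhtm}\subst x t$ holds. I would prove an auxiliary lemma by induction on $\lhtm$, following the clauses of \cref{def:utrLhtmCqprop,def:utranstm}:
  \begin{itemize}
  \item Logical connectives commute with $\refToPropName$.
  \item Equality goes through $\cqeqb$ and $\IsTrue$.
  \item Each call $f\ap\lhtm_1\dots\lhtm_n$ is replaced by its result variable $\frApp(\cdot)$ together with the existential hypothesis $\exists y,\relnm f\ap\dots\ap y$. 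Correctness of this step is exactly \cref{l:operationalmain}: $f\ap\seq\lhval$ reduces to $\lhval$ iff $\relnm f\ap\seq{\utrLhtmCq{\lhval}}\ap\utrLhtmCq\lhval$. Combined with functionhood ($\functrel f$), the existential witness is unique, so it coincides with the evaluated result.
  \item Projections and injections are erased on both sides.
  \end{itemize}
\item For $\lhtyp=\lhTC$ we additionally have the conjunct $\wffnm\lhTC\ap x$. We show by an inner induction on the value $t$ that $\wffnm\lhTC\ap t$ holds iff $t$ is a well-typed refined constructor tree in $\er$. This unfolds the \texttt{Fixpoint} of \cref{fig:transdata}: each branch combines the constructor's conclusion refinement with the argument refinements $\cqtm_i^j$ through the substitution $\sigma$. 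Each of these is handled by the same auxiliary lemma, and the recursive occurrences are handled by the inner induction hypothesis.
\end{itemize}

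\textbf{Arrow case $\arr$.} The $\er$ interpretation is the set of $t$ such that for all $u\in\lhtpint{\lhref_x}$, $t\ap u\in\lhtpint{\lhref}[\rho[x\mapsto u]]$. The \coq interpretation of $\fClass(\forall x{:}\trLhrefCq{\lhref_x},\trLhrefCq\lhref)$ is defined through the $\defnmClass$ field. Using the induction hypothesis on $\lhref_x$ under $\rho$, and on $\lhref$ under every extension $\rho[x\mapsto u]$, the two sets coincide.

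One subtlety is that the codomain was elaborated with $\lhproj x$ substituted for $x$ (rule \WFTFun). On the \coq side $\proj{\cdot}$ is \texttt{proj1\_sig}, and on packs it is the projection to the relation. So I need a small lemma: interpreting $\lhproj x$ under $\rho[x\mapsto u]$ agrees with interpreting the erased projection of $u$. For higher-order $x$, this relies on the coherence field $\unrefproofClass$ of the pack, which relates $\defnmClass$ and $\relnmClass$.

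\textbf{Main obstacle.} I expect the hardest step to be the refinement lemma in the presence of the result-variable map $\frApp$. The extracted hypotheses must be placed in correct scope order, and nested calls must be shown to produce witnesses that coincide with the actual results. This requires threading functionhood through \cref{l:operationalmain}. That theorem is stated only for reflected functions and values of the right translated types, so I would first try to restrict to that case and argue that arguments of calls inside refinements are always in the refined domain by well-formedness.
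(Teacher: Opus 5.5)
Your induction skeleton (structural induction on $\lhref$ generalizing $\rho$, base case split into builtin and datatype, arrow case by the induction hypothesis) is the paper's. The gap is the key lemma in the base case. You set up the erasures $\trLhLc{\cdot}$ and $\trCqLc{\cdot}$, but you then argue as if $\cqtpint{\refT x \cqtyp \cqprop}$ were Coq truth of $\cqprop$. It is not: it is $\stlc$-evaluation of $\trCqLc{\cqprop}$. The erasure sends each extracted hypothesis $\exists z, \relnm f\ \seq\cqtm\ z \land \cqprop'$ to $\lclet z {f\ \seq{\trCqLc\cqtm}}{\trCqLc{\cqprop'}}$, and $\stlc$ reduction evaluates this by unfolding the same $\er$ definition of $f$. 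So the graph relations never enter the model. The base case follows from the purely syntactic round trip $\trCqLc{\utrLhtmCqprop\lhtm} = \lhtm$, up to let-binding the result variables of $\frApp$ (\cref{l:tolc_correspond}). It needs neither \cref{l:operationalmain} nor $\functrel f$. The same holds in the arrow case: both erasures drop projections ($\trLhLc{\lhproj\lhtm}\doteq\lhtm$, $\trCqLc{\proj x}\doteq x$), and $\cqtpint{\fClass(\ldots)}$ is by definition the interpretation of the underlying dependent arrow. Elements of the model are $\stlc$ values, not Coq packs, so there is no $\unrefproofClass$ field to invoke. The case is just the induction hypothesis.

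Under your reading of ``holds'' as truth of the Coq relations, the statement is false, and your proposed repair fails. \cref{l:operationalmain} assumes all functions are reflected, which \cref{l:same-model} does not. For an unreflected $f$, $\relnm f$ is built by \getRel from the refined function, so it is empty outside the refined domain, while the $\er$ call may still reduce to a value. Even with reflection, partiality breaks the correspondence under negation. Take \texttt{get} reflected and the refinement $\lhnegb(\mathtt{get}\ \mathtt{Nil}\ 0 \lheqb v)$. It is well-formed, because \WFTBase only checks refinements in the simple type system. Its evaluation gets stuck in the \dead{} case for every $v$, so $\lhtpint{\cdot}$ is empty. Yet $\cqnegp(\exists z, \relnm{\mathtt{get}}\ \mathtt{Nil}_u\ 0\ z \land \ldots)$ is provable for every $v$, since $\relnm{\mathtt{get}}$ has no $\mathtt{Nil}_u$ constructor. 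This example also refutes your claim that well-formedness places call arguments in the refined domain. The paper's erased model avoids all of this because $\trCqLc{\cdot}$ gets stuck exactly where $\er$ does. Your inner induction on constructor trees for $\wffnm\lhTC$ is fine. It is actually more careful than the paper, which appeals to the outer induction hypothesis for constructor argument types that, for recursive constructors, are $\lhTC$ again.
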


\Cref{l:operationalmain,l:same-model} state that the translation
preserves the input-output relation
at the level of values and at the level of types (including refinements).
In particular, \lh proofs are functions of unit type whose output is trivial,
but the statements proved will be equivalent in \coq.

\subsubsection*{Unprovability of contradiction}

Finally, we confirm that our translation can act as a certificate
generator for \lh.
If we manage to prove false in \lh, we want to make sure that typechecking will
not go through in the \coq output (assuming \coq's consistency).

\begin{theorem}[Consistency]
  \label{lem:consistency}%
  $\nexists \lhexp.\ \emptyset \models_\text{Rocq} \trLhdeclCq{\lhdef f {\refT x \Unit \false} \lhexp}$.
\end{theorem}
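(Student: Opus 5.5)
The plan is to argue by contradiction. Suppose some $\lhexp$ makes $\trLhdeclCq{\lhdef f {\refT x \Unit \false} \lhexp}$ typecheck in \coq from the empty environment. By \cref{fig:transfunc}, with $n=0$ there are no parameters to destruct, and since the return type is a refinement of the unit, $f$ is a theorem: no graph relation and no pack are generated. The only emitted declaration is therefore
\[
\texttt{Definition } f : \trLhrefCqEmp{\refT x \Unit \false}.\ \trLhexpCq{\lhexp}\ \texttt{Defined.}
\]
If it is accepted, we obtain a closed \coq term $t$ of type $\trLhrefCqEmp{\refT x \Unit \false}$ in the empty context.

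Next we compute this type. By \cref{def:trtypes}, it is $\refT x \cqUnit {\utrLhtmCqprop{\false}}$. The refinement $\false$ is a literal, so it is not a connective, a negation, or an application. It therefore falls into the ``otherwise'' case of \cref{def:utrLhtmCqprop}. Since $\frApp$ contains no function calls for a literal, $\hypFromAmQ[\exists]{\false}$ is the empty conjunction. Hence $\utrLhtmCqprop{\false}$ is, up to a trivial $\True\land$ conjunct, $\IsTrue(\utrLit\false) = \IsTrue(\cqfalse)$, which unfolds to $\cqfalse = \cqtrue$.

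From $t$ we then build a proof of $\False$. Taking $\texttt{proj2\_sig}\ap t$ gives a closed proof of $\cqfalse=\cqtrue$, and discriminating the two distinct constructors of $\cqBool$ turns it into a closed proof of $\False$. This contradicts the consistency of \coq, which is our standing assumption. Note that this argument does not use \cref{l:same-model} at all; that theorem would give the semantic reading that $\lhtpint{\refT x \Unit \false}$ is empty, but the syntactic route is more direct.

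The step I expect to be the main obstacle is justifying that accepting the script really yields a closed inhabitant in the empty environment, and that tactics such as \solver and \tac{exfalso} cannot smuggle in axioms. I would address this by reading $\models_\text{Rocq}$ as ``the generated declarations are accepted by the kernel with no additional axioms or admitted lemmas.'' Under that reading, any tactic output, including subCast obligations discharged by \solver and unreachable branches closed with \tac{exfalso}, is an ordinary kernel-checked term. The refined library definitions used, such as $\fClass$, subCast and the operator relations, are themselves definitions rather than axioms, so they cannot introduce inconsistency. I would state these as explicit assumptions on the library, and then the argument above goes through.
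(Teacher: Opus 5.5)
Your proof is correct and follows the paper's argument, only in more detail. An accepted translation gives a closed inhabitant of $\trLhrefCqEmp{\refT x \Unit \false}$, whose refinement $\utrLhtmCqprop{\false}$ is, up to an empty conjunction, $\IsTrue(\cqfalse)$, so its second projection yields $\False$ and contradicts the consistency of \coq.

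One side assumption should be adjusted: the support library is in all likelihood not axiom-free. The \texttt{po} field of the projection class amounts to proof irrelevance for arbitrary subset types, and the automation explicitly uses proof irrelevance. So the hypothesis you should state is the consistency of \coq extended with the library's axioms. This is what the paper's appeal to consistency implicitly relies on, and it still holds for proof irrelevance.
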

\begin{proof}
  Suppose there is a well-formedness derivation for $f$ (in the empty context).
  Then, $\lhexp$ can be translated into a definition of type $\refT x \Unit \false$.
  This would give rise to a proof of $\false$\,---\,a contradiction, because Rocq's type theory is consistent.
\end{proof}

\section{Implementation \& the Proof Automation Tactics}\label{sec:impl}\label{subsubsec:oracle}
The implementation of the translation (\cref{sec:translation}) is done in $\sim$\hsLines{}K \sloc of \haskell.
\lava takes as input a program typechecked by \lh and outputs a \coq file that depends on
a) a \coq library ($\sim$350 \sloc) containing, for example, definitions for the pack typeclasses and
b) proof automation tactics ($\sim$6K \sloc).
In the rest of the section, we describe the development of the tactics.

The \solver tactic discharges the proof obligations introduced by type casts.
\Cref{fig:solver} describes the implementation of \solver in four levels;
each level composes a \emph{preprocessor tactic} and a \emph{solver definition} into a \emph{solving tactic}.
Lower level correspond to lower complexity, with tactics of higher level
subsuming and making use of lower level ones.

\begin{table}
  \begin{tabular}{|c|c|c|c|}
    \hline
        & Solving Tactic & Preprocessing Tactic & Solver Definition\\
        \hline
    Level 1 & $\solveZero$ & $\idTac$ & $\fastDone$  \\
    \hline
    Level 2 & $\quickSolve$ & $\quickSimpl$ & solve repeat $\simplLoop$  \\
    \hline
    Level 3 & $\oracle$ & $\cleanup$ &   $\finish$  \\
    \hline
    Level 4 & $\solver$ & $\preprocessor$ & first [$\quickSolve$ | shelve]  \\
    \hline
  \end{tabular}
  \caption{Solving and preprocessing tactics. Higher level tactics use lower level ones.}
  \label{fig:solver}
\end{table}

The \emph{preprocessor tactics} try to simplify proof obligations,
only succeeding if provability of the goals is preserved by the transformations.
The simpler kind unfolds, applies, and rewrites by lemmas and definitions from the
library or the generated files.
The more complex kind creates and discharges subgoals, relying on solving tactics
from lower levels.
The preprocessor tactics listed in \cref{fig:solver} combine many simpler preprocessor tactics, each
corresponding to a single reasoning step; \cleanup and \preprocessor can invoke
dozens of smaller steps in each successful call.

The \emph{solving tactics} fail if they are unable to discharge the goal.
They culminate in \solver, which uses most of the other definitions and tactics in the library.
Unlike the other solving tactics, \solver shelves subgoals it cannot prove,
which is useful to inspect incomplete proof scripts manually.
The class of solving tactics also contains numerous tactics for proving the lemmas generated by the translation;
they reuse many of the definitions also used by \solver.

Let us give more details on each individual level.

On \emph{level 1}, \tac{idtac} is a noop; \fastDone uses \tac{lia} and
\tac{auto} (with a hint library).

On \emph{level 2}, \quickSimpl uses simple rewriting lemmas to clean up the goal.
The \simplLoop tactic mostly relies on rewriting, unfolding and solving with hints, simplifying with
basic \coq tactics (\eg \tac{simpl}, \tac{split}).
It runs rule-based simplification steps (1-5 \sloc each) to simplify specific syntactic forms like quantifiers, logical connectives, or equalities.

On \emph{level 3}, \cleanup uses \quickSimpl and the lemmas relating refined functions and their graph relations.
It works extensively with graph relations, by mapping refined terms to their
corresponding graph terms, reasoning about result variables, and saturating the context by reconstructing
the refinement witnesses of the final outputs of graph relations.

On \emph{level 4}, \preprocessor extends \cleanup with extra simplification
steps which can require proving difficult subgoals with the \oracle tactic.

Besides \solver, the most powerful, but time-consuming, tactics are the ones relating a reflected definition $f$
to its graph relation $\relnm f$, used in particular in the coherence theorems $\unrefproof f$ (\cref{sec:transfunc}).
These tactics employ many preprocessing steps similar to those of \cleanup.
However, the proofs require additional and more aggressive preprocessing steps. 
These steps apply a forward reasoning that might in general get us stuck in the
resolution, but are safe in this context.

Could we replace our automation tactics with existing SMT or other \coq automation?
It would not be so easy:
(1) a lot of preprocessing is required to bring the proof state into a shape well supported by \coq and the tools in its ecosystem;
(2) we need to interleave forward and backwards reasoning steps matching on and modifying both goals and hypothesis; and
(3) the solving process can be very slow (see the benchmarks in~\cref{fig:eval}), so customizing automation enables us to optimize
performance for our goals.
We discuss the use of Sniper and CoqHammer specifically in \cref{sec:related}.

\section{Evaluation}\label{sec:eval}
We evaluate \lava on eight benchmarks, summarized in \cref{fig:eval}.
The first five test specific features of the translation.
The other three come from external sources: a port of \coq's Software
Foundations~\cite{SF} in \lh and the benchmarks of~\cite{vazou17}.

\paragraph{Description of the examples}
The \SafeIndexing example (\cref{sec:overview})
tests most features of \lava.
It contains a non-trivially refined inductive data type and tests the translation of subtyping
to casts in \coq, the translation of \haskell's
integer types to \coq's \texttt{Z} and of theorems with proof hints.
It also contains functions with preconditions, postconditions, reflection annotations, and higher-order parameters.

\PeanoNats evaluates the translation of
theorems to \coq.
It defines the inductive data type of Peano naturals and reflected functions for
addition, multiplication, subtraction, inequality and equality.
It also contains theorems stating
that zero is the left and right neutral element of addition,
associativity of addition,
distributivity of right-multiplication with addition,
and that $\forall m, n: (m + n) - n = m$.
This example pushes the limits of our first-order proof automation in \coq.

\RBinsToBins tests the translation of a refined datatype.
It defines the refined data type of binary numbers without leading zeros
and a function translating from it to an unrefined version of the data type.
Crucially, the refinement of one of this datatype's constructors depends on another constructor. Supporting this example motivates our translation of refined datatypes (from \cref{subsec:intlists}). 

\HigherOrder tests the translation of $\lambda$-abstractions and properties of function composition.

\Lists tests the list operations append, map, concatMap, reverse, zip, flip, zipWith, unzip and take, and various theorems relating them.

\SoftFound is a \lh port of the first section of Software Foundations~\cite{SF} and the first five definitions/theorems of the second and third sections each. 
The original \coq file is $356$~\sloc long and was ported to $437$~\sloc of \lh.

\ClassLaws tests the monad and monoid laws on List, Maybe and Identity datatypes, monomorphized for integers.
\Foldr encodes the universality of the foldr function, as described in~\cite{foldr}.
These last two examples are taken from the benchmarks of~\cite{vazou17}.
Their complete benchmarks suite also contain class laws for the functor and applicative classes
that use features, such as partial applications and higher-order datatypes, that are
unsupported by the translation.

\begin{table}
\begin{tabular}{|l|r|r|r|r|r|r|}
	\hline
	\multirow{2}{*}{Example}
	& \multicolumn{1}{c|}{\texttt{LH}}
        & \multicolumn{1}{c|}{\coq}
        & \multicolumn{1}{c|}{Blowup}
	& \multicolumn{1}{c|}{Translation }
	& \multicolumn{1}{c|}{Verification}
	& \multicolumn{1}{c|}{Verification} \\
	&
	\multicolumn{1}{c|}{\sloc}
	& \multicolumn{1}{c|}{\sloc}
        & \multicolumn{1}{c|}{~}
	& \multicolumn{1}{c|}{Time}
	& \multicolumn{1}{c|}{Time}
	& \multicolumn{1}{c|}{Time/\sloc}
	\\ \hline
	\SafeIndexing &  67 &  1\,474 & 22$\times$ & 0.04 sec & 28 min 38 sec & \verifTimeSafeIndexing \\
	\PeanoNats    &  88 &  1301 & 14.8$\times$ & 0.18 sec & 1 min 0 sec & 0.05 \\
	\RBinsToBins  &  23 &  165 & 7.2$\times$  & 0.04 sec &         7 sec & 0.04 \\
	\HigherOrder  &  22 &  350 & 15.9$\times$  & 0.12 sec &         24 sec & 0.07 \\
	\Lists & 178 & 3\,100 & 17.4$\times$ & 0.37 sec & 7 min 10 sec & 0.14 \\
	\SoftFound    & 437 & 5\,962 & 13.6$\times$  & 0.04 sec & 4 min 4 sec & \verifTimeSoftFound \\
	\ClassLaws & 151 & 1\,722 & 11.4$\times$ & 0.38 sec & 1 min 56 sec & 0.07 \\
	\Foldr  &  37 &  603 & 16.3$\times$  & 0.08 sec &   2 min      16 sec & 0.23 \\
	\hline
	Total & 1003 & 14\,677 & \blowup$\times$  & 1.25 sec & 45 min 35 sec & 0.19\\
	\hline
\end{tabular}
\caption{
  Quantitative Evaluation of \lava.
	\sloc is semantic lines of code.
        Blowup is the ratio between the LH and the generated \coq.
	Verification time is the time to verify the \coq code.
	Verification time/\sloc is the verification time (in seconds) divided by the number of \coq \sloc.
  }
\label{fig:eval}
\end{table}

\paragraph{Summary}

The translated file is \blowup~times longer than the original.
Indeed, for each original \lh declaration, we generate all the elements described in \cref{sec:transDecl}
as well as other lemmas designed to help automation and specialized for the declarations.
Moreover, type translations can be large when using packs or when numerous function calls appear in refinements.

Translation time (from $\er$ to Rocq) is fast (less than 2 seconds in total), LH verification and parsing into $\er$ is similarly fast. 
But verification of the output lasts up to 29 minutes on the \SafeIndexing example.
Indeed, verification times \emph{per line} of \coq vary widely, ranging from
\verifTimeSafeIndexing\ for \SafeIndexing to \verifTimeSoftFound\ for \SoftFound.
We find that \emph{most} of the verification is immediate, but a few hard proof
obligations take up to a few minutes.

The fact that our automation frequently solves hard goals indicate its power,
but for it to take several minutes shows there is room for optimization.
Much of the verification time in hard proofs comes from backtracking.
In those cases, manually proving the crucial steps (after automatic
preprocessing) can substantially reduce verification time.

\section{Related Work}\label{sec:related}
\paragraph{Translations from \haskell to \coq}

\texttt{hs-to-coq}~\cite{spectorzabusky18,breitner18} translates \haskell programs to \coq
and is used to verify a substantial part of \haskell's \textit{containers} library.
This tool differs from ours in two main aspects.
First, source programs lack refinement types---and so have neither dependent pattern matching nor proofs terms, simplifying the translation.
The desired specifications are either inserted manually or derived from QuickCheck annotations and proved by hand.
Second, \texttt{hs-to-coq} supports syntactic partiality of programs by inserting an opaque default value.
While we allow \haskell programs to have missing cases, we require that \lh programs are in fact total.

Our choice to use inductive predicates has its downsides: it makes support for
partial applications difficult and forces the extraction of applications to
hypotheses in the unrefined translation.
Another way to deal with partiality in the input program is to use a monadic translation, like
\citet{dylus18} and \citet{christiansen19} do for translating \haskell (inspired from \cite{abel05}),
or \citet[][Chapter~3]{claret18} and \citet{garrigue22} do for \pl{OCaml}.
Monads could be used to deal with partial \lh programs, that are partial also in the refined world.
This is not a priority because the default semantics of \lh rejects non-provably total programs.

\paragraph{Translation of programs with refinements in interactive theorem provers}

\emph{Stainless}~\cite{hamza19} is a refinement type checker for Scala.
Stainless has a prototype translation to \coq~\cite{lara24} that is not (yet)
covered in a scientific paper.
But while Stainless internally uses refinement types,
this translation does not support them nor dependent arrows.
In Stainless, refinements are written as pre- and post-conditions and assertions on functions,
and this is the specification that the translation supports.
The translated programs are effectively unrefined programs with assertions.
The primary focus of our work instead is to describe the subtle interaction between
unrefined and refined elements, in particular in the context of the embedding of
programs into the logic.
Thus, their embedding is simpler as everything is unrefined.
The translation of Stainless' programs is intended as a backend for verification,
while \lava is too slow for this goal.

Characteristic formulas are another approach of translating
from general-purpose proof assistants to logic:
\citet{chargueraud10} translates \pl{OCaml} programs to logical formulas in \coq,
instead of \coq programs.
It would be interesting to study similar approaches in future work,
in particular also whether one can translate from GOSPEL-annotated \pl{OCaml} programs~\cite{DBLP:conf/fm/ChargueraudFLP19}
using our translation.

\paragraph{Translations of refined inductive datatypes}
\citet{atkey12} define a systematic translation between refined and
inductive datatypes via categorical logic. This differs from our approach of using unrefined inductive datatypes and inductively-defined predicates instead.
\citet{sekiyama15} compare the representations of refined inductive datatypes with unrefined constructors and unrefined inductive
types with refined constructors in \coq. We use the same correspondence to translate inductive datatypes with refined constructors from \lh to inductive datatypes refined by a predicate in \coq.

\paragraph{Mechanization of \lh}

We aim to improve \lh's foundations by producing certificates for programs, using \coq for a semantics-in-translation. 
With a similar goal, \citet{borkowski24} define and formalize a type system that
models \lh and prove it sound in \coq.
However, logical implication of refinements in subtyping is modeled by a
reduction semantics, so the problem of the embedding of terms into the logic and
the relation with the SMT solver is avoided, which is the focus of our paper.
Still, \citeauthor{borkowski24} prove soundness of the type system
(built on this semantic subtyping) and support polymorphism.
Ideally their approach and ours would converge: the type system at the root of
\lava could be mechanized and proved sound in \coq,
while still elaborating terms for a translation in \coq.

\paragraph{Explicit refinement types.}
\citet{ghalayini23} introduce a calculus ($\lambdaert$) that
makes refinement reasoning explicit by internalizing refinements
directly in the core calculus: refinement proofs are first-class proof terms,
and typing does not rely on subtyping.
\toolname shares the goal of making refinement reasoning explicit,
but differs fundamentally in its approach.
Rather than designing a refinement calculus inside type theory,
\toolname starts from \lh’s SMT-based system and translates it to \coq,
reifying implicit subtyping and proof obligations as explicit casts
and proof obligations in the target.
Moreover, $\lambdaert$ operates in a single semantic layer,
whereas \toolname must model the separation between
refined and unrefined worlds that arises in \lh.
Capturing this intertwined semantics is 
central to our contribution and does not arise in $\lambdaert$.

\paragraph{Programming with dependent types in \coq}
\emph{Equations}, by \citet{sozeau19}, is a \coq extension that simplifies writing
functions defined by dependent pattern matching.
Equations accepts incomplete definitions with holes, and presents them to the user as explicit proof obligations.
We could have used Equations to define the refined functions.
We decided to use tactics in the first place as we expected them to be more predictable.
We plan on re-investigating the adoption of Equations for a future extension towards non-structural recursion.
But this is merely a technical choice, as the way refined functions are defined
does not influence the analysis of the semantics of \lh, and only amounts to a local change.
This made switching to Equations a non-priority, further held up by the
necessity to adapt automation accordingly.
In particular, the coherence lemma between a refined definition and its unrefined graph relation
is a complex part of automation that has to be rewritten as part of that change.

Another possibility for writing refined functions in \coq is the \emph{Program}
extension by \citet{sozeau07}.
Program offers implicit predicate subtyping on subset types.
This generally makes generated code look a lot like \lh code, with no explicit
casts, but with less automation and proofs to be resolved by the user.
We do not target Program, as one goal is to make the relation between the
unrefined and refined semantics explicit, and type casts are part of this.

\paragraph{Integrating SMT solvers and hammers into \coq}

The \emph{SMTCoq} project \citep{armand11} provides an SMT backend to \coq
(currently CVC4~\cite{barrett11} and veriT~\cite{bouton09}),
by checking and reconstructing the solver's proof witnesses back into \coq.
Using SMTCoq directly to solve our proof obligations is hard, as a lot of
preprocessing must be done on our side.
Fortunately, \emph{Sniper} \citep{blot21,blot23} extends SMTCoq with many useful transformations of the \coq proof state.
Even though those transformations are designed to be compositional, safe and, certified,
we experienced that the tool is not
mature enough to be used as a replacement for our automation tactic without
significant effort.

Overall, it is hard to know when \texttt{snipe} will succeed or not without deep knowledge of the internals of Sniper.
Subset types and projections often make the tactic fail or loop on simple obligations.
On success, \texttt{snipe} is always slower than our automation tactic on our examples.
This might be improved by selecting and arranging the transformations that are most relevant to us.
A lot of transformations are also peculiar to our translation,
for instance saturating the context with the hypotheses obtained from the graph relations.
Adopting a structured approach to our automation in the style or combined with
Sniper is an ambitious but interesting project.

Using an SMT solver for the resolution of verification constraints means more hope
that a verified property of \lh is automatically provable in \coq.
But this also means generating more definitions and proofs,
for instance to verify properties of the encoding of datatypes in the solver logic,
or translating every inductive proposition to a recursive function in \texttt{bool}.
Sniper gives a way to generate the latter Boolean fixpoint automatically,
but we often failed to make generation work, or sometimes found the generated fixpoint to be wrong.

\citet{czajka18} define a \emph{hammer}~\cite{blanchette16} for \coq.
The hammer uses a heuristic translation into a fragment of first-order logic,
applies HOL ATPs (like E, Vampire or Z3), then tries to reconstruct the proof
into \coq proof terms.
Their translation and proof reconstruction is neither sound nor complete, but the approach works well enough in practice.
We want to avoid making the user have to complete proofs by itself.
Thus, we preferred focusing on predictable proof automation and avoided using
CoqHammer or machine learning-based tools like \emph{Tactician}~\cite{Blaauwbroek2020}.

\section{Conclusion}\label{sec:conclusion}
\toolname translates from \lh to \coq, formally relating SMT-based
refinement typing and foundational type theory.
\toolname turns implicit subtyping explicit through casts;
it models both semantics\,---\,refined and unrefined\,---\,
and generates (so far automatically solvable) proof obligations in \coq.
Our translation provides a new foundation for \lh that correctly accounts
for both refined and unrefined semantics in its SMT encoding.
Our implementation shows that this translation is practically possible.
Polymorphism and other more advanced features merely demand refactoring and engineering;
better automation (and automatic solving use SMT/hammers) is a question of connecting the dots.
With \toolname, we aim to put \lh and refinement typing on a more solid footing on its own and in relation to type theoretic approaches\,---\,while also helping the community understand the subtleties of refinement types.

\bibliographystyle{ACM-Reference-Format}
\bibliography{bib}

\iflongversion
  \clearpage
  \appendix
  \section{Full Type System for \texorpdfstring{$\er$}{lambda-r}}\label{app:typing}
  In this section, we give all well-formedness and typing rules.

We have six kinds of judgement: well-formedness of types and of declarations,
subtyping, type synthesis for refinements, type checking of expressions
(including refinements), and a simple type system for type refinements.
All except subtyping have an elaboration.

The grammar of type environements is given below.
Notice that we have scope annotations on the variables.
We write $\typEnv(x)$ to look up the variable $x$ in $\typEnv$ and
$\dom(\typEnv)$ to get the domain of $\typEnv$.
\[ \begin{array}{rlclll}
  \gramTitle{Typing environment}
    & \typEnv & \Coloneqq & \emptyset
    \mid \typEnv, \bind{\varlg{x}{c}}{\lhref}
    \mid \typEnv, \lhdt{\lhTC}{\seq{(\bind\lhConstr\lhref)}}
    \quad \color{gray}{s \in \{\topSymbol,\locSymbol,\recSymbol\}}
\end{array} \]

To avoid redundancy, in the whole appendix, we do not single out \keyw{if}
conditionals and treat them as any other pattern matching.

\subsection{Well-formedness and subtyping}

\paragraph{Simple Type System for Type Refinements}

We write $\lhsmptyp$ for a simple type, that is either a base type $\lhtyp$
or an arrow $\lhsmptyp \to \lhsmptyp$.
We use the function $\projType\lhref$ to project a refined type into a type $\lhsmptyp$.
It is defined as $\projType{\refT x \lhtyp \lhtm} \doteq \lhtyp$ and
$\projType\arr \doteq \projType{\lhref_x} \to \projType\lhref$.

\begin{figure}
  \[\begin{array}{c}
    \inference{(\bind{\varlg{x}{s}}{\lhref}) \in \typEnv
    & s \in \{\locSymbol,\topSymbol\}}
    {\hsmpj \typEnv x {\projType\lhref} {\vararlg{x}{\ar(\lhref)}{s}}}
    \quad \inference{
      \hsmpj \typEnv {\lhtm_1} {\lhsmptyp' \to \lhsmptyp} {\ann{\lhtm_1}}
        & \hsmpj \typEnv {\lhtm_2} {\lhsmptyp'} {\ann{\lhtm_2}}
      }{
        \hsmpj \typEnv {\lhtm_1 \ap \lhtm_2} \lhsmptyp {\ann{\lhtm_1} \ap \ann{\lhtm_2}}
      }
      \\[15pt]
      \inference{}{\hsmpj \typEnv \lhlit {\projType{\csttype\lhlit}} {\lhlit}}
      \quad\inference{\exists \lhTC \in \dom(\typEnv).\,(\bind\lhConstr\lhref) \in \typEnv(\lhTC)}
      {\hsmpj \typEnv \lhConstr {\projType\lhref} \lhConstr}
      \\[15pt]
      \inference{\hsmpj \typEnv \lhtm \Bool {\ann\lhtm}}
      {\hsmpj \typEnv {\lhnegb \lhtm} \Bool {\lhnegb{\ann\lhtm}}}
      \quad \inference{
        \hsmpj \typEnv {\lhtm_1} {\lhsmptyp_1} {\ann{\lhtm_1}}
      &\hsmpj \typEnv {\lhtm_2} {\lhsmptyp_2} {\ann{\lhtm_2}}
      &\lhsmptyp_1 = \lhsmptyp_2
    }{\hsmpj \typEnv {\lhtm_1 \lheqb \lhtm_2} \Bool {\ann{\lhtm_1} \lheqb \ann{\lhtm_2}}}
    \\[15pt] \inference{
      \projType{\csttype\lhbop} = \lhsmptyp_1 \to \lhsmptyp_2 \to \lhsmptyp
      & \hsmpj \typEnv {\lhtm_1} {\lhsmptyp_1} {\ann{\lhtm_1}}
      & \hsmpj \typEnv {\lhtm_2} {\lhsmptyp_2} {\ann{\lhtm_2}}
    }{
      \hsmpj \typEnv {\lhtm_1 \lhbop \lhtm_2} \lhsmptyp {\ann{\lhtm_1} \lhbop \ann{\lhtm_2}}
    }
  \end{array}\]
  \caption{Simple type system for refinements}
  \Description{Simple type system for refinements}
  \label{fig:simpletypes}
\end{figure}
The rules are in \cref{fig:simpletypes}.
They are standard and similar to the synthesis rules.
The only rule with a non-trivial elaboration is the one for variables.

\paragraph{Well-Formedness of Types}
\label{sec:wf-types}

The substitution of a term in a type is defined as
$\refT x \lhtyp {\lhtm_x} \subst x \lhtm \doteq \refT x \lhtyp {\lhtm_x}$,
$\refT y \lhtyp {\lhtm_y} \subst x \lhtm \doteq \refT y \lhtyp {\lhtm_y \subst x \lhtm}$,
$(\arr[x][\lhref_x][\lhref]) \subst x \lhtm \doteq \arr[x][\lhref_x][\lhref]$ and
$(\arr[y][\lhref_y][\lhref]) \subst x \lhtm \doteq \arr[y][\lhref_y \subst x \lhtm][\lhref \subst x \lhtm]$.

\[ \begin{array}{cc}
  \inference{
    \text{if } \lhtyp = \lhTC\text{, then } \lhtyp \in \dom(\typEnv) \\
    \hsmpj {\typEnv, \bind {\varloc x} \refTT{\lhtyp}} \lhtm \Bool {\ann\lhtm}
  }{
	  \hwfj \typEnv {\refT x \lhtyp \lhtm} {\refT x \lhtyp {\ann\lhtm}}
  }[\!\!\WFTBase]
  &
  \inference{
	  \hwfj \typEnv {\lhref_x} {\ann{\lhref_x}}
    &\hwfj {\typEnv, \bind{\varloc x}{\ann{\lhref_x}}} \lhref {\ann\lhref}
  }{
    \hwfj \typEnv {\arr} {\arr[x][\ann{\lhref_x}][\ann{\lhref}\subst x {\lhproj x}]}
  }[\!\!\WFTFun]
\end{array} \]

\paragraph{Well-Formedness of Declarations}

Well-formedness of declarations takes as input a typing environment with no local variables
and a sequence of declarations.
\[
  \inference{
    \seq{\hwfj {\typEnv, \lhdt{\lhTC}{(\bind{\lhConstr_k}{\ann{\lhref_k}})_{k < i}}} {\lhref_i} {\ann{\lhref_i}}}
    &\seq{\ann{\lhref_i} = \arrs[x_j][\refT{\_}{\_}{\_}][j \leq m_i][\lhref_i']}
    \\ \hwfj {\typEnv, \lhdt\lhTC{(\bind{\lhConstr_i}{\ann{\lhref_i}})_{i \leq n}}} {\seq\lhdecl} {\seq{\ann\lhdecl}}
    & TC, \seq{C_i} \text{ fresh for } \typEnv
  }{
    \hdecl {\typEnv} {\lhdata \lhTC {(\lhConstr_i :: \lhref_i)_{i \leq n}}; \seq\lhdecl}
    {\lhdata \lhTC {(\lhConstr_i :: \ann{\lhref_i})_{i \leq n}}; \seq{\ann\lhdecl}}
  }[\WFDTC]
\]
Rule \WFDTC requires a fresh type constructor $\lhTC$ and fresh data
constructors $\lhConstr_i$ with well-formed, first-order types returning a
refinement of $\lhTC$.
First-orderness is ensured because every argument to a constructor refines a base type.
The rule recursively checks the well-formedness of the subsequent declarations.

\[
  \inference{
    \hwfj \typEnv {\lhref_f} {\ann{\lhref_f}}
    & \ann{\lhref_f} = \arrs
    \\ \hchkj {\typEnv, \bind{\varrec f}{\ann{\lhref_f}}, (\bind{\varloc{x_i}}{\noproj(\lhref_i)})_{i \leq n}} {\lhexp} {\noproj(\lhref)} {\ann\lhexp}
    &\hwfj {\typEnv, \bind{\vartop f}{\ann{\lhref_f}}} {\seq\lhdecl} {\seq{\ann\lhdecl}}
  }{
    \hdecl \typEnv {\lhdefrefl f {\lhref_f} \lhexp; \seq\lhdecl}
    {\lhdefrefl f {\ann{\lhref_f}} {\ann\lhexp}; \seq{\ann\lhdecl}}
  }[\WFDDef]
\]
Rule \WFDDef checks the body of $f$ using its ``maximal'' function type\,---\,the
return type $\lhref$ must not be a function type.
That is, we forbid point-free definitions and require complete $\eta$-expansion.

We use a function $\noproj(\lhref)$ that removes projections in the refinements of
$\lhtm$, if the projections are not around variables bound in $\lhref$.
For instance, $\noproj(\arr[x][\lhref_x][\refT v \Int {\proj x == \proj y}]
= \arr[x][\lhref_x][\refT v \Int {\proj x == y}]$.
This is used to remove the projections around the parameters $x_i$
that are local variables so considered unrefined in the context and in the return
type $\lhref$.

\paragraph{Subtyping}

Subtyping has two rules: one for base types and one for function types.
The left rule only checks that the base types are equal.
\[ \begin{array}{cc}
  \inference{
  }{
    \hsubj \typEnv {\refT {x_s} \lhtyp {\lhtm_s}} {\refT {x_t} \lhtyp {\lhtm_t}}
  }[\!\!\TSRefAx]
&
\inference{
  \hsubj \typEnv {\lhref_{x_t}} {\lhref_{x_s}}
  & \hsubj {\typEnv, \bind{x_t}{\lhref_{x_t}}} {\lhref_s \subst {x_s} {x_t}} {\lhref_t}
}{
  \hsubj \typEnv {\arr[x_s][\lhref_{x_s}][\lhref_s]} {\arr[x_t][\lhref_{x_t}][\lhref_t]}
}[\!\!\TSArr]
\end{array}\]

\subsection{Typing Expressions}
\label{sec:typingrules}

We have a bidirectional type system where synthesis is done exclusively on refinements
and checking is done on structural expressions,
and always ends with a subtyping rule when done on refinements.
A synthesis rule $\hsynj \typEnv \lhtm \lhref {\ann\lhtm}$ inputs $\typEnv$ and $\lhtm$,
synthesizes the type $\lhref$ and outputs $\ann\lhtm$.
A checking rule $\hchkj\typEnv\lhexp\lhref{\ann\lhexp}$ inputs $\typEnv$, $\lhexp$ and
$\lhref$ and checks that $\lhexp$ has the refinement type $\lhref$, and outputs $\ann\lhexp$.

\begin{figure}
  \[\begin{array}{c}
    \inference{
      (\bind{\varloc x}{\lhref}) \in \typEnv
      & \ar(\lhref) = 0
    }{
    \hsynj \typEnv x \lhref {\lhinj{\vararloc{x}{0}}{\lhref}}}[\!\!\SVarLoc]
    \\[15pt]
    \inference{
      (\bind{\varlg{x}{s}}{\lhref}) \in \typEnv
      & s \in {\locSymbol,\topSymbol,\recSymbol}
      & \text{if } s = \locSymbol \text{ then} \ar(\lhref) > 0
    }{
    \hsynj \typEnv x \lhref {\vararlg{x}{\ar(\lhref)}{s}}}[\!\!\SVar]
    \\[15pt]
    \inference{}{\hsynj \typEnv \lhlit {\csttype\lhlit} {\lhinj{\lhlit}{\csttype\lhlit}}}[\!\!\SLit]
    \qquad\inference{\exists \lhTC \in \dom(\typEnv).\,(\bind\lhConstr\lhref) \in \typEnv(\lhTC)}
    {\hsynj \typEnv \lhConstr \lhref \lhConstr}[\!\!\SData]
    \\[15pt]
    \inference{
      \csttype\lhnegb = \arr
      \\ \hchkj \typEnv \lhtm {\lhref_x} {\ann\lhtm}
    }{
      \hsynj \typEnv {\lhnegb\lhtm}{\lhref\subst{x}{\ann\lhtm}}{\lhnegb\ann\lhtm}
    }[\SNeg]
    \quad \inference{
      \csttype\lhbop = \arr[x_1][\lhref_1][\arr[x_2][\lhref_2][\lhref]]
      \\
      \hchkj \typEnv {\lhtm_1} {\lhref_1} {\ann{\lhtm_1}}
      & \hchkj \typEnv {\lhtm_2} {\lhref_2} {\ann{\lhtm_2}}
    }{
      \hsynj \typEnv {\lhtm_1 \lhbop \lhtm_2}
      {\lhref\subst{x_1}{\ann{\lhtm_1}}\subst{x_2}{\ann{\lhtm_2}}}
      {\ann{\lhtm_1} \lhbop \ann{\lhtm_2}}
    }[\SBin]
    \\[15pt] \inference{
      \hsynj \typEnv {\lhtm_1} {\refT{x_1}{\lhtyp_1}{\lhtm_1'}} {\ann{\lhtm_1}}
      &\hsynj \typEnv {\lhtm_2} {\refT{x_2}{\lhtyp_2}{\lhtm_2'}} {\ann{\lhtm_2}}
      &\lhtyp_1 = \lhtyp_2
    }{
      \hsynj \typEnv {\lhtm_1 \lheqb \lhtm_2}
      {\refT x \Bool {x = (\lhproj{\ann{\lhtm_1}} \lheqb \lhproj{\ann{\lhtm_2}})}}
      {\ann{\lhtm_1} \lheqb \ann{\lhtm_2}}
    }[\SEq]
    \\[15pt]
    \inference{
      \hsynj \typEnv {\lhtm_1} {\arr} {\ann{\lhtm_1}}
        & \hchkj \typEnv {\lhtm_2} {\lhref_x} {\ann{\lhtm_2}}
      }{
        \hsynj \typEnv {\lhtm_1 \ap \lhtm_2} {\lhref \subst x {\ann{\lhtm_2}}} {\ann{\lhtm_1} \ap \ann{\lhtm_2}}
      }[\SApp]
  \end{array}\]
  \caption{Type synthesis rules for refinements}
  \Description{Type synthesis rules for refinements}
  \label{fig:synrules}
\end{figure}

The type synthesis rules for refinements are given in \cref{fig:synrules}.
The types of literals and operators are given by a function $\csttype\cdot$ that
generates a singleton type (or \textit{selfified} type~\cite{ou04}).
Typing equality is similar, but works for any base type.

\begin{figure}
  \[\begin{array}{c}
    \inference{
      \hwfj\typEnv {\lhref_x} {\ann{\lhref_x}}
      &\ann{\lhref_x} = \arrs[x_i][\lhref_i][i \leq n][\lhref']
      \\\hchkj {\typEnv, (\bind{\varloc{x_i}}{\noproj(\lhref_i)})_{i \leq n}} {\lhexp_1} {\noproj(\lhref')} {\ann{\lhexp_1}}
      & \hchkj {\typEnv, \bind{\varloc x}{\ann{\lhref_x}}} {\lhexp_2} \lhref {\ann{\lhexp_2}}
      &\hwfj\typEnv \lhref {\ann\lhref}
    }{
      \hchkj \typEnv {\lhlet x {\lhref_x} {\lhexp_1}{\lhexp_2}} \lhref {\lhlet x {\ann{\lhref_x}} {\ann{\lhexp_1}} {\ann{\lhexp_2}}}
    }[\CLet]
    \\[15pt]
    \inference{
      \hsynj \typEnv \lhtm {\refT v \lhTC {\lhtm_v}} {\ann\lhtm}
      & \lhdt{\lhTC}{(\bind{\lhConstr}{\arrsSeq[y_i][\lhref_y][\refT{v}{\lhTC}{\lhtm_i}]})_{i \leq n}} \in \typEnv
      \\\forall i \leq n \textit{ s.t. } \lhexp_i \neq \dead.
      (\hchkj {\typEnv_i} {\lhexp_i'} {\lhref'} {\ann{\lhexp_i}})
    }{
      \hchkj \typEnv {
        \lhcase \lhtm {(\case{\lhConstr_i \ap \seq{y_i}}{\lhexp_i})_{i \leq n}
      }} \lhref
      {\lhcase {\ann\lhtm} {(\case{\lhConstr_i \ap \seq{y_i}}{\ann{\lhexp_i}})_{i \leq n}}}
    }[\CCase]
    \\[15pt]
    \inference{
      \hsynj \typEnv {\lhtm_h} {\refT x \Unit {\lhtm_p}} {\ann{\lhtm_h}}
      & \hsynj {\typEnv, \bind{\varloc x}{\refT x \Unit {\lhtm_p}}} {\lhtm} \lhref {\ann{\lhtm}}
    }{
      \hsynj \typEnv {\lhqmark{\lhtm}{\lhtm_h}} \lhref {\annlhqmark{\ann{\lhtm}}{\ann{\lhtm_h}}{\lhtm_p}}
    }[\CHint]
    \\[15pt]
    \inference{
      \hsynj \typEnv \lhtm {\lhref_s} {\ann{\lhtm}}
      & \hsubj \typEnv {\lhref_s} {\lhref_t}
    }{
      \hchkj \typEnv \lhtm {\lhref_t} {\lhsub {\ann\lhtm} {\lhref_s} {\lhref_t}}
    }[\CSyn]
  \end{array}\]
  Where in \CCase, if $\lhtm = x$ then
  $\lhexp_i' \doteq \lhexp_i\subst x {\lhConstr\ap\seq{y_i}}$,
  $\lhref' \doteq \lhref\subst x {\lhConstr\ap\seq{y_i}}$
  and $\typEnv_i \doteq \typEnv', \seq{(\varloc{y_i} : \noproj(\lhref_y))}, \typEnv''\subst{x}{\lhConstr\ap\seq{y_i}}$
  assuming $\typEnv = \typEnv', x:\refT v \lhTC {\lhtm_v}, \typEnv''$.
  \\Otherwise, $\lhexp_i' = \lhexp_i$, $\lhref' = \lhref$ and
  $\typEnv_i \doteq \typEnv, \seq{(\varloc{y_i} : \noproj(\lhref_y))}$.
  \caption{Type checking rules for expressions}
  \Description{Type checking rules for expressions}
  \label{fig:checkrules}
\end{figure}

The type checking rules are given in \cref{fig:checkrules}.
Both rules~\CLet and~\CCase use function $\noproj$ to remove projections around
newly-introduced variables in type refinements.

Rule~\CLet has two subtleties.
The first is that, if $\lhref_x$ is an arrow type, we check $\lhexp_1$ in the
context extended by its parameters $x_1$ through $x_n$.
The second is that we check well-formedness of the type $\lhref$: this is to ensure that this type does not depend on the defined variable $x$.

In rule~\CCase, if we match on a variable, we substitute the variable for its
pattern in the typing environment, the branch and the type.
We do not bother adding information on the value of the match in other cases (as
in occurence typing~\cite{tobinhochstadt10}):
such information would only be useful to resolve VCs, which we do not do here.
All branches must be present, but some can be $\dead$.

  \section{Appendix for the Translations}\label{app:translation}
  \subsection{Typeclass for projection}

We use a notation $\proj\cdot$ in \coq that is generic over subset types and packs.
This is not just a notational trick: in our \coq library, we define a generic projection, as part of a typeclass.
The operator $\proj\cdot$ is an alias for the function \rocq!proj! of the typeclass.
This function must respect a “proof irrelevance” property \texttt{po} ensuring equality modulo potential refinements.
\begin{rocqcode}
Class !$\gProj$! {R U : Type} :=
  { proj: R !$\to$! U; po: !$\forall$! (x y : R), proj x = proj y !$\to$! x = y }.
Global Notation !$\projleft$!x!$\projright$! := (proj x).
\end{rocqcode}

For a subset type, \rocq!proj! is defined simply as \coq's \rocq!proj1_sig!.
For a \fClass{}, we define the projection from a $\fClass$ to an $\fuClass$ simply as
$\uClassToClass x \doteq \cqrecuClass {\relnmInst x} {\functrelInst x}$.
The parameters \texttt{R} and \texttt{U} are implicit, as indicated by the curly brackets.
\coq can always infer the correct instance automatically, since we define the
projection operator only on subset types and \fClass{}s.

\subsection{Unrefined Translation}

\subsubsection{Call hypotheses}

We give a thorough definition of the function $\hypFromAmName$ used in
\cref{sec:unrefined}.

\begin{definition}[Call hypotheses]
  \label{def:hypFromAm}%
  Given a refinement $\lhtm$, a quantifier $\hypQ \in \{\forall, \exists\}$
  and assuming the map $\frApp$, $\hypFromAmQ\lhtm$ is defined as:
  \[\begin{array}{r@{\,}c@{\,}l@{\,}l}
    \hypFromAmQ\lhtm
    &\doteq & \{
      \hypQ z, \relnm f
      \ \utrLhtmCq{\lhtm_1} \dots \utrLhtmCq{\lhtm_n} \ z
    & \textcolor{gray}{(*\  f \ap \lhtm_1 \dots \lhtm_n \mapsto z \in \frApp\ *)}
  \\&& \quad \mid
  \varartop f n \ap \lhtm_1 \dots \lhtm_n \text{ subterm of } \lhtm
  \land n \geq 0
\}
  \\& \cup& \{ \hypQ z, (\getUPackRel f)
    \ \utrLhtmCq{\lhtm_1} \dots \utrLhtmCq{\lhtm_n} \ z
    & \textcolor{gray}{(*\  f \ap \lhtm_1 \dots \lhtm_n \mapsto z \in \frApp\ *)}
  \\&& \quad \mid 
  \vararloc f n \ap \lhtm_1 \dots \lhtm_n \text{ subterm of } \lhtm
  \land n > 0
\}
  \\& \cup& \{ \hypQ z, (\getPackRel f)
    \ \utrLhtmCq{\lhtm_1} \dots \utrLhtmCq{\lhtm_n} \ z
    & \textcolor{gray}{(*\ \lhproj f \ap \lhtm_1 \dots \lhtm_n \mapsto z \in \frApp\ *)}
  \\&& \quad \mid 
  \lhproj{\vararloc f n} \ap \lhtm_1 \dots \lhtm_n \text{ subterm of } \lhtm
  \land n > 0
\}
  \\& \cup& \{\hypQ z, \relnmOp\lhbop \ap \utrLhtmCq{\lhtm_1} \ap\utrLhtmCq{\lhtm_2} \ap z
  \mid \lhtm_1 \lhbop \lhtm_2 \text{ subterm of } \lhtm\}
    & \textcolor{gray}{(*\ \lhtm_1 \lhbop \lhtm_2 \mapsto z\in \frApp\ *)}
 \\ & \cup& \{\hypQ z, \relnmOp\lhnegb \ap \utrLhtmCq{\lhtm_0} \ap z
 \mid \lhnegb\lhtm_0 \text{ subterm of } \lhtm\}
    & \textcolor{gray}{(*\ \lhnegb \lhtm_0 \mapsto z \in \frApp\ *)}
  \end{array}\]
  We use $\getUPackRel f$ and $\getPackRel f$ instead
  of $\relnmUInst f$ and $\relnmInst f$ to obtain
  the relation from a pack, going through the conversion
  from the representation inside the pack to the one we expect
  (see \cref{app:pack}).
\end{definition}

\subsubsection{Building function paths}

We now describe how we compute the paths of a reflected function,
necessary to build its unrefined graph relation.

\begin{definition}
  \label{def:functionpaths}%
  Consider a reflected function $f$, not containing
  $\lambda$-abstractions or partial function applications, defined as $\lhdefrefl f \arrs \lhexp$.
  The paths of $f$ are given by an auxiliary function $\eqguardName$.
  \[
    \mkGraph(\lhexp; \seq x)
    \doteq \{(\codom(\sigma_\params),\sigma_\brSub,\lhexp_\brSub)
    \mid (\sigma_\params, \sigma_\brSub, \lhexp_\brSub) \in \eqguard[\seq{(x,x)}][\emptyset]{\lhexp}\}
  \]
\end{definition}

The function $\eqguardName$ takes as input a triple corresponding to the current branch
and returns a set of branches where matches on the arguments are put first,
and nested pattern matchings are merged into one.
The arguments are thus:
(1) a map from variables to simple terms, the patterns for the parameters,
(2) a map from additional matched refinements to their pattern, and
(3) the expression $\lhexp$ to turn into branches.

\begin{definition}
  \label{def:eqguard}%
  Given a map from variables to simple terms $\sigma_\params$,
  a map from simple terms to simple terms $\sigma_\brSub$
  and an expression $\lhexp$,
  $\eqguard{\lhexp}$ is defined as:
  \[ \begin{array}{l}
    \eqguard\lhtm \doteq \{(\sigma_\params, \sigma_\brSub, \lhtm)\}
    \\\eqguard\dead \doteq \emptyset
    \\\eqguard{\lhlet x \lhref {\lhexp_1}{\lhexp_2}}
    \doteq \bigcup_{(\sigma_\params',\sigma_\brSub',\lhtm) \in \eqguard{\lhexp_1}}
    \eqguard[\sigma_\params'][\sigma_\brSub']{\lhexp_2 \subst x \lhtm}
    \\\eqguard{\lhcase \lhtm {(\case{\lhConstr_i\ap\seq{y_i}}{\lhexp_i})_{i \leq n}}}
    \\ \doteq \begin{cases}
      \eqguard{\lhexp_j \subst {y_i}{\lhtm_i}_{i \leq m}}
      \quad\hfill\text{if } \exists j \leq n.\ \lhtm = \lhConstr_j \ap \lhtm_1 \dots \lhtm_m
    &\text{or } \exists (\lhtm, \lhConstr_j \ap \lhtm_1 \dots \lhtm_m) \in \sigma_\params \cup \sigma_\brSub
    \\ \bigcup_{i \leq n} \guard(
    \sigma_\params\subst x {\lhConstr_i\ap\seq{y_i}},
    \sigma_\brSub \subst x {\lhConstr_i\ap\seq{y_i}},
    \lhexp_i \subst x {\lhConstr_i \ap \seq{y_i}})
    &\text{if $\lhtm = x$} \\
    \bigcup_{i \leq n} \guard(\sigma_\params, (\sigma_\brSub, (\lhtm, {\lhConstr_i\ap\seq{y_i}})), \lhexp_i)
    &\text{otherwise}
    \end{cases}
  \end{array} \]
  Where we write $\sigma_\params \subst x {\lhConstr_i\ap\seq{y_i}}$
  to substitute in the codomain of $\sigma_\params$
  and $\sigma_\branch\subst x {\lhConstr_i\ap\seq{y_i}}$ to substitute
  either in the domain or codomain of $\sigma_\branch$,
  and we assume that all $\seq{y_i}$ are different from $x$.
\end{definition}

There are two base cases:
either we find a branch with a refinement at the end,
or we find an $\dead$ branch, that we delete.
This way, we ensure that no $\dead$ appears in the function paths.
Let-bindings are substituted away: this is possible since
all branches lead to a refinement, that can be substituted anywhere,
unlike expressions.
This would not be possible with $\lambda$-abstractions or partial applications.

There are three equations for pattern matching.
A match on an application (last case) creates a new pattern in the map $\sigma_\brSub$.
A matched variable can be either a parameter of the function
or a variable introduced by pattern matching.
We substitute this variable by its pattern in the image of $\sigma_\params$
and everywhere in the second case.
The first case accounts for a match on an already matched variable or application,
in which case there is only one valid pattern.

\subsection{Refined translation}
\label{app:refined}

We give the full definitions of the functions
$\exprToTacName$ and $\trLhCqName$ from \cref{sec:transexpr}.

\begin{figure}
  \[\begin{array}{l}
    \trLhexpCq\lhtm \doteq \tac{exact} (\trLhtmCq\lhtm)
    \qquad \trLhexpCq\dead \doteq \exfalso;\ \ltacOracle.\\
    \trLhexpCq{
    	\lhcase \lhtm {(\cqcase{\lhConstr_i \ap \seq{y_i}}{\lhexp_i})_{i \leq n}}}
    \doteq
    \\\qquad\begin{cases}
    	\left(\tac{generalize dependent}\ z; \right)_{z \in Z}
    	\tac{induction}\ x\ \tac{as}
    	\ [\despattern_1\ |\ \dots\ |\ \despattern_n]; \tac{intros}.
    	\\\quad\{\, \trLhexpCq{\lhexp_i}\, \}_{i \leq n}
    	\\  \tac{destruct}
    	\ \proj{\trLhtmCq\lhtm}\ \tac{as}\ [\seq{y_1} | \dots | \seq{y_n}]\ \tac{eqn:}?.
    	\{\, \trLhexpCq{\lhexp_i}\, \}_{i \leq n}
    \end{cases}\\
  \trLhexpCq{\ite\lhtm{\lhexp_1}{\lhexp_2}}
\doteq \tac{destruct}\ \proj{\trLhtmCq\lhtm}\ \tac{eqn:}?.
\ \{ \trLhexpCq{\lhexp_i} \}_{i = 1,2}\\
  \trLhexpCq{\lhlet x \lhref {\lhexp_1} {\lhexp_2}} \doteq
  \hspace{12.5em}(\ar(\lhref) = 0)
 \\ \qquad\tac{assert}\ (\bind{x'}{\trLhrefCq\lhref}).
  \ \{\ \trLhexpCq{\lhexp_1}\ \}
  \ \tac{destruct}\ x'\ \tac{as}\ [x \ap \wit x].
  \ \trLhexpCq{\lhexp_2}
  \\ \trLhexpCq{\lhlet {x} {\lhref_x \doteq \arrs} {\lhexp_1} {\lhexp_2}} \doteq
  \hspace{4.75em}(\ar(\lhref_x) > 0)
 \\ \qquad \tac{assert}\ (x'\colon \forall (\bind{x_i}{\trLhrefCq{\lhref_i}})_{i \leq n}, \trLhrefCq\lhref).
 \ \{\ \tac{intros}\ (\texttt{des}_i)_{i \leq n}.\ \trLhexpCq{\lhexp_1}\ \}
  \\ \qquad\tac{unshelve refine}\ (\cqlet {\bind{x}{\trLhrefCq{\lhref_x}}} {
    \ltacTac{fun\_to\_pack $x'$}} {\_}).
  \\\qquad\trLhexpCq{\lhexp_2}
  \hspace{4.5em}\textit{where } \texttt{des}_i = x_i \textit{ if $\lhref_i$ is an arrow, otherwise } \texttt{des}_i = [x_i \ap \wit{x_i}]
  \\
	\trLhexpCq{\annlhqmark\lhtm{\lhtm_h}{\lhtm_p}}
	\doteq 
	\tac{assert}\ (\utrLhtmCqprop{\lhtm_p})\ \tac{by}\ (\tac{exact}\ \projTwo{\trLhtmCq{\lhtm_h}}).\ \trLhexpCq{\lhtm}
  \end{array}\]
  \caption{Refined translation of expressions $\exprToTacName$}
  \Description{Refined translation of expressions to tactics}
  \label{fig:transexp}
\end{figure}

The definition of $\exprToTacName$ is in \cref{fig:transexp}.
The formal translations missing from the presentation in \cref{sec:transexpr}
are those for \hask{case}, \hask{if} and \hask{let}.
The translation of let-bindings corresponds to what is explained in \cref{sec:transexpr}:
we translate the let-binding as an \rocq{assert}.
Then, in the first-order case, we destruct the introduced dependent pair
and in the higher-order case, we introduce a pack for the new function.

Pattern matching on a refinement $\lhtm$ is translated either
with the \rocq{destruct} or the \rocq{induction} tactic.
The difference is that the second creates induction hypotheses.
For \rocq{destruct}, \rocq{eqn:?} asserts the equality between the term and its
pattern in the context of each branch.
Being a special case, ifs are also translated \rocq{destruct}.

We use \rocq{induction} when $\lhtm$ is a variable that is used as an
\emph{inductive variable} in the term.
We call inductive variable a variable that originates from the destruction of
a parameter $p$ and is used as argument of a recursive call in the position of $p$.
In the recursive call \hask{getDouble xs' (i-1)} from \cref{fig:trans-getDouble}, the recursive
variable is \verb!xs'!, obtained from the destruction of the parameter \verb!xs!
as \verb!Cons x xs'!.

During elaboration of a matching on a variable $y$, we look for all recursive calls
where $y$ is an inductive variable (and no other inductive variable has been found)
and annotate $y$ as the inductive variable of these calls, an information that
is used to translate the call (\cref{fig:transtm}).
If such calls have been found, we use \rocq{induction} and otherwise fall back
to \rocq{destruct}.

Before using \rocq{induction}, we must generalize the parameters that have not
been matched upon.
The set of those remaining parameters is here called $Z$,
in practice it is determined with an additional parameter to elaboration that
records which parameters have been destructed.
Without this generalization, the induction hypotheses generated by
\rocq{induction} would already be instantiated for all parameters of the functions,
making it impossible to create the correct recursive calls.

The tactic \rocq{induction} also has as parameters a sequence of patterns
$\texttt{des}_i$ that introduces the names of the datatype parameters.
For each $i \leq n$, $\texttt{des}_i$ is the sequence $(y_i^j)_{j \leq m_i}$
with names $\ihasm{y_i^j}$ inserted after $y_i^j$ if this variable is used as an
inductive variable later.

\begin{figure}
  \[\begin{array}{ll}
    \trLhtmCq\lhlit
    \doteq \trLit\lhlit
    & \trLhtmCq\lhConstr
    \doteq \lhConstr
    \\ \trLhtmCq{\varartop x {n>0}}
    \doteq \fInst x
    & \trLhtmCq{\vararlg{x}{\_}{\_}}
    \doteq x
    \\ \trLhtmCq{\lhnegb\lhtm}
    \doteq \lhnegb\trLhtmCq\lhtm
    & \trLhtmCq{\lhtm_1 \lhbop \lhtm_2}
    \doteq\trLhtmCq{\lhtm_1} \lhbop \trLhtmCq{\lhtm_2}
    \\ \trLhtmCq{\lhtm_1 \lheqb \lhtm_2}
    \doteq \trLhtmCq{\lhtm_1} \cqeqb \trLhtmCq{\lhtm_2}
    \\
    \multicolumn{2}{l}{
    \trLhtmCq{\lhsub \lhtm {\lhref_s} {\lhref_t}}
    \doteq \subCastExplicit {\trLhrefCq{\lhref_t}} {\trLhtmCq\lhtm} {\trLhrefCq{\lhref_s}}
    }\\
    \multicolumn{2}{l}{
\trLhtmCq{\lhinj \lhtm \lhref}
\doteq\, \ex {\!\!(\forall (\bind{x}{\cqtyp}), \cqprop)} {\!\trLhtmCq\lhtm} {\!\ltacOracle}
 \textit{,where } \refT x \cqtyp \cqprop = \trLhrefCq\lhref    }\\
\multicolumn{2}{l}{
  \trLhtmCq{\lhtm \ap \lhtm_1 \dots \lhtm_n}
  \doteq \begin{cases}
    \ihasm{y}\ap\ltacOracle \ap {\left(\cqtm_i \right)_{i \leq n}}
      &\textit{if } \lhtm = f_{/(n, \recSymbol(y, \texttt{state}))}
    \\ \defnmInst f \ap \trLhtmCq{\lhtm_1} \dots \trLhtmCq{\lhtm_n}
      &\textit{if } \lhtm = \vararloc f n
    \\ f \ap \trLhtmCq{\lhtm_1} \dots \trLhtmCq{\lhtm_n}
      &\textit{if } \lhtm = \varartop f n
    \\ \trLhtmCq{\lhtm} \ap \trLhtmCq{\lhtm_1} \dots \trLhtmCq{\lhtm_n}
      &\textit{otherwise}
  \end{cases}
}
  \end{array}\]
  \caption{Refined translation of refinements; in the last equation, $n > 0$ is maximal}
  \Description{Refined translation of refinements}
  \label{fig:transtm}%
\end{figure}

The definition of $\trLhCqName$ is in \cref{fig:transtm}.

The translation of a literal is a builtin literal of \coq.
They have a base type ($\trLhCqName$ uses the same function $\trLit\lhlit$ as $\utransName$),
but are always surrounded by an injection to their singleton type,
as explained in \cref{sec:typing}.
Data constructors, on the other hand, are the refined ones from \cref{fig:transdata}.
Operators are translated to the custom operators with refined types that we
define in our \coq library (and for which we use here the same symbols).
Equality however is translated to the equality operator from our equality typeclass,
which is defined for subset types as the equality of the projected terms.

Variables names, when higher-order, must always refer to a pack instance.
The translation of a top-level function name $x$ is thus its instance $\fInst x$,
similar to what happens in $\utransName$.
If $x$ is local, it is already a pack by construction.
An injection is translated as the construction of a dependent pair from the
translation of the term and a proof generated by $\solver$.

The translation of an application is subtle.
A non-recursive call (second and third case) is translated recursively on the subterms.
If the head term is a local variable $f$ (second case), then $f$ binds a pack,
so $\trLhCqName$ projects on the field containing the actual refined function.

A recursive call (first case) is translated as a call to the induction hypothesis $\ihasm y$
obtained from the translation of pattern matching,
where $y$ is the inductive variable assigned during elaboration of pattern matching,
appearing in the annotation of $f$.
There must always be an inductive variable for our translation to succeed,
which is why we are restricted to structural recursion.

The induction hypothesis $\ihasm y$ is already specialized to $y$.
It is also specialized on the positions of parameters that have been
destructed when $\ihasm y$ was introduced.
To know how to translate the arguments $\lhtm_1$ to $\lhtm_n$,
we use the second annotation in $f$, a sequence called \texttt{state} of length $n$.
\texttt{state} is a snapshot of the destruction of the parameters when the
induction hypothesis was created.
For each parameter $x_i$, \texttt{state} tells us if $x_i$ had been destructed and its arity.
For each $i \leq n$, we define \[
  \cqtm_i \doteq \begin{cases}
    \emptyset &\text{if parameter $i$ is destructed; otherwise:}
    \\ \trLhtmCq{\lhtm_i} &\text{if parameter $i$ is higher-order}
    \\ \proj{\trLhtmCq{\lhtm_i}} \ltacOracle &\text{if parameter $i$ is first-order}
  \end{cases} \]
First-order values and their refinement proofs must be given separately to the
induction hypothesis, which explains the last case.
The first case is of a position that is already instantiated.

  \section{Proofs of the Properties}\label{app:properties}
  \subsection{Preservation of Evaluation Semantics of Expressions}%
\label{sec:app-operational}

In this subsection, we look at the operational semantics of reflected functions.
We define a reduction semantics for expressions with no higher-order let-bindings
(that is, bound $\lambda$-abstractions).
Like in \cref{app:translation}, we encode conditionals as case expressions without loss of generality.
The goal is to show that the operational semantics of the expression is preserved
by the translation to the graph relation.

\begin{figure}
  \[\begin{array}{ll}
    \ssred {\lhtm} {\delta(\lhtm)}
    \quad\text{ if } \lhtm \in \{\lhnegb\lhlit, \lhlit_1 \lhbop \lhlit_2, \lhval_1 \lheqb \lhval_2\}
  &\ssred{\annlhqmark{\lhtm}{\lhtm_h}{\lhtm_p}} \lhtm
  \\\ssred{\lhinj \lhtm \lhref}{\lhtm'}
  &\ssred{\lhsub\lhtm{\lhref_s}{\lhref_t}}{\lhtm'}
  \\\ssred{\lhcase {\lhConstr_k \ap \seq{\lhval_k}} {(\case{\lhConstr_i \ap \seq{x_i}}{\lhexp_i})_{i \leq n}}}
  {\lhexp_k \seq{\subst{x_k}{\lhval_k}}}
  \\\ssred{\lhlet x \lhref \lhval \lhexp} {\lhexp \subst x \lhval}
  \\\multicolumn{2}{l}{\ssred{f \ap \lhval_1 \dots \lhval_n}{\lhexp \subst{x_i}{\lhval_i}_{i \leq n}}
  \quad\text{if } \lhrefl f \arrs \lhexp \in \seq\lhdecl}
  \end{array}\]
  Where $\delta$ is a function returning the correct value for primitive operations.
  \caption{Reduction relation}
  \Description{Reduction relation}
  \label{fig:ssred}
\end{figure}

\begin{definition}[Small-steps reduction]
  \label{def:ssred}%
  We equip $\er$ with a small-steps call-by-value reduction.
  The base reduction rules are given in \cref{fig:ssred}.
  They are applied under an environment of declarations $\seq\lhdecl$ from which
  we retrieve the definition of free variables.
  Reduction is closed under all contexts.
  We write $\rewn\er$ for the recursive and transitive cloture of $\rew\er$.

  Values are defined as $\lhval \Coloneqq x \mid \lhlit \mid \lhConstr\ap\seq\lhval$,
  where $x$ is not a constant of $\seq\lhdecl$.
  Normal forms of the reduction of a term closed in the declaration environment are closed values.
\end{definition}

We assume that the reduction terminates, as \lh checks termination of programs.
Therefore, we can prove confluence simply by proving local confluence.

\begin{lemma}[Confluence]
  The reduction on $\er$ is confluent.
\end{lemma}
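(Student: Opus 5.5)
Since the paper assumes that reduction terminates, Newman's lemma turns confluence into \emph{local} confluence, so the plan is to prove the latter. Suppose $\ssred{\lhexp}{\lhexp_1}$ and $\ssred{\lhexp}{\lhexp_2}$. The goal is a common reduct $\lhexp'$ with $\ssredn{\lhexp_1}{\lhexp'}$ and $\ssredn{\lhexp_2}{\lhexp'}$. Each step contracts one redex from \cref{fig:ssred} at some position, since reduction is closed under all contexts. The analysis therefore splits on how the two redex positions relate.

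\emph{Disjoint positions.} The two contractions commute, and we close the diagram in one step on each side.

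\emph{Same position.} We check that each base rule is deterministic. For $\delta$, the primitive operations are functions. Hints, injections and subsumptions each have a unique reduct. A $\keyw{case}$ on $\lhConstr_k \ap \seq{\lhval_k}$ selects exactly one branch, because constructor names are distinct and the patterns are linear. Let-binding of a value gives a unique substitution. A call $f \ap \lhval_1 \dots \lhval_n$ has a unique unfolding, because well-formedness (\WFDDef, \WFDTC) makes top-level names fresh, so $f$ has exactly one declaration in $\lhdecls$. Hence $\lhexp_1 = \lhexp_2$.

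\emph{Nested positions.} This is the main case, and the main obstacle. Say the outer redex is $R$ and the inner redex sits strictly inside it. First, rules whose redex consists only of values ($\delta$, the matched scrutinee $\lhConstr_k\ap\seq{\lhval_k}$, the bound value in a let, and the arguments of $f$) cannot contain an inner redex in those positions, because values are normal forms. Our first step is to prove this as a small lemma, by induction on $\lhval$; it relies on the fact that variables in values are not constants of $\lhdecls$. So the inner redex must lie in a non-value subterm of $R$. There are two possibilities:
\begin{itemize}
\item It lies in a discarded or erased part of $R$: the hint $\lhtm_h$ in $\annlhqmark{\lhtm}{\lhtm_h}{\lhtm_p}$, or a non-selected case branch. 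Then contracting $R$ directly reaches the same term that contracting the inner redex and then $R$ reaches.
\item It lies in a part of $R$ that is preserved: the body of a let or of a selected branch, the subterm under $\keyw{inj}$ or $\keyw{sub}$, or the hinted term. Then we use a substitution lemma: if $\ssred{\lhexp}{\lhexp'}$, then $\ssred{\lhexp\subst{x}{\lhval}}{\lhexp'\subst{x}{\lhval}}$. Substitution with a value never destroys a redex or creates a conflicting one, since values are normal and substitution is capture-avoiding under the distinct-binder convention.
\end{itemize}

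Two rules need extra care and are where I expect subtleties: $\lhinj{\lhtm}{\lhref}$ and $\lhsub{\lhtm}{\lhref_s}{\lhref_t}$, which reduce to some $\lhtm'$. I read these rules as erasing the cast, with $\lhtm'=\lhtm$. Then an inner step $\lhtm\to\lhtm''$ is matched by erasing the cast in $\lhinj{\lhtm''}{\lhref}$, and the diagram closes in at most one step on each side. Combining the three cases gives local confluence, and Newman's lemma concludes.
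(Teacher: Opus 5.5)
Your proposal is correct relative to the paper's stated assumptions and is essentially the paper's own proof: Newman's lemma from the assumed termination, then local confluence via erased hint components, independent subterms, and let/case peaks closed by the substitution lemma $\ssred{\lhexp}{\lhexp'} \Rightarrow \ssred{\lhexp\sigma}{\lhexp'\sigma}$, with your equal-position determinism check and values-are-normal lemma making explicit what the paper leaves implicit. Three tacit points deserve care, the first two shared with the paper: the substitution lemma needs the $\delta$-rule restricted to closed values (no single value of $\delta(x \lheqb 3)$ joins both $\lhlet{x}{\lhref}{3}{(x \lheqb 3)}$ and $\lhlet{x}{\lhref}{4}{(x \lheqb 3)}$); termination of reduction under all contexts is not what Liquid Haskell's termination check provides, because pattern variables are values, so in, say, a length function the recursive call on the tail variable in the cons branch becomes a redex once its casts are erased, and its unfolding contains another such call (a parallel-reduction proof would avoid Newman's lemma altogether); and uniqueness of $f$'s declaration is an extra assumption, not a consequence of \WFDDef, which, unlike \WFDTC, has no freshness side condition.
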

\begin{proof}
  We show local confluence and conclude by termination of the reduction.
  There are three kinds of critical pairs.
  The first is when
  $\ssred{\annlhqmark\lhtm{\lhtm_h}{\lhtm_p}}\lhtm$ and either $\lhtm_h$ or $\lhtm_p$ is reduced
  (as $\lhtm_h'$ or $\lhtm_p'$).
  We reduce $\ssred{\annlhqmark\lhtm{\lhtm_h'}{\lhtm_p'}} \lhtm$.
  The second kind is trivial and comes from the reduction of two subterms that
  do not interact, as $\lhtm_1 \lhbop \lhtm_2$ that can reduce both $\lhtm_1$ and $\lhtm_2$.

  The third kind concerns pattern matching and let, both cases are similar.
  We can reduce $\lhlet x \lhref \lhval \lhexp$ both as
  $\lhexp\subst x v$ and $\lhlet x \lhref \lhval {\lhexp'}$.
  To close the diagram with $\lhexp'\subst x \lhval$, we show the following property:
  given an expression $\lhexp$ such that $\ssred \lhexp {\lhexp'}$
  and a substitution of variables to values $\sigma$,
  we have $\ssred{\lhexp\sigma}{\lhexp'\sigma}$.
  This property is easily proved by induction on $\ssred{\lhexp}{\lhexp'}$.
\end{proof}

The graph relation of a function is created by reorganizing the structure of the
term into the function paths.
The function $\eqguardName$ from \cref{def:eqguard} only commutes pattern matchings,
merges them and eliminate superflous matchings.
Since the reduction is confluent, these operations are harmless for the final result:
we could define a function to convert the set of function paths back into an expression,
and both the original function and that expression would reach the same normal form.
We formalize this below with an additional detail:
to be closer to the representation of the graph relation in \coq,
we build a $\er$ expression from the function paths, but where we also extract
function and operator calls into let-bindings.

\begin{definition}
  Consider a set $P \doteq \{(\sigma_\params^i, \sigma_\brSub^i, \lhtm_i)\}_{i \leq n}$
  of paths obtained by $\eqguardName$.
  We transform this set back into an expression with
  $\caseify(P) \doteq \caseifyaux(\{\sigma_i, \lhtm_i\}_{i \leq n}, \emptyset)$,
  where $\sigma_i$ is the concatenation of $\sigma_\params^i$ and $\sigma_\brSub^i$
  where we destructed back the pattern: for instance,
  a tuple $(\lhtm, \lhConstr\ap(\lhConstr_1\ap x_1)\ap x_2\ap \lhConstr_2)$ is replaced by the sequence
  $(\lhtm, \lhConstr\ap y \ap x_2 \ap z), (y, \lhConstr_1\ap x_1), (z, \lhConstr_2)$,
  for fresh variables $y$ and $z$.

  To mimic the construction of the graph relation (\cref{def:altInGraphRel}), we consider
  the map $\frApp$ from calls to free variables and
  we assume a function $\repl(\lhtm)$ that replaces the applications in $\lhtm$ according to $\frApp$.
  \begin{align*}
    \caseifyaux(\{(\emptyset, \lhtm)\}) \doteq
    &\;\lets{z}{\repl(\lhtm_z)}{(z,\lhtm_z) \in H}{\repl(\lhtm)}
    \\\caseifyaux(\{((\lhtm,\lhtm_i) : \sigma_i^j, \lhtm_i^j)\}_{j \leq m_i, i \leq n}) \doteq
    &\;\lets{z}{\repl(\lhtm_z)}{(z,\lhtm_z) \in H'}{
  \\&\; \lhcase {\lhtm'} {(\case{\lhtm_i}{
\caseifyaux(\{(\sigma_i^j, \lhtm_i^j)\}_{j \leq m_i}, H})_{i \leq n}}}
  \end{align*}
  Where in the second case $\lhtm' \doteq \repl(\lhtm)$ if $\lhtm$ is an application
  and $\lhtm' \doteq x$ if $\lhtm = x$, and in both cases
  $H \doteq \{(z,\lhtm_z) \in \frApp \mid \lhtm_z \text{ is a subterm of } \lhtm\}$.
\end{definition}

\begin{lemma}
  \label{l:caseify}%
  Let $\lhdecls \doteq \lhdecls'; \lhrefl f \arrs \lhexp \in \lhdecls$
  be a well-formed and annotated sequence of reflected declarations.
  Let $\lhval_1, \dots, \lhval_n$ be closed under $\lhdecls$.
  Then, $\ssredn{\lhexp\subst{x_i}{\lhval_i}_{i \leq n}}\lhval$ if and only if
  $\ssredn{\caseify(\mkGraph(\lhexp; (x_i)_{i \leq n}))\subst{x_i}{\lhval_i}_{i \leq n}}\lhval$.
\end{lemma}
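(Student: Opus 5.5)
I will prove the lemma by relating both sides to a common intermediate: the normal form of the original body after substitution. Since the reduction $\rew\er$ terminates (by assumption) and is confluent (by the Confluence lemma), every closed expression has a unique normal form, which is a closed value. It therefore suffices to show that $\lhexp\subst{x_i}{\lhval_i}_{i \leq n}$ and $\caseify(\mkGraph(\lhexp; (x_i)_{i\leq n}))\subst{x_i}{\lhval_i}_{i \leq n}$ are \emph{joinable}, i.e.\ reduce to a common expression. Equality of normal forms, and hence the stated equivalence, then follows.

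The core of the argument is a generalized invariant on $\eqguardName$, proved by induction on the structure of $\lhexp$, following the clauses of \cref{def:eqguard}. For a closing substitution $\theta$ of the parameters by values, I say that a path $(\sigma_\params,\sigma_\brSub,\lhtm)$ \emph{matches} $\theta$ when two conditions hold. First, there is an extension $\theta'$ of $\theta$ to the pattern variables with $x\theta \rewn\er \sigma_\params(x)\theta'$ for each parameter. Second, every additional match $(\lhtm_m,\lhtm_p)\in\sigma_\brSub$ satisfies $\lhtm_m\theta' \rewn\er \lhtm_p\theta'$. The claim is that $\lhexp\theta$ and $\lhtm\theta'$ are joinable for the unique matching path in $\eqguard[\sigma_\params][\sigma_\brSub]{\lhexp}$. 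Three facts are needed here.
\begin{enumerate}
\item Uniqueness holds because distinct branches carry distinct constructor patterns on the same scrutinee.
\item Existence holds because the value of a scrutinee is a constructor application (by typing), and $\dead$ branches are unreachable for well-typed inputs.
\item The clauses preserve joinability. The let clause works by substituting the normal form of the bound expression, using the substitution lemma established inside the confluence proof. The three case clauses reduce the real case expression on the selected constructor.
\end{enumerate}
Proof hints, injections and subsumptions are erased by single reduction steps, so they do not matter.

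Next I will show that $\caseify$ reproduces this selection operationally. For the same $\theta$, the expression $\caseify(P)\theta$ reduces by induction on the sequences $\sigma_i$ of destructed patterns, as follows:
\begin{enumerate}
\item each extracted let-binding $z \coloneqq \repl(\lhtm_z)$ is evaluated to a value, using termination;
\item each case step then selects exactly the branch whose pattern matches;
\item the final body $\repl(\lhtm)$ becomes $\lhtm\theta'$ once the let-bound result variables are substituted back.
\end{enumerate}
Substituting back requires that the map $\frApp$ is injective on calls and that $H$ collects all calls occurring in the scrutinee. The flattening of nested patterns into fresh variables $y,z$ is harmless, because nested case-of-case on values reduces the same way.

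The main obstacle is the first clause of the case equation for $\eqguardName$, where a match on an already matched term is collapsed into one branch. Showing that this preserves joinability requires the invariant that the recorded pattern is indeed the value of the term under $\theta'$. I will make this an explicit part of the induction hypothesis above, namely the second condition of ``matches''. A subsidiary difficulty is that the second case clause substitutes in both the domain and the codomain of $\sigma_\brSub$. I expect to handle this using the assumption that the $\seq{y_i}$ are fresh and distinct from $x$, so that the substitutions commute with $\theta'$.
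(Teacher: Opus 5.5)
Your proposal follows the paper's route (termination plus confluence, so that the commuting, merging and pruning of matches done by $\eqguardName$, and the extraction of calls into let-bindings done by $\caseify$, cannot change the unique normal form), and it spells out the induction that the paper's one-line proof leaves implicit. The one point to fix is your justification of path existence and of ``normal forms are values'': these do not follow from $\er$ typing, since subtyping is axiomatic via \TSRefAx{} and the lemma does not even assume typed inputs; they come from the paper's standing assumption that closed normal forms are values, which you should cite explicitly, because without it the two sides can genuinely differ (for instance, an unused let-bound call that gets stuck is simply discarded by $\eqguardName$).
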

\begin{proof}
  By confluence of the reduction, $\eqguardName$ preserves the result of evaluation.
  Confluence is what allows us to extract applications inside let-bindings.
\end{proof}

\operationalmain*
\begin{proof}
  The additional property concerning the refined translation of the function
  is proven directly in \coq in the lemmas we described in \cref{sec:transfunc}
  relating the refined definition and the graph relation on the refined domain.
  We make an assumption that the arguments are well-typed in \coq
  because the type system from this paper does not check refinements.
  However, we can ask that these values are of the correct underlying unrefined type,
  and this underlying type is preserved by the translation.
  The fact that refined types correspond in \lh and \coq is proved in \cref{l:same-model}.

  We now prove the equivalence between our semantics and the semantics of the graph relation.
  Notice that $\utrLhtmCq\lhval$ and each $\utrLhtmCq{\lhval_i}$ are
  of the correct type $\utrLhrefCq\lhref$ and $\utrLhrefCq{\lhref_i}$ in \coq.
  Also, the equivalence holds for operators and their graph relation because the latter is defined using the corresponding \coq's builtin operators.

  Let $\ssredn{f \ap \lhval_1 \dots \lhval_n}{\lhexp_f\subst{x_i}{\lhval_i}_{i \leq n}}$.
  According to the previous lemmas, we can reduce the reduction
  (in the hypothesis for $\Rightarrow$ or that we construct for $\Leftarrow$)
  from the instantiation of $\lhexp_f$ above to a reduction on
  $\caseify(\mkGraph(\lhexp_f; (x_i)_{i \leq n}))$.
  For $\Rightarrow$, since $f \ap \lhtm_1 \dots \lhtm_n$ reduces to a value,
  none of the intermediate computations is stuck on a pattern matching.
  Similarly for $\Rightarrow$, since there is a result to the application of
  the relation and by functionality of the relation, there is a unique
  constructor that can be traced from the arguments and the result.

  Therefore, there is a path
  $\branch = (\sigma_\params, \sigma_\lhtm, \lhtm_\branch) \in \mkGraph(\lhexp_f; (x_i)_{i \leq n})$
  corresponding to the reduction, and a constructor $f^\branch$ built from that path.

  Both sides of the proof are by induction on the reduction.
  We start with $\Rightarrow$.
  The structure of the function $\caseify$ models that of the constructor $f^\branch$.
  From the matching of the arguments, we obtain an instantiation of the free
  variables of the patterns in $\sigma_\params$, with which we instantiate the first
  variables of $f^\branch$.

  Then, let us consider the matchings on additional applications.
  The hypotheses introduced in $f^\branch$ are in one-to-one correspondence
  with the let-bindings introduced in $\caseify$.
  These hypotheses are of the shape
  $\exists z, g_0\ap \cqexp_1\ap \dots \cqexp_m\ap \cqexp$,
  where $g_0$ is $\relnm g$ or $\getUPackRel g$ or $\getPackRel g$
  and where $\cqexp_1$ to $\cqexp_n$ and $\cqexp$ are \coq expressions obtained by translation.
  The encoding of a relation inside a pack preserves its operational semantics,
  so that in all three cases we can apply the induction hypothesis,
  since this specific application of $g$ appears in the translation.
  This gives us an instantiation from the fresh variable of the hypothese
  to the result of the application.
  The evaluation of the pattern matching itself gives us an instantiation of the
  free variables of the pattern.

  Now, let $\sigma$ be the substitution obtained by the series of pattern matching
  on additional applications and the let-bindings corresponding to function calls.
  Given $z$ in $\dom(\sigma)$, there are two possilities:
  either $\sigma(z) = \lhval_z$ for
  $\ssredn{g \ap \seq{\lhval_g}}{\lhval_z}$,
  where $g$ was bound in a let-binding,
  and $z$ comes from $\frApp$,
  or $\sigma(z)$ is an arbitrary value,
  when $z$ is a parameter or introduced with a pattern.
  For the return type of the constructor $f^\branch$,
  we instantiate the first $n$ arguments of $\relnm f$
  correctly thanks to the previous instantiations
  (from $\sigma_\params$ and from $\sigma$ above).
  For the $n+1$th argument (the result),
  we use the fact that for every refinement $\lhtm$,
  if there is $\lhval$ such that $\ssredn\lhtm\lhval$, then
  $\utrLhtmCq{\lhtm}\sigma$ is equal modulo conversion to $\lhval$ in \coq.
  This proof is direct by induction on $\ssredn\lhtm\lhval$.

  The proof for $\Leftarrow$ is similar: there is a correspondance between the
  graph relation in \coq and the reduction on the $\caseify$ term,
  from which, using \cref{l:caseify}, we get a reduction of the original $\er$ function.
  By inversion on $\relnm f \ap \utrLhtmCq{\lhval_1} \dots \utrLhtmCq{\lhval_n} \ap \utrLhtmCq\lhval$
  and by functionality of the relation (proved directly in \coq),
  we get a unique branch of the relation corresponding to the correct path.
  This branch gives us instantiations for the parameters of the function
  and further hypothesis of the shape
  $\relnm g \ap \utrLhtmCq{\lhval_1'} \dots \utrLhtmCq{\lhval_m'} \ap \utrLhtmCq{\lhval'}$.
  By induction, these hypothesis correspond to a computation
  $\ssredn{g \ap \lhval_1' \dots \lhval_m' \lhval'}$.
\end{proof}

\subsection{Preservation of Models of Types}
\label{app:models}%

\subsubsection{Simply-typed \texorpdfstring{$\lambda$}{lambda}-calculus}
We take the simply-typed $\lambda$-calculus ($\stlc$) as a common language to interpret
terms of $\er$ and of \coq.

We use terms of the simply-typed $\lambda$-calculus as a common interpretation
for $\er$ and \coq.

\begin{definition}[Calculus $\stlc$]
  The grammar of $\stlc$ is given in \cref{fig:stlc-grammar}.
  Operators and literals are taken to be the same as in $\er$.
  \begin{figure}
    \[ \begin{array}{rrclll}
      \gramTitle{Typing environments}
        & \Gamma & \Coloneqq &
        \emptyset \mid \Gamma, x:\lctyp \mid \Gamma, \lcdt\lcTC{\seq{(\lcConstr:\lctyp)}}
        \\ \gramTitle{Types}
        & \lctyp & \Coloneqq &
        \Unit \mid \Bool \mid \Int \mid \lcTC \mid \lctyp \to \lctyp
        \\ \gramTitle{Expressions}
        & \lcexp & \Coloneqq &
        x \mid \lcval
        \mid \lhConstr
        \mid \lcexp \ap \lcexp
        \mid \lcexp \lcbop \lcexp \mid \lcnegb \lcexp
        \mid \lcexp = \lcexp
     \\ &&&\mid \lccase \lcexp {\seq{(\case{\lcConstr\ap\seq y}{\lcexp})}}
        \mid \lclet x \lcexp \lcexp
        \\ \gramTitle{Values}
      & \lcval & \Coloneqq &
      \lclit \mid \lcConstr\ap\seq v \mid \lclam x \lcexp
      \\ \gramTitle{Primitives}
      & \lccst & \Coloneqq &
      \unit \mid \true \mid \false \mid 0, 1, -1 \dots
\end{array} \]
\caption{Grammar of the simply-typed $\lambda$-calculus}
\Description{Grammar of the simply-typed lambda-calculus}
\label{fig:stlc-grammar}
\end{figure}
\end{definition}

We define two translations to the $\lambda$-calculus: from $\er$ and from the
image of the translation of type refinements.
The image of the translation of refinement types of $\er$ is formally defined in \cref{fig:ec}.
In this section, we use the simplified definition of pack without encoding (see \cref{app:pack}),
as both versions have the same semantics.

\begin{figure}
  \[\begin{array}{rrcll}
    \gramTitle{Base types}
        & \cqtyp & \Coloneqq & \cqUnit \mid \cqBool \mid \cqInt \mid \unrefnm\cqTC
        \\ \gramTitle{Subset types}
        & \cqref & \Coloneqq &
        \refT x \cqtyp \cqprop
        \mid \forall (x : \cqref), \cqref
        \mid \fClass (\forall \seq{(x:\cqref)}, \cqref)
        \\ \gramTitle{Refinements}
        & \cqtm & \Coloneqq &
        x \mid \proj x \mid \cqlit \mid \unrefnm\cqConstr \ap \seq\cqtm \mid \cqtm \cqeqb \cqtm
        \\ \gramTitle{Literals}
        & \cqlit & \Coloneqq &
        \cqunit \mid \cqtrue \mid \cqfalse \mid 0, 1, -1, 2 \ldots
        \\ \gramTitle{Prop operators}
        & \cqlop & \Coloneqq &
        \land \mid \lor \mid \to \mid \leftrightarrow
        \\ \gramTitle{Propositions}
        & \cqprop & \Coloneqq &
        \IsTrue(\cqtm) \mid \neg\cqprop \mid \cqprop \cqlop \cqprop
        \mid \wffnm\cqTC \ap x
        \mid \cqtm \cqeqp \cqtm
     \\ &&\mid& \exists z, (\relnm f \mid \relnmInst f \mid \relnmUInst f) \ap \seq\cqtm \ap z \land \cqprop
  \end{array}\]
  \caption{Image of the translation of types in \coq.}
  \Description{Image of the translation of types in Rocq}
  \label{fig:ec}
\end{figure}

\begin{definition}[Translation from $\er$ types to the $\lambda$-calculus]
  We define a translation $\trLhLc\cdot$ from $\er$ to $\stlc$ over:
  \begin{itemize}
    \item Base types as the identity and refinement types as
      $\trLhLc{\refT x \lhtyp \lhtm} \doteq \lhtyp$ and
      $\trLhLc\arr \doteq \trLhLc{\lhref_x} \to \trLhLc\lhref$.
    \item Refinements as
      $\trLhLc{\vararlg x \_ \_} \doteq x$,
      $\trLhLc{\lhinj\lhtm\_} \doteq \lhtm$,
      $\trLhLc{\lhsub\lhtm\_\_} \doteq \lhtm$,
      $\trLhLc{\lhproj\lhtm} \doteq \lhtm$
      and the identity for the rest.
    \item Expressions as
      $\trLhLc{\lhlet x \lhref {\lhexp_x} \lhexp} \doteq \lclet x {\trLhLc{\lhexp_x}} {\trLhLc\lhexp}$,
      \\$\trLhLc{\ite{\lhtm}{\lhexp_t}{\lhexp_f}} \doteq
      \lccase {\trLhLc\lhtm} {\case{\True}{\trLhLc{\lhexp_t}}; \case{\False}{\trLhLc{\lhexp_f}}}$,
      \\$\trLhLc{\lhcase \lhtm {(\case{\lhConstr_i\;\seq{x_i}}{\lhexp_i})_{i \leq n}}}
      \doteq {\lccase {\trLhLc\lhtm} {(\case{\lhConstr_i\;\seq{x_i}}{\trLhLc{\lhexp_i}})_{i \in I}}}$
      where $I = \{i \leq n \mid \lhexp_i \neq \dead\}$ and
      $\trLhLc{\annlhqmark{\lhexp}{\lhtm_h}{\lhtm_p}} \doteq \trLhLc\lhexp$.
    \item Typing environments with
      $\trLhLc{x:\lharr} \doteq x:\trLhLc\lharr$ and
      $\trLhLc{\lhdt{\lhTC}{(\lhConstr_i:\lharr_i)_{i \in I}}}
      \doteq \lcdt\lhTC{(\lhConstr_i:\trLhLc{\lharr_i})_{i \in I}}$.
  \end{itemize}
\end{definition}

\begin{definition}[Translation from the target $\coq$ grammar to the $\lambda$-calculus]
  We define a translation $\trCqLc\cdot$ from the \coq target to $\stlc$ over:
  \begin{itemize}
    \item Base types:
      $\trCqLc\cqUnit \doteq \Unit$,
      $\trCqLc\cqBool \doteq \Bool$,
      $\trCqLc\cqInt \doteq \Int$ and
      $\trCqLc{\unrefnm\cqTC} \doteq \cqTC$.
    \item Refinement types:
      $\trCqLc{\refT x \cqtyp \cqexp} \doteq \trCqLc\cqtyp$,
      $\trCqLc{\forall (x : \cqref_x), \cqref} \doteq \trCqLc{\cqref_x} \to \trCqLc\cqref$
      and $\trCqLc{\fClass (\forall (x:\cqref_i)_{i \leq n}, \cqref)}
      \doteq \trCqLc{\cqref_1} \to \dots \to \trCqLc{\cqref_n} \to \trCqLc{\cqref}$.
    \item Literals:
      $\trCqLc\cqunit \doteq \unit$,
      $\trCqLc\cqtrue \doteq \true$,
      $\trCqLc\cqfalse \doteq \false$ and
      $\trCqLc\cqlit \doteq \cqlit$ if $\cqlit$ is an integer.
    \item Refinements:
      $\trCqLc x \doteq x$, $\trCqLc{\proj x} \doteq x$,
      $\trCqLc{\unrefnm\lhConstr\ap\seq\cqtm} \doteq \lhConstr\ap\seq{\trCqLc\cqtm}$ and
      $\trCqLc{\cqtm_1 \cqeqb \cqtm_2} \doteq \trCqLc{\cqtm_1} = \trCqLc{\cqtm_2}$.
    \item Propositions:
      $\trCqLc{\IsTrue(\cqtm)} \doteq \trCqLc\cqtm$,
      $\trCqLc{\cqprop_1 \cqlop \cqprop_2} \doteq \trCqLc{\cqprop_1} \cqlop
      \trCqLc{\cqprop_2}$ (where $\cqlop$ is the builtin binary
      operator in $\stlc$),
      $\trCqLc{\cqprop_1 \cqeqp \cqprop_2} \doteq \trCqLc{\cqprop_1} = \trCqLc{\cqprop_2}$,
      \\$\trCqLc{\exists z, (\relnm f/\relnmInst f/\relnmUInst f) \ap\seq\cqtm \ap z \land \cqprop}
      \doteq \lclet z {f \ap \seq{\trCqLc\cqtm}} {\trCqLc\cqprop}$,
      \\ $\trCqLc{\wffnm \lhTC \ap x}
      \doteq \lccase x {(\case{\lhConstr_i\ap x_1 \dots
      x_{m_i}}{\trCqLc{\cqprop_i}})_{i \in I}}$
      if $\wffnm\cqTC x$ is defined as\\
      $\cqmatch x {\left(\case {\lhConstr_i \ap x_1 \dots x_{m_i}}
      \cqprop_i \right)_{i \in I}}$.
  \end{itemize}
\end{definition}

\begin{definition}
  The reduction semantics of the calculus is called $\rew\stlcredName$ and defined in \cref{fig:stlc-eval}.
  We reduce under all contexts except under abstractions.
  Evaluation is very similar to the one of $\er$.
  It is parametrized by an environment $\lhdecls$ of elaborated $\er$ declarations.
  We use a similar function $\delta$ giving the result of the evaluation of operators.
  \begin{figure}
    \[\begin{array}{ll}
      \stlcred {\lcexp} {\delta(\lcexp)}
      \quad\text{ if } \lcexp \in \{\lcnegb\lclit, \lclit_1 \lcbop \lclit_2, \lcval_1 \lceqb \lcval_2\}
    & \stlcred {(\lclam x \lcexp_1) \lcexp_2} {\lcexp_2 \subst x {\lcexp_1}}
      \\\stlcred x \lclam{x_1 \dots x_n}{\trLhLc\lcexp} &\text{if } \lhrefl f \arrs \lhexp \in \lhdecls
    \\\stlcred{\lclet x \lcval \lcexp} {\lcexp \subst x \lcval}
    \\\stlcred{\lccase {\lcConstr_k \ap \seq{\lcval_k}} {(\case{\lcConstr_i \ap \seq{x_i}}{\lcexp_i})_{i \leq n}}}
    {\lcexp_k \seq{\subst{x_k}{\lcval_k}}}
    \end{array}\]
    \caption{Reduction relation for the simply-typed $\lambda$-calculus}
    \Description{Reduction relation for the simply-typed lambda-calculus}
    \label{fig:stlc-eval}
  \end{figure}
\end{definition}

In the translation from \coq refinements,
an existential hypothesis $\exists z, \relnm f \ap\seq\cqtm \ap z \land \cqprop$
(and variants) is translated back to the function $f$,
that is a function defined in the context of the original \lh function.
From the point of view of reduction, this is correct thanks to \cref{l:operationalmain}:
both the graph relation and the \haskell function have the same input-output relation.
We know that $z$ exists \textit{iff} the application of $f$ results in a value $v$,
a value for which we can instantiate $z$.
On the other hand, if a result does not exist, this means that the application of $f$
is stuck in the translation.
If this is the case, the computation will be stuck, as we only reduce lets when
they bind a value.
Thus, the refinement will not evaluate to $\True$.

We define two interpretations of types of $\er$ and of the \coq target
respectively to sets of values of $\stlc$.
The interpretations are parametrized by an assignement $\rho$ from variables to
values of $\stlc$ and by a $\er$ environment of declarations $\lhdecls$ of
elaborated datatypes and functions.
The values in the interpretation are closed with respect to $\lhdecls$.
We use the notation $\fnreplace \rho x \lcval$ to indicate the update of the
value of $x$ to $\lcval$ in $\rho$.

In the following, we use the name $\typEnv_\lhdecls$
for the environment obtained from the elaboration into $\lhdecls$:
$\typEnv_\lhdecls$ contains the elaborated specifications of the functions and datatypes in $\lhdecls$.

\begin{definition}[Interpretation of $\er$ types]
  \label{def:lhtpint}%
  \begin{align*}
    \lhtpint\lhtyp
        &\doteq \{\lclit \mid \lctyj\emptyset\lclit\lhtyp\}
        \qquad \text{$\lhtyp$ is a builtin type}
        \\ \lhtpint\lhTC
        &\doteq \{ \lcval = \lhConstr\ap\lcval_1 \dots \lcval_n \mid
        \\&\quad\exists(\lhref_i)_{i \leq n}, x, \lhtm.\;
          \lhdata\lhTC{\dots, \lhConstr :: \arrs[x_i][\lhref_i][i \leq n][\refT x \lhTC \lhtm], \dots} \in \lhdecls
     \\ &\quad\land (\forall i \in [1..n].\,\lcval_i \in
     \lhtpint{\lhref_i}[\fnreplace \rho {x_j}{\lcval_j}_{j < i}])
   \land \stlcredn {\lhtm\subst{x_i}{\lcval_i}_{i \leq n}\subst{x}{\lcval}\rho} {\true} \}
   \\ \lhtpint{\refT x \lhtyp \lhtm}
        &\doteq \{\lcval \mid \lcval \in \lhtpint\lhtyp
        \land \stlcredn{\lhtm \subst x \lcval \rho} {\true} \}
        \\ \lhtpint\arr
        &\doteq \{\lcval \mid \lctyj{\trLhLc{\typEnv_\lhdecls}}\lcval{\trLhLc{\lhref_x} \to \trLhLc\lhref}
     \\ &\quad\land \forall \lcval_x \in \lhtpint{\lhref_x}.\;
     \stlcredn {\lcval \ap \lcval_x} {\lcval'}
   \text{ s.t. } \lcval' \in \lhtpint{\lhref}[\fnreplace \rho {x}{\lcval_x}] \}
  \end{align*}
\end{definition}

\begin{definition}[Interpretation of \coq types]
  \label{def:cqtpint}%
  \begin{align*}
    \cqtpint{\refT x \cqtyp \cqprop}
        &\doteq \{\lcval \mid \lctyj{\trLhLc{\typEnv_\lhdecls}}\lcval{\trCqLc\cqtyp}
        \land \stlcredn{\trCqLc\cqprop \subst x \lcval \rho} {\true} \}
        \\ \cqtpint{\forall (x:\cqref_x), \cqref}
        &\doteq \{\lcval \mid \lctyj{\trLhLc{\typEnv_\lhdecls}}\lcval{\trCqLc{\cqref_x} \to \trCqLc\cqref}
     \\ &\quad\land \forall \lcval_x \in \cqtpint{\cqref_x}.
     \ \stlcredn{\lcval \ap \lcval_x} {\lcval'}
   \text{ s.t. } \lcval' \in \cqtpint{\cqref}[\fnreplace \rho x {\lcval_x}] \}
    \\ \cqtpint{\fClass (\forall (x_i:\cqref_i)_{i \leq n}, \cqref)}
        &\doteq \cqtpint{\forall (x_i:\cqref_i)_{i \leq n}, \cqref}
  \end{align*}
\end{definition}

\begin{lemma}
  \label{l:tolc_correspond}%
  For all refinement $\lhtm$, $\lhtm = \trLhLc\lhtm = \trCqLc{\utrLhtmCqprop\lhtm}$.
\end{lemma}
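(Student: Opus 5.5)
The plan is a structural induction on the refinement $\lhtm$, proving both equalities at once. The equality is meant up to the reductions of $\stlc$, i.e.\ the three terms reduce to the same value under any assignment. The first equality $\lhtm = \trLhLc\lhtm$ is nearly immediate. The translation $\trLhLc\cdot$ acts as the identity except that it erases scope/arity annotations on variables and the casts $\lhinj\lhtm\_$, $\lhsub\lhtm\_\_$ and $\lhproj\lhtm$. These erasures commute with the constructors of refinements, so a one-line induction suffices. The cast-erasing cases are the only ones to check, and they are exactly the defining clauses of $\trLhLc\cdot$.

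For the second equality, $\trLhLc\lhtm = \trCqLc{\utrLhtmCqprop\lhtm}$, I will first prove an auxiliary statement for the term-level translation. For every refinement $\lhtm$ and every map $\frApp$, $\trCqLc{\utrLhtmCq\lhtm}$ coincides with $\trLhLc\lhtm$ once each result variable $z = \frApp(g\ap\seq{\lhtm'})$ is replaced by the call $g\ap\seq{\trLhLc{\lhtm'}}$. This goes by induction on $\lhtm$ following \cref{def:utranstm}:
\begin{itemize}
\item variables and literals are fixed by both translations, and projections $\proj x$ are sent back to $x$ by $\trCqLc\cdot$;
\item constructors and $\cqeqb$ commute with both translations;
\item injections and subsumptions are dropped on both sides;
\item calls and operators become $\frApp(\lhtm)$, which is precisely the variable the substitution maps back to the call.
\end{itemize}

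Then I handle $\utrLhtmCqprop\lhtm$ following \cref{def:utrLhtmCqprop}. The connectives $\lhnegb,\land,\lor,\lhimp,\lhiff$ commute with $\trCqLc\cdot$ because these are builtin operators of $\stlc$, so the induction hypothesis closes those cases. For the call case and the default case, the proposition has the form $\hypFromAmQ[\exists]{\cdots} \land P$. Each hypothesis $\exists z, \relnm g\ap\seq\cqtm\ap z$ is translated by $\trCqLc\cdot$ to $\lclet z {g\ap\seq{\trCqLc\cqtm}} {\cdot}$. The key step is that these nested lets, taken in the scope order assumed in \cref{def:utrLhtmCqprop}, perform exactly the back-substitution of the auxiliary statement. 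Since $\stlc$ reduces a let only on a value, $\lclet z {e} {e'}$ and $e'\subst z e$ agree on values whenever $e$ normalizes. When $e$ does not normalize, both sides are stuck, so neither reduces to $\true$. The top-level relation call $\relnm f\ap\seq{\utrLhtmCq{\lhtm_i}}\ap\cqtrue$ is read in the same way as $f\ap\seq{\lhtm_i} = \true$, and $\IsTrue$ is erased.

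I expect the main obstacle to be this let-versus-substitution step. Two things need care. First, the order of hypotheses must respect nesting, so that inner calls are bound before outer ones use their variables. Second, the boolean-valued relation case has to be identified with the function evaluating to $\true$, which relies on the discussion preceding the lemma and on \cref{l:operationalmain}. My first attempt will be a small lemma: for a value-producing $e$, $\lclet z e {e'}$ and $e'\subst z e$ reduce to the same normal form, and both fail to reach a value otherwise. I will prove it by confluence of $\rew\stlcredName$ together with call-by-value on lets, and then apply it hypothesis by hypothesis in order.
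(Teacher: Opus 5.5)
Your proposal is correct and follows the paper's proof, with more care. The paper also gets the first equality by erasure, and the second from the facts that $\utransName$ abstracts outermost calls into their $\frApp$-variables and that $\trCqLc\cdot$ turns each call hypothesis into a let-binding that puts the call back; but it writes this chain as a literal equality and only computes the default clause of \cref{def:utrLhtmCqprop}, whereas you also cover the connective and call clauses and isolate the let-versus-substitution step.

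Two small adjustments. First, the stuck half of that step needs application in $\stlc$ to be call-by-value, as it is in $\er$. The $\beta$-rule of \cref{fig:stlc-eval} has no value restriction, so a callee that discards its argument would separate $\lclet z e {e'}$ from $e'\subst z e$; confluence plus call-by-value lets is not enough. Second, \cref{l:operationalmain} is not needed here, since $\trCqLc\cdot$ maps $\relnm f$ syntactically back to $f$. The bare atom $\relnm f \ap \seq\cqtm \ap \cqtrue$ only requires adding a clause reading it as $f \ap \seq{\trCqLc\cqtm} = \true$; the theorem merely justifies $\trCqLc\cdot$ as a reading of Rocq propositions.
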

\begin{proof}
  The first equality is by definition.
  For the second, notice that $\trCqLc{\utrLhtmCqprop\lhtm}$ does not contain
  a call to $\wffnm\lhTC$.
  It is easy to prove that $\trCqLc{\utrLhtmCqprop\lhtm}$ is equal to $\lhtm$
  where the outmost applications of functions, operators and top-level constants
  have been replaced by the corresponding fresh variable in $\frApp$.
  Let $(\cqprop_1, \dots, \cqprop_n) \doteq \hypFromAm{\lhtm}$ and for each $i$,
  \[\cqprop_i \doteq \exists z_i, (\relnm{f_i}/\relnmInst{f_i}/\relnmUInst{f_i})
  \ap \seq{\utrLhtmCq{\lhtm_i}} \ap z_i\]
  Then, we have
  \begin{align*}
        \trCqLc{\utrLhtmCqprop\lhtm}
      &= \trCqLc{\cqprop_1 \land \dots \land \cqprop_n \land \IsTrue(\utrLhtmCq\lhtm)}
      \\&= \trCqLc{\utrLhtmCq\lhtm}
      \subst{z_i}{f_i\ap\seq{\trCqLc{\utrLhtmCq{\lhtm_i}}}}_{i \leq n}
      \\&= \lhtm
      \qedhere
  \end{align*}
\end{proof}

\begin{lemma}
  \label{l:same-model-aux1}%
  Let $\lcval \in \lhtpint\lhtyp$.
  Then, $\lctyj{\trLhLc{\typEnv_\lhdecls}}\lcval\lhtyp$.
\end{lemma}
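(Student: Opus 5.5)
The plan is to prove the statement by case analysis on the base type $\lhtyp$, following the two clauses of \cref{def:lhtpint} that interpret base types: the builtin clause and the clause for a user-defined type constructor $\lhTC$. For a builtin $\lhtyp \in \{\Unit, \Bool, \Int\}$ the claim holds by definition. Membership $\lcval \in \lhtpint\lhtyp$ means $\lcval$ is a literal $\lclit$ with $\lctyj\emptyset\lclit\lhtyp$. Typing in $\stlc$ is preserved under weakening, so we obtain $\lctyj{\trLhLc{\typEnv_\lhdecls}}\lclit\lhtyp$. We only need to remark that weakening by datatype declarations and top-level bindings does not affect the typing of literals.

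For $\lhtyp = \lhTC$, membership in $\lhtpint\lhTC$ is defined in terms of itself through the argument types $\lhref_i$. So we argue by induction on the structure of the value $\lcval$, or equivalently on the derivation of membership, which is well-founded because values are finite constructor trees. We strengthen the statement so that it covers every first-order refinement type $\refT x {\lhtyp'} {\lhtm'}$ appearing as a constructor argument. By \cref{def:lhtpint}, membership in such a type implies membership in $\lhtpint{\lhtyp'}$. This strengthening is possible because rule \WFDTC forces all constructor arguments to be refinements of base types; no arrow case arises here.

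Given $\lcval = \lhConstr\ap\lcval_1\dots\lcval_n \in \lhtpint\lhTC$, we take the data declaration $\lhdata\lhTC{\dots,\lhConstr :: \arrs[x_i][\lhref_i][i\leq n][\refT x \lhTC \lhtm],\dots}\in\lhdecls$ given by the definition. Each $\lcval_i \in \lhtpint{\lhref_i}[\fnreplace\rho{x_j}{\lcval_j}_{j<i}]$ with $\lhref_i = \refT{x_i}{\lhtyp_i}{\lhtm_i}$. The base-type interpretation does not depend on $\rho$, so $\lcval_i \in \lhtpint{\lhtyp_i}$. The induction hypothesis then gives $\lctyj{\trLhLc{\typEnv_\lhdecls}}{\lcval_i}{\lhtyp_i}$.

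Next we use that elaboration via \WFDTC places $\lhdt\lhTC{\dots,\bind\lhConstr{\ann{\lhref}},\dots}$ into $\typEnv_\lhdecls$, where $\ann\lhref$ erases to the same underlying type. Therefore $\trLhLc{\typEnv_\lhdecls}$ contains $\lcdt\lhTC{\dots,\lhConstr:\lhtyp_1\to\dots\to\lhtyp_n\to\lhTC,\dots}$. The $\stlc$ constructor and application rules, applied $n$ times, yield $\lctyj{\trLhLc{\typEnv_\lhdecls}}\lcval\lhTC$.

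The main obstacle is this bookkeeping step: checking that $\trLhLc{\ann{\lhref}}$ coincides with the erasure of the declared type. Elaboration only adds projections and annotations inside refinements, and $\trLhLc\cdot$ on refinement types discards refinements entirely. So I expect a small auxiliary observation, namely that $\trLhLc{\ann\lhref} = \trLhLc\lhref$, proved by induction on the rules \WFTBase and \WFTFun.
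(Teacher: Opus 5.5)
Your proof is correct and has the same shape as the paper's. The builtin case is immediate from the definition. For $\lhTC$, you apply the induction hypothesis to the constructor arguments $\lcval_i$ and then use the constructor and application typing rules, relying on the fact that $\trLhLc{\cdot}$ sends the declared constructor type to $\lhtyp_1 \to \dots \to \lhtyp_n \to \lhTC$.

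The one substantive difference is the induction measure. The paper says ``structural induction on $\lhtyp$'', but base types have no structure to recurse on. For a recursive datatype such as \texttt{IList}, the argument type $\lhtyp_i$ of \texttt{Cons\_gt5} is $\lhTC$ itself, so the hypothesis is needed at the very type being proved. Your induction on the finite value $\lcval$ is the measure that actually makes the argument well-founded. Equivalently, you can induct on membership, reading the self-referential clause defining $\lhtpint\lhTC$ as an inductive definition.

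You also make explicit two steps the paper leaves silent:
\begin{itemize}
\item the inclusion $\lhtpint{\refT{x}{\lhtyp_i}{\lhtm_i}}[\rho_i] \subseteq \lhtpint{\lhtyp_i}[\rho_i]$;
\item the erasure equality $\trLhLc{\ann\lhref} = \trLhLc\lhref$.
\end{itemize}

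One small simplification: you need not assert that the base-type interpretation is independent of $\rho$, which would require an argument about the free variables of constructor refinements. Instead, quantify the induction hypothesis over all assignments; the conclusion does not mention $\rho$, so nothing is lost.
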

\begin{proof}
  By structural induction on $\lhtyp$.
  The case $\lhtyp$ is builtin is direct by definition of the models.
  If $\lhtyp = \lhTC$, then $\lcval = \lhConstr\ap\lcval_1\dots\lcval_n$,
  we have $\lhConstr :: (x_i:\refT{x_i}{\lhtyp_i}{\lhtm_i})_{i \leq n} \to \refT x \lhTC \lhtm$
  in $\lhTC$.
  Let $\rho_i \doteq \fnreplace \rho {x_j}{\lcval_j}_{j < i}$.
  By hypothesis, for each $i \leq n$, we have
  $\lcval_i \in \lhtpint{\lhref_i}[\rho_i]$ which implies by induction
  that $\lctyj{\trLhLc{\typEnv_\lhdecls}}{\lcval_i}{\lhtyp_i}$.
  Thus, by applications and since $\trLhLc{\typEnv_\lhdecls(\lhTC)(\lhConstr)}
  = \lhtyp_1 \to \dots \to \lhtyp_n \to \lhTC$,
  we derive $\lctyj{\trLhLc{\typEnv_\lhdecls}}{\lcval}{\lhTC}$.
\end{proof}

\samemodel*
\begin{proof}
  By induction on the structure of the type.
  First, let $\lhref \doteq \refT x \lhtyp {\lhtm_x}$.
  If $\lhtyp$ is a builtin, we are done because $\trCqLc\cqprop =
  \trCqLc{\utrLhtmCqprop{\lhtm_x}} = \lhtm_x$
  (by \cref{l:tolc_correspond})
  and $\lcval \in \lhtpint\lhbuiltin$ is clearly equivalent to
  $\lctyj{\trLhLc{\typEnv_\lhdecls}}\lcval\lhbuiltin$.

  Let $\lhtyp \doteq \lhTC$ and thus $\utrLhrefCq{\lhref}
  = \refT x {\unrefnm\lhTC} {\wffnm\lhTC \ap x \wedge \utrLhtmCqprop{\lhtm_x}}$.
  By \cref{l:tolc_correspond}, we have
  $\trCqLc{\utrLhtmCqprop{\lhtm_x}} = \lhtm_x$,
  so on both sides we have $\stlcredn {\lhtm_x \subst x \lcval} \true$.
  Then, we prove:
  \begin{itemize}
    \item $\lcval \in \lhtpint\lhref \implies \lcval \in
      \cqtpint{\utrLhrefCq\lhref}$: 
      By definition, $\lhval$ is of the shape $\lhConstr \ap \lcval_1 \dots \ap \lcval_n$.
      From \cref{l:same-model-aux1}, we have
      $\lctyj{\trLhLc{\typEnv_\lhdecls}}\lcval\lhTC$.
      We still need to prove \[
        (\forall i \in [1..n].\,\lhval_i \in \lhtpint{\lhref_i}[\rho_i])
        \land \stlcredn {\lhtm_\lhConstr \sigma \rho} \true
        \implies \stlcredn {\trCqLc{\wffnm\lhTC x} \subst x v \rho} \true
      \]
      Where for each $i$, $\rho_i = \fnreplace \rho {x_j}{\lcval_j}_{j < i}$,
      where $\sigma = \subst{x_i}{\lcval_i}_{i \leq n} \subst x \lcval$,
      and with $(\lhConstr) :: (x_i:\lhref_i)_{i \leq n} \to \refT
      {x_\lhConstr}{\lhTC}{\lhtm_\lhConstr}$ in $\lhTC$ in $\lhdecls$.

      Let $\lhref_i \doteq \refT{x_i'}{\lhtyp_i}{\lhtm_i}$ for each $i \leq n$
      (there are no higher-order arguments of data constructors in our
      system). By definition, \[
        \stlcredn{\trCqLc{\wffnm\lhTC \ap x}\subst x \lcval}
        {\trCqLc{\utrLhtmCqprop{\lhtm_\lhConstr}}\sigma}
        \wedge \bigwedge_{i \leq n} \trCqLc{\cqprop_i\subst{x_i'}{x_i}}\sigma
      \] where $\cqprop_i$ is such that
      $\utrLhrefCq{\lhref_i} = \refT{x_i'}{\trBuiltin{\lhtyp_i}}{\cqprop_i}$.
      By \cref{l:tolc_correspond},
      $\trCqLc{\utrLhtmCqprop{\lhtm_\lhConstr}} = \lhtm_\lhConstr$,
      and therefore $\stlcredn{\trCqLc{\utrLhtmCqprop{\lhtm_\lhConstr}}\sigma\rho} \true$.
      For each $i \leq n$, we obtain by induction hypothesis $\lcval_i \in
      \cqtpint{\lhref_i}$, and thus in particular
      $\stlcredn{\trCqLc{\cqprop_i\subst{x_i'}{x_i}}\rho_i} \true$
      and we conclude since $\fv(\lcval_i) \cap \dom(\sigma) = \dom(\rho_i)$.

    \item $\lcval \in \cqtpint{\utrLhrefCq\lhref} \implies \lcval \in \lhtpint\lhref$:
      By hypothesis, we have $\lctyj{\trCqLc{\typEnv_\lhdecls}}{\lcval}{\lcTC}$
      and $\stlcredn {\trCqLc{\wffnm\lcTC \ap x}\subst x \lcval \rho} \true$.
      Since we have a derivation for $\lcval$ of type $\lcTC$,
      $\lcval$ can only be of the shape $\lhConstr \ap \lcval_1 \dots \ap
      \lcval_n$, and there must be $(\lhref_i)_{i \leq n}$, $x$ and $\lhtm_\lhConstr$ such
      that $\typEnv(\lcTC)(\lhConstr) = \arrs[x_i][\lhref_i][i \leq n][\refT x
      \lcTC {\lhtm_\lhConstr}]$.
      We need to prove: (1) $\stlcredn {\lhtm_\lhConstr \sigma \rho} \true$
      and (2) $\forall i \in [1..n].\,\lhval_i \in
      \lhtpint{\lhref_i}[\rho_i]$.

      For (1), $\stlcredn{\trCqLc{\wffnm\lcTC \ap x}\subst x \lcval \rho} \true$
      implies as before $\stlcredn{\trCqLc{\utrLhtmCqprop{\lhtm_\lhConstr}}\sigma \rho =
      \lhtm_\lhConstr\sigma\rho} \true$ by \cref{l:tolc_correspond}.
      For (2), let $i \leq n$.
      Then $\lhref_i$ is a refinement of a base type by restriction of the
      system, let $\lhref_i \doteq \refT{x_i'}{\lhtyp_i}{\lhtm_i}$
      and $\utrLhrefCq{\lhref_i} \doteq
      \refT{x_i'}{\trBuiltin{\lhtyp_i}}{\cqprop_i}$.
      Again, $\stlcredn{\trCqLc{\wffnm\lcTC \ap x}\subst x \lcval \rho} \true$
      implies $\stlcredn{\trCqLc{\cqprop_i} \sigma \rho = \lhtm_\lhConstr\sigma\rho} \true$
      by \cref{l:same-model-aux1}.
      Moreover, since there is a derivation for $\lcval$, there is also a
      derivation for $\lcval_i$.
      Hence, $\lcval_i \in \cqtpint{\lhref_i}$, and thus by induction
      $\lcval_i \in \lhtpint{\lhref_i}$.
  \end{itemize}

  Now, consider $\lhref = \arr[x][\lhref_x][\lhref']$.
  If $\lhref_x$ is an arrow type, it is translated to an instance of $\fClass$.
  The translations of $\fClass$ and dependent arrows of \coq and of $\er$ in
  $\stlc$ are all equal.
  So we conclude by induction hypothesis, since the interpretations are built in the same way.
\end{proof}

  \section{Implementation of Packs}\label{app:pack}
  Our description of \fClass{}s in the main text was simplified to records containing
the refined definition, graph relation, the lemma relating them and functionhood
for the graph relation.
In reality, to be able to implement a generic \fClass{} typeclass for all function arities
and for higher-order arguments (becoming nested \fClass{}s), we encode function arguments as a list.
In this appendix, we give more details about the implementation of \fClass{}s,
and precise the translations.

\subsection{Definition of Function \fClass{}s}
\label{sec:pack}

The argument types of a function are stored in a dependent list,
the type $\ArgListT$ is the type of dependent lists of argument types.
The type $\ArgList$ represents a dependent list of terms of those types given by a term $\mathtt{argTps}:\ArgListT$.
Analogously, the types $\UArgListT$ and $\UArgList$ represent respectively lists
of unrefined types, and lists of arguments of a term
$\mathtt{uargTps}:\UArgListT$.
The relation $\projectsArgListT$ indicates whether an
$\UArgListT$ contains exactly the unrefined types of a given $\ArgListT$ and the
function $\prArgList$ extracts an $\UArgList$ out of an $\ArgList$.

We use these encodings of arrow types to define the typeclasses $\fClass$
(refined) and $\fuClass$ (unrefined) in \coq.
\begin{definition}[Function \fClass{}s]
  \label{def:pack}
  The definition of the typeclasses is given in \cref{fig:pack-classes}.
  In the parameters, \rocq{argTps} is the dependent list of the refined argument types,
  \rocq{uargsTps} the list of unrefined argument types, related to \rocq{argTps} by the projection \rocq{prj}.
  The unrefined return type is \texttt{T}, refined by the predicate \texttt{p}.
  \begin{listing}
  \begin{rocqcode}
Class !$\fClass$! (argTps:!$\ArgListT$!) {uargTps:!$\UArgListT$!} {prj:!$\projectsArgListT$! argTps uargTps}
(T: Type) (p: !$\forall$! (args: !$\ArgList$! argTps), T !$\to$! Prop) := {
  !$\defnmClass$! (args:!$\ArgList$! argTps): {v:T | p args v}; (* Refined definition *)
  !$\relnmClass$! (uargs: !$\UArgList$! uargTps) (v:T): Prop; (* Graph relation *)
  !$\unrefproofClass$! (args:!$\ArgList$! argTps) (v:T): !$\lfloor\defnmClass$! args!$\rfloor$! = v !$\leftrightarrow$! !$\relnmClass$! (!$\prArgList$! args uargTps prj) v;
  !$\functrelClass$! (uargs: !$\UArgList$! uargTps) v v': !$\relnmClass$! uargs v !$\to$! !$\relnmClass$! uargs v' !$\to$! v = v' }.

Class !$\fuClass$! (uargTps: !$\UArgListT$!) (T: Type) := {
  !$\relnmUClass$! (uargs: !$\UArgList$! uargTps): T !$\to$! Prop; (* Graph relation *)
  !$\functrelUClass$! (uargs:!$\UArgList$! uargTps) (v v':T): !$\relnmUClass$! uargs v !$\to$! !$\relnmUClass$! uargs v' !$\to$! v = v' }.
  \end{rocqcode}
  \caption{Typeclass for refined and unrefined function packs}
  \label{fig:pack-classes}
\end{listing}
\end{definition}
Because of the encoding, we use a tactic \texttt{buildPackG} to populate the
typeclass instances and tactics to project individual fields.
\tac{buildPackG} takes the refined definition, the unrefined graph relation,
and the two \fClass{} lemmas as parameters.
There is a similar tactic \texttt{buildUPackG} for an unrefined \fClass{}.
The field of the refined function is retrieved with the tactic \texttt{getPackF}
implicitely doing the conversion from the encoding.
The graph relation is retrieved with \texttt{getPackRel} from a refined \fClass{} and
\texttt{getUPackRel} from an unrefined \fClass{}.
There are also tactics to retrieve the lemmas.

The projection of a refined \fClass{} $x$ to an unrefined \fClass{} can still be defined as
$\uClassToClass x \doteq \cqrecuClass {\relnmInst x} {\functrelInst x}$,
despite the encoding.

\paragraph{Translations}

With the definition of \fClass{}s as using lists of arguments,
the translations of function types as \fClass{} are changed.
In the unrefined translation, we have the following,
using the list notation for an element of type $\UArgListT$.
\[
  \utrLhrefCq{\arrs} \doteq
  \fuClass \ap [\utrLhrefCq{\lhref_1}, \dots, \utrLhrefCq{\lhref_n}] \ap \utrLhrefCq\lhref
\]
In the refined translation, we have:
\[ \begin{split}
  &\trLhrefCq\arrs
\\&\quad\doteq\fClass\ [\bind{x_1}{\trLhrefCq{\lhref_1}},
  \dots, \trLhrefCq{\lhref_n}]
  \cqtyp \ap (\cqlamuntyped{\mathtt{args \ v}}{\ltacTac{\flattenP\ap p\ap args\ap v}})
\end{split}\]
Where $\refT x \cqtyp {p\ x_1\ \ldots\ x_n\ x} = \trLhrefCq\lhref$
and the first argument of $\fClass$ is a dependent list of type $\ArgList$.
We use a tactic \flattenP to transform the $n+1$-ary predicate $p$ into a predicate
that takes an argument list of length $n$ as first parameter
(if the predicate does not depend on its first $n$ parameters,
the translation avoids the tactic and directly inserts the $n+1$-ary predicate).

\begin{example}
  \label{ex:transtype}%
  \label{ex:pack_apply}%
  Consider the type $\lhref$ of the function \texttt{apply} from the overview (\cref{subsec:overview-ho}).
  \[
    \lhref \doteq
    \arr[f][(\arr[ys][\mathtt{IList}][\refT v \Int {0 \leq v}])][
    \arr[xs][\mathtt{IList}][\refT v \Int {v = f \ap xs}]]
  \]
  The type of the higher-order parameter $f$ is translated as:
  \[
    \cqref_f \doteq \fClass [\mathtt{IList}] \ap \cqInt \ap (\cqlamuntyped{\mathtt{args}\ v} {\cqmatch{\mathtt{args}}{[ys] => \relnm\lhleqZnm \ap 0 \ap v \ap \cqtrue}})
  \]
  And the full translation of $\lhref$ as the type of the \coq refined function is thus:
  \[
    \forall (f: \cqref_f) \ap (xs: \mathtt{IList}),
    \refT{v}{\cqInt}{\exists z, (\getPackRel f) \ap \proj{xs} \ap z \land v \cqeqp z}
  \]
\end{example}

\subsection{Subsumption for \fClass{}s}

In this section, we describe subsumption for \fClass{}s.
In order to uniformly translate subsumption casts of subset types and of
function \fClass{}s inside the list of arguments of a function, we define a typeclass
for subsumption casts.
This class has two fields: the cast result and a property that cast does not
affect projection (since subtyping only affects refinements).
\begin{rocqcode}
Class !$\gSub$! {U: Type} (S T : Type) {prS: !$\gProj$! S U} {prT: !$\gProj$! T U} (x: S) :=
  { !$\gSubVal$!: T; cast_pr: !$\proj\gSubVal$! = !$\proj{\texttt{x}}$! }.
\end{rocqcode}

The instance $\subRef$ for subset types is defined using the subsumption function given in \cref{sec:casts}.

To define an instance for \fClass{}s, we first define an inductive type to record the
requirements for a successful subsumption cast between \ArgList{}s:
\begin{rocqcode}
Inductive SubArgList: ArgListT !$\to$! ArgListT !$\to$! Type :=
| noArgsSub: SubArgList noArgsT noArgsT
| consArgsSub (U S T:Type) (prS: !$\gProj$! S U) (prT: !$\gProj$! T U)
    (argsS: !$\forall$!(x:S), ArgListT) (argsT: !$\forall$!(x:T), ArgListT) (cast: !$\forall$!(x:S), !$\gSub$! S T x) (x:S)
  : SubArgList (argsS x) (argsT (cast x).(!$\gSubVal$!)) !$\to$! SubArgList (S ::RT argsS) (T ::RT argsT).
\end{rocqcode}
Where \texttt{noArgsT} and \texttt{::RT} are the constructor for \texttt{ArgListT}.

The instance $\subPack$ for \fClass{}s has the signature below.
Its definition is technical but follows from the signature.
\begin{rocqcode}
Instance !$\subPack$! {argTps: ArgListT} {T: Type} {p: !$\forall$! x:ArgList argTps, T !$\to$! Prop}
  {argTps': ArgListT} {q: !$\forall$!(x:ArgList argTps'), T !$\to$! Prop} {uargTps: UArgListT}
  {z: projectsArgListT argTps uargTps} {z': projectsArgListT argTps' uargTps}
  (fpack: !$\fClass{}$! argTps uargTps z T p) (castArg: SubArgList argTps' argTps)
  (castT: !$\forall$! (args:ArgList argTps) (args':ArgList argTps'),
    !$\gSub$! {v:T | p args v} {v:T | q args' v} (fpack.(f) (subArgList castArg args)))
: !$\gSub$! (Pack argTps uargTps z T p) (Pack argTps' uargTps z' T q) fpack.
\end{rocqcode}

  \section{Detailed Explanations of Proof Automation}
  \subsection{Full Explanation of Lemmas Relating Functions with their Graph Relations}
\label{app:lemmas}

In \cref{sec:transfunc}, we described part of the translation of a
reflected function $f$: deriving the refined translation using tactics, the
unrefined graph relation, and a lemma \rocq!f__f_rel! to relate both.
On top of that, \toolname generates numerous other lemmas that are required both to
prove \rocq!f__f_rel! and for reasoning steps in automation tactics.

The lemmas generated are, in order:
\begin{enumerate}
  \item \label{functionhood}
    the \emph{functionhood lemma} \rocq!f_rel_funct! stating functionhood of
    the relation $\relnm{f}$, that is, the relation relates a single output to a set of inputs;
  \item \label{existence}
    the \emph{existence lemma} \rocq!f_rel_ex! stating that the projection
    of the result of applying the function $f$ lies in the image of its
    relation $\relnm f$;
  \item \label{rewriting}
    the theorem $\unrefproof{f}$ relating the function $f$ with its relation;
  \item the \emph{instantiating lemma} $\unrefmk{f}$ that takes terms $\seq{t_i}$ and proof terms stating that the $\seq{t_i}$ satisfy the refinements of the argument types of $f$ and constructs a terms $v$ satisfying $\relnm{f}\ \seq{t_i}\ v$; 
\end{enumerate}
Besides the these four lemmas, the translation also generates a series
of \emph{inversion lemmas} for the graph relation.
These lemmas state the equivalence of an application of $\relnm{f}$ with a different condition that is easier to prove. 
For example, the lemma for the successor case of the relation for
multiplication of Peano natural numbers states
\begin{rocqcode}
!$\forall$! (m n: Nat), (!$\exists$! v, mult_rel (Suc m) n v) !$\leftrightarrow$! (!$\exists$! v', mult_rel m n v' !$\land$! add_rel n v' v)
\end{rocqcode}
These lemmas are applied for backwards reasoning in the goal in the custom tactics, typically followed by instantiating the existential variables created by the lemmas.

The existence lemma is by far the hardest of these lemmas, for one simple reason:
it is the only lemma that actually has to reason about the function definition itself (all other lemmas are either trivial corollaries or avoid reasoning about the functions (possibly utilizing the rewriting lemma for this purpose)).

\subsection{Full Explanation of the Solver Tactic}

As mentioned already in \cref{subsubsec:oracle} the solver tactic is build on four levels of simplification and solving tactics of increasing strength.

We will now describe these simplification and goal solving tactics in more detail.

\paragraph{The \fastDone tactic}The \fastDone tactic combines several native \coq tactic to quickly solve a goal (or fail quickly).

\begin{listing}
\begin{rocqcode}
Ltac fast_done :=
solve [
  quick_wff_wit (* automatically apply the wff lemmas *)
  | unshelve eassumption
  | symmetry; unshelve eassumption
  | reflexivity (* works both for Leibniz equality and the translated equality *)
  | destruct_unit (* destruct unit terms, useful to prove they are all equal *)
  | lia (* solve integer arithmetic goals *)
  | congruence (* solve goals in theory of equality and uninterpreted functions *)
  | unshelve intuition (* simplify and apply tauto,
                        a decision procedure for constructive propositional logic *)
].
Ltac fast_solver := fast_done.
\end{rocqcode}
  \caption{Simplified definition of the \fastDone tactic. }
\end{listing}

\paragraph{The \quickSimpl and \quickSolve tactics}
\begin{listing}
    \begin{rocqcode}
Ltac quick_simpl := try quick_cleanup; repeat shape_based.
    \end{rocqcode}
    \begin{rocqcode}
Ltac !$\simplLoop$! := first [
  fast_solver
  | f_equal_ind
  | progress shape_based
  | congruence
  | progress subst
  | progress (simpl in *)
  | progress autorewrite with lia_rewrites fix_notation_hints in *
  | left; simpl_loop
  | right; simpl_loop
].
    \end{rocqcode}
    \begin{rocqcode}
Ltac !$\quickSolve$! := solve [ quick_wff_wit | quick_simpl; repeat !$\simplLoop$! ].
    \end{rocqcode}
  \caption{Simplified definition of quicksolve tactic. }
  \Description{Simplified definition of quicksolve tactic. }
  \label{fig:quicksolve}
\end{listing}
The \quickSimplCleanup tactic contains several (if-then-else related) goal rewrite steps, autounfold and autorewrites with some hint databases. In particular, it will simplify projections of casts and fully applied refined binary operators.
The \shapeBased tactic is defined as a big match (>100 branches) on the proof state doing a single simple forward reasoning step in a hypothesis or goal. It only makes reasoning steps that cannot interfere with our other automation tactics (i.e. they make proving the goal strictly easier) and they are all very simple and fast. Many of the branches simply rewrite with a number of lemmas regarding booleans, equality and boolean connectives. 
This tactic also serves as the main "plumbing" (along with a number of rewrite/unfold/resolve hints added to a number of hint databases which are applied automatically in many of the automation tactics described in this section) for the overall automation machinery bringing the proof state into a form the other automation tactics can (more easily) handle.
The \fEqualInd tactic applies the $f\_equal$ tactic to goals that are equalities between applications of the same constructor of an inductive data type. It is useful in various other automation tactics involving backwards reasoning, in particular for reasoning about the automatically derived equality operators for inductive data types. 
The \simplLoop tactic is the backbone of the \quickSolve tactic. It corresponds to a single goal solving or simplification step.
The \quickSolve tactic is then defined as (basically) running first \quickSimpl and then running \simplLoop until the goal is solved. We explicitly try solving the goal with \easyWit first, to produce faster and simpler proofs for simple well-formedness goals.

\paragraph{The \cleanup and \finish tactics}
\begin{listing}
    \begin{rocqcode}
Ltac quick_simple_cleanup_steps :=
  try unify_vars; try (specialize_hyps; try unify_vars).

Ltac initial_simple_cleanup_steps :=
  quick_simpl; 
  repeat progress autounfold with lia_unfold in *;
  simpl_proj;
  try !$\unpack$!;
  try (axiomatize_ih_specializations; quick_simpl);
  try (axiomatize_terms; quick_simpl);
  quick_simple_cleanup_steps.

Ltac inversion_cleanup :=
  (* forward reasoning *)
  try quick_simple_cleanup_steps;
  (* backwards reasoning *)
  (* backwards reasoning in hypotheses *)
  try (invert_axiomatizations; quick_simple_cleanup_steps;
       try instantiate_hyps; try specialize_hyps);
  (* backwards reasoning in goal *)
  try (autorewrite with f_rel_back; repeat instExistGoal);
  quick_simple_cleanup_steps.

Ltac cleanup_recurse :=
  quick_simpl;
  try initial_simple_cleanup_steps;
  try inversion_cleanup

Ltac cleanup := 
  quick_simpl;
  saturate_context;
  try cleanup_recurse;
  saturate_context;
  repeat cleanup_recurse.
    \end{rocqcode}
    \begin{rocqcode}
Ltac finish := solve [
  quick_wff_wit
  | first [quicksolve | repeat split; quicksolve] | timeout 5 verit ].
    \end{rocqcode}
    \begin{rocqcode}
Ltac oracle := solve [ quick_wff_wit | quicksolve | cleanup; finish].
    \end{rocqcode}
  \caption{Simplified definition of the \oracle tactic}
  \Description{Simplified definition of the “oracle” tactic}
\end{listing}
The \unpack tactic destructs context variables of type \fClass \_ or \fuClass \_.

The \axIhAppls tactic replaces the projections of the 
applications of induction hypothesis with free variables of same type and adds a witnesses of the refinement of that application of the induction hypothesis to the context. It also unifies applications of an induction hypothesis that differ only in witnesses (using proof irrelevance). This helps remove complicated proof terms from the proof state and unify applications of induction hypothesis that differ only in refinement witnesses. 

The \axTerms tactic replaces the projections of full applications of reflected functions with variables axiomatized as result of the corresponding relations (and similarly for projections of applications of functions extracted from \fClass{}s). A refinement witness for the application is also added to the context. 

The \unifyVars tactic "unifies" variables axiomatized as the results of the same applications of graph relation using the functionhood lemma of the relation. It also removes duplicated assumptions from the context and substitutes variables by equalities in the context if possible.

The \specHyps tactic specialize uniquely axiomatized antecedents in hypothesis (result variables of graph relations that are given elsewhere in the context). It also tries to specialize hypothesis with proof terms generated by the \easyWit or \quickSolve tactics.

The \invAxs tactic inverts the hypothesis axiomatizing variables using the relations, whenever this inversion results in no additional subgoals.

The \instHyps tactic creates terms for uniquely axiomatized variables in hypotheses satisfying the axiomatization.
For this, we transitively traverse the relations and create the refined term (uniquely) axiomatized as that variable, then we replace the projection of that refined term with a variable axiomatized using a graph relation. Finally we specialize the hypothesis with it that variable and with it's defining axiom.

\tac{autorewrite with f\_rel\_back} rewrites goals using the backwards reasoning lemmas. 

The \instExistGoal tactic creates (if not already present in the context) terms to specialize the free (existentially quantified) variables in the goal with, then specializes the goal with them (using the \tac{exists} tactic).
Due to the importance of instantiating existentials, this tactic might do substantial preprocessing (including \saturateCtx).
There is even a version of the tactic included in \liaPreprocessor that may use
the \oracle tactic for solving proof obligations when synthesizing terms with
which to instantiate the existentials.

The \saturateCtx tactic generates refinement witnesses for the variables axiomatized using graph relations in the context.
This requires reconstructing refined terms for result variables of graph relations (similar to what is done in the \instHyps tactic).
As a consequence, it is frequently very slow, so in the actual tactic definitions, it is only run after a full (failed) run without it. As a consequence, in the few cases where context saturation is actually necessary to prove a goal, the verification time for the goal will be quite substantial. 

\paragraph{The \solver tactic}
\begin{listing}
    \begin{rocqcode}
Ltac preprocessor :=
  repeat progress (try cleanup; try lia_preprocessor).
Ltac solver := first [
  quick_wff_wit
  | lia_preprocessor; lia
  | preprocessor; first [lia | oracle | shelve]
  ].
    \end{rocqcode}
    \begin{rocqcode}
Ltac solver := first [
  quick_wff_wit
  | quicksolve
  | quick_simpl; repeat progress first [simpl in *; cleanup | lia_preprocessor];
  first [quicksolve | solve [unshelve verit] | shelve]
].
    \end{rocqcode}
  \Description{Simplified explanation of “solver” tactic. }
  \caption{Simplified explanation of solver tactic. }
\end{listing}

The \liaPreprocessor tactic autounfolds and autorewrites with a number of lemmas to simplify integer reasoning goals to something the \tac{lia} tactic can hopefully handle. It also tries to decide which branch of a disjunction in the goal to take using the \oracle tactic to prove subgoals and it tries to specialize hypotheses with proof terms using the oracle tactic. 
The \solver tactic then improves upon the \oracle tactic merely by aditionally preprocessing using the \liaPreprocessor tactic and by shelving subgoals it cannot solve. 

\fi

\end{document}

\typeout{get arXiv to do 4 passes: Label(s) may have changed. Rerun}